\newcommand{\reals}{\mathbb{R}}
\newcommand{\nats}{\mathbb{N}}
\newcommand{\Prob}{\mathbb{P}}
\newcommand{\E}{\mathbb{E}}
\newcommand{\Var}{\mathbb{V}\text{ar}}
\newcommand{\Cov}{\mathbb{C}\text{ov}}
\newcommand{\Norm}{\mathcal{N}}
\newcommand{\ind}{\mathds{1}}
\newcommand{\natsstar}{\mathbb{N}^*}
\newcommand{\textopt}{\text{opt}}
\DeclareMathOperator*{\argmax}{\arg\!\max}
\renewcommand{\geq}{\geqslant}
\renewcommand{\leq}{\leqslant}
\theoremstyle{thmstyleone} 
\newtheorem{theorem}{Theorem} 
\newtheorem{lemma}{Lemma}
\newtheorem{proposition}{Proposition}  
\newtheorem{corollary}{Corollary}
\theoremstyle{thmstyletwo} 
\newtheorem{remark}{Remark}
\theoremstyle{thmstylethree} 
\newtheorem{definition}{Definition} 
\newtheorem{hypothesis}{Assumption} 
\begin{document}

\title[Article Title]{Adaptive Pseudo-Marginal Algorithm}

\author*[1]{\fnm{Sarra} \sur{Abaoubida}}\email{sarra.abaoubida@umontreal.ca}

\author[1]{\fnm{Mylène} \sur{Bédard}}

\author[1]{\fnm{Florian} \sur{Maire}}

\affil*[1]{\orgname{Université de Montréal}, \orgdiv{Département de mathématiques et statistique}} 

\abstract{The Pseudo-Marginal (PM) algorithm is a popular Markov chain Monte Carlo (MCMC) method used to sample from a target distribution when its density is inaccessible, but can be estimated with a non-negative unbiased estimator. Its performance depends on a key parameter, \textit{N}, the number of iterations (or particles) used to approximate the target density. Larger values of \textit{N} yield more accurate estimates but at increased running time. Previous studies has provided guidelines for selecting an optimal value of \textit{N} to balance this tradeoff. However, this approach involves multiple steps and manual adjustments.
To overcome these limitations, we introduce an adaptive version of the PM algorithm, where \textit{N} is automatically adjusted during the iterative process toward its optimal value, thus eliminating the need for manual intervention. This algorithm ensures convergence under certain conditions. On two examples, including a real data problem on pulmonary infection in preschool children, the proposed algorithm compares favorably to the existing approach.}

\keywords{Adaptive Markov chain Monte Carlo, Intractable likelihood, Ergodicity}

\maketitle

\section{Introduction}
\label{sec:intro}
In a Bayesian context, we consider a model where the likelihood function of the observations $y \in \mathcal{Y}^T$ is denoted by $p_T(y \lvert \theta)$, and the prior distribution of the parameter \mbox{$\theta\in\Theta\subseteq\mathbb{R}^d
$} has density $p(\theta)$. Consequently, the posterior density is \mbox{$\pi(\theta) \propto p_T(y \lvert \theta) p(\theta)$}. In this context, the fundamental statistical task reduces to computing posterior expectations of the form $ \pi(f) := \E[f(\theta)|y] = \int f(\theta) \pi(\mathrm{d}\theta),$ for measurable functions $f$ satisfying $\pi(f^2) < \infty$. An approximation of this posterior expectation is easily accessible from a MCMC sampler, via the sample average $\widehat{\pi}_L(f) = L^{-1}\sum_{\ell=0}^{L-1} f(\theta_\ell)$ obtained from the Markov process $\{\theta_\ell, \ell\geq 0\}$. A widely used MCMC method is the Metropolis-Hastings (MH) algorithm, which generates a Markov chain with stationary distribution $\pi(\theta)$. At each step, given the current state $\theta$, a candidate $\vartheta$ is proposed, and then accepted with a probability that typically depends on the likelihood ratio $p_T(y \lvert \vartheta)/p_T(y \lvert \theta)$. However, in many statistical models, direct computation of the likelihood \( p_T(y|\theta) \) is challenging, rendering the standard MH impractical.

The PM algorithm, introduced in \cite{andrieu_convergence_2015} and formally described in Section~\ref{sec:pm}, overcomes this by replacing the intractable likelihood ratio $p_T(y \lvert \vartheta)/p_T(y \lvert \theta)$ with an unbiased non-negative estimator. Standard approaches like importance sampling (using weighted observations) and particle filtering (for state-space models) generate such estimators by averaging over $N$ realizations. Substituting the true likelihood with an estimate in the MH introduces a trade-off: as \( N \) increases, the asymptotic accuracy of a PM chain improves, but at a higher computational cost (runtime). Therefore, a key challenge is balancing computational efficiency and statistical accuracy. The weak convergence properties of the PM algorithm were explored in \cite{schmon_large-sample_2021}, providing guidelines for choosing the optimal number of Monte Carlo samples/particles \( N \) to use in the PM, based on the dimension \( d \) of the parameter $\theta$. However, this approach requires manual tuning and sequential execution of multiple steps. We summarize this non adaptive method in Section~\ref{sec:review}.

In Section~\ref{sec:APM}, we propose an efficient alternative by letting $N$ evolve over the iterations of the PM algorithm, allowing it to approach its optimal value automatically. This scheme follows the same steps as in \cite{schmon_large-sample_2021} but integrates them into a single process, leading to the Adaptive Pseudo-Marginal (APM) algorithm. By doing so, we naturally lose the Markov property of the process because the adjustment of $N$ depends on the entire history of the chain. It is thus important to verify that our adapted sampler converges to the target distribution. As specified in \cite{roberts_coupling_2007}, two conditions ensure the convergence of an adaptive algorithm to the target distribution: the \textit{diminishing adaptation} and the \textit{containment condition}.

The APM we propose satisfies the \textit{diminishing adaptation} by construction, allowing $N$ to adapt with a probability that decreases gradually to zero over the iterations. Furthermore, in Section~\ref{sec:APM}, we establish theoretical conditions under which the APM satisfies the \textit{containment condition}. These conditions rely on the polynomial ergodicity of the PM algorithm and convex ordering of the likelihood estimations. We show that the APM marginally converges to the posterior distribution in terms of total variation distance under these assumptions.

In Section~\ref{sec:syntheticex}, we rigorously compare the performance of the APM against the non adaptive approach in \cite{schmon_large-sample_2021}, using a synthetic example with simulated data. We prove that the model specified in this example satisfies the theoretical conditions established in Section~\ref{sec:APM} for the convergence of the APM. Our simulations show that the computational cost of the proposed adaptive method is significantly lower than that of its non adaptive counterpart, highlighting its practical advantages. Finally, in Section~\ref{sec:realex}, the performance of the APM is evaluated on real data from \cite{zeger_generalized_1991} and compared to the non adaptive approach. Empirically, both methods yield identical posterior means and variances with comparable runtimes. However, the APM remains advantageous by allowing dynamic adaptation of \( N \), eliminating the need for manual tuning.

\section{The Pseudo-Marginal Algorithm}
\label{sec:pm}
Prior to formalizing the PM framework, we begin by recalling the standard MH algorithm, which generates a Markov chain with stationary distribution $\pi(\theta)$. The transition kernel of the MH takes the form:
\begin{align*}
P(\theta, d\vartheta) = q(\vartheta|\theta)\min\left\{1,r(\theta,\vartheta)\right\}d\vartheta + \left(1-\int_\Theta q(\vartheta|\theta)\min\left\{1,r(\theta,\vartheta)\right\}d\vartheta\right)\delta_\theta(d\vartheta),
\end{align*}
where $\vartheta \mapsto q(\vartheta|\theta)$ is the proposal density and where the acceptance ratio, $r(\theta,\vartheta)={p_T(y|\vartheta)p(\vartheta)q(\theta|\vartheta)}/\{p_T(y|\theta)p(\theta)q(\vartheta|\theta)\} \label{MHratio}$, depends crucially on the likelihood evaluation. 

As mentioned in Section~\ref{sec:intro}, if \(p_T(y|\theta) \) is intractable, the MH algorithm becomes infeasible. A powerful approach addressing this problem is the PM algorithm where the key innovation is to replace the exact likelihood with an unbiased, non-negative estimator $\widehat{p}_{N,T}(y|\theta,U)$ in the MH acceptance ratio, with $U|\theta \sim m_{N,\theta}(\cdot)$ the auxiliary variables used to compute the likelihood estimator (see \cite{lin_noisy_2000, beaumont_estimation_2003, andrieu_pseudo-marginal_2009}). 

To understand the theoretical foundation of the PM, consider the extended target distribution:
\begin{align}
\bar{\pi}_N(\theta,u) = \pi(\theta)m_{N,\theta}(u)\frac{\widehat{p}_{N,T}(y|\theta,u)}{p_T(y|\theta)}.
\label{pmpost}
\end{align}
Since $\widehat{p}_{N,T}(y \lvert \theta, U)$ is unbiased, this extended target admits $\pi$ as its marginal distribution by construction. The PM algorithm is a MH algorithm targeting (\ref{pmpost}) with proposal density 
$(\vartheta , v) \mapsto q(\vartheta\lvert\theta) m_{N,\vartheta}(v)$. 
The acceptance probability for a candidate $(\vartheta, v)$ is thus given by
\begin{align}
    \alpha_N\left(\theta,u ;  \vartheta,v\right) 
    &=\min \left\{1, \frac{\widehat{p}_{N,T}(y \lvert \vartheta, v) p(\vartheta) q(\theta \lvert \vartheta)}{\widehat{p}_{N,T}(y \lvert \theta, u) p(\theta) q(\vartheta \lvert\theta)}\right\}.
    \label{alphaN}
\end{align}
The PM acceptance ratio does not depend on the intractable true likelihood, making $\alpha_N$ computable. This shows that while PM algorithms are approximations of \(P\), they are exact in the sense that, at equilibrium, they sample marginally from the desired distribution \( \pi \). The transition kernel of the PM algorithm is given by
\begin{align}
    P_N(\theta, u; d\vartheta, dv) &= q(\vartheta \lvert \theta) m_{N,\vartheta}(v) \alpha_N(\theta, u; \vartheta, v) d\vartheta dv + \varrho_{N}(\theta, u) \delta_{(\theta, u)}(d\vartheta, dv),
\label{equa4}
\end{align}
where 
\begin{equation}
    \varrho_{N}(\theta, u) =  1 - \int
        _{\Theta \times \mathcal{U}} q(\vartheta \lvert \theta)m_{N,\vartheta}(v) \alpha_N(\theta, u; \vartheta, v) d\vartheta dv,
        \label{varrho}
\end{equation}
and $\delta_{(\theta, u)}$ is the Dirac measure at $(\theta, u)$. 

Following the work of \cite{andrieu_pseudo-marginal_2009, pitt_properties_2012, andrieu_convergence_2015,doucet_efficient_2015, sherlock_efficiency_2015, andrieu_establishing_2016, schmon_large-sample_2021}, we turn to a more abstract reparameterization of the PM algorithm. We introduce the weight $W_N(\theta) = \widehat{p}_{N,T}(y|\theta,U)/p_T(y|\theta)$, which we view as a multiplicative perturbation, or noise, of the true likelihood $p_T(y|\theta)$ since  $\widehat{p}_{N,T}(y|\theta, U)=p_T(y|\theta) W_N(\theta)$. Now, let $\left\{\mathcal{Q}_{N,\theta}\right\}_{\theta \in \Theta}$ be a family of probability measures on the positive reals $\left(\mathbb{R}_{+}^{*}, \mathcal{B}\left(\mathbb{R}_{+}^{*}\right)\right)$ indexed by $\theta \in \Theta$ and such that $\E[W_N(\theta)]=\int w \mathcal{Q}_{N,\theta}(\mathrm{d} w)=1$ for any $\theta \in \Theta$. One can check that the Markov transition probability ${P_N}$ of the PM approximation of the marginal kernel $P$ is a MH algorithm targeting 
$$\bar{\pi}_N(\theta, \mathrm{d} w)= \pi(\theta) \mathcal{Q}_{N,\theta}(\mathrm{d} w) w.$$
For any $(\theta, w) \in \Theta \times \mathcal{W}$ (with $\mathcal{W}:=(0, \infty))$, this kernel can be expressed as
\begin{align}
{P_N}(\theta, w ; \mathrm{d} \vartheta,\mathrm{d} z)= & q(\vartheta|\theta) \min \left\{1, r(\theta, \vartheta) \frac{z}{w}\right\} \mathcal{Q}_{N,\vartheta}(\mathrm{d} z) \mathrm{d}\vartheta + {\rho_N}(\theta, w)\delta_{\theta, w}(\mathrm{d} \vartheta ,\mathrm{d} z) ,
\label{equa5}
\end{align}
where the rejection probability is
\begin{align}
     \varrho_{N}(\theta, w) = 1- \int
    _{\Theta \times \mathcal{W}} q(\vartheta|\theta) \min\left\{1,  r(\theta, \vartheta) \frac{z}{w}\right\}
    \mathcal{Q}_{N,\vartheta}(\mathrm{d}z) \mathrm{d}\vartheta.
\label{varrho_repar}
\end{align}

The acceptance ratio in the PM algorithm can be viewed as a multiplicative perturbation of the standard MH acceptance ratio. Although the weights $W_N(\theta)$ fundamentally influence the transition kernel, they are not explicitly computed in practical implementations. This is because the PM algorithm operates directly with the likelihood estimators $\widehat{p}_{N,T}(y | \theta, U)$ in the acceptance ratio \eqref{alphaN}, completely bypassing evaluation of the intractable true likelihood $p_T(y | \theta)$. In their analysis, several authors \cite{schmon_large-sample_2021, pitt_properties_2012, doucet_efficient_2015, sherlock_efficiency_2015, deligiannidis_correlated_2018} refer to the quantity $\log\{W_N(\theta)\}$ as the additive noise or the log-likelihood error, and derive several theoretical results by reparameterizing the PM algorithm in terms of this noise.

Having formally introduced the PM algorithm, we now review previous articles examining the critical role of the Monte Carlo parameter (or number of particles) \( N \) in governing its performance. In conventional Monte Carlo methods \cite{robert_monte_2004}, increasing \( N \) improves estimator accuracy at the expense of computational effort, creating a fundamental trade-off between statistical precision and runtime, a trade-off that also applies to the PM context. To properly evaluate this trade-off and compare the PM algorithm variants, we discuss in Section~\ref{sec:review} an appropriate efficiency measure, namely the computing time (see \cite{pitt_properties_2012,doucet_efficient_2015,schmon_large-sample_2021}). We then present the non adaptive method of \cite{schmon_large-sample_2021}, which implements the PM with the optimal value of \(N\).

\section{Computing Time Optimization in Pseudo-Marginal Algorithms}
\label{sec:review}
Following \cite{pitt_properties_2012, doucet_efficient_2015, schmon_large-sample_2021}, we adopt the computing time (CT) measure to balance statistical precision and computational cost (runtime).  Formally defined in (\ref{CT}), this metric quantifies the trade-off between asymptotic variance (\ref{asympvar}) and computational cost, ensuring efficient resource allocation when minimized.

As discussed in Section~\ref{sec:intro}, Bayesian inference often requires estimating an expectation $\pi(f)$ using an empirical average $\widehat{\pi}_L(f)$. The latter is computed from a Markov chain $\{\theta_\ell, \ell \geq 0\}$ generated by a transition kernel $Q$. The efficiency of this approximation is closely linked to the mixing properties of $Q$, with well-designed algorithms leading to estimators that feature low asymptotic variances. Under mild conditions (see \cite{haggstrom_variance_2007}) and for $f \in L^2(\pi)$, the asymptotic variance of an MCMC estimator is finite and given by
\begin{align}
    \overline{\Var}(f, Q) &= \lim_{L \rightarrow \infty} \frac{1}{L} \E \left[\left(\sum_{\ell=0}^{L-1} f(\theta_\ell) - \pi(f)\right)^2 \right] 
    = \Var(f(\theta_0)) \operatorname{IF}(f, Q),
    \label{asympvar}
\end{align}
where the inefficiency factor 
\begin{equation}
    \operatorname{IF}(f, Q) = 1 + 2\sum_{\ell=1}^{\infty}\frac{\Cov(f(\theta_0),f(\theta_\ell))}{\Var(f(\theta_0))} < \infty
    \label{inefficiency}
\end{equation}
is a measure of how much the estimator is penalized by the correlation induced by the Markov chain.

Theorem 10 in \cite{andrieu_establishing_2016} establishes that for likelihood estimators obtained via importance sampling, the asymptotic variance of the PM algorithm decreases as the number of Monte Carlo samples \(N\) increases. However, larger values of \(N\) incur higher computational costs. The goal is thus to find the optimal value of $N$, that is a value that balances the computational cost and the asymptotic variance of the estimator $\widehat{\pi}_L(f)$. 

A solution to this optimization problem was first proposed in \cite{pitt_properties_2012} and further refined in \cite{doucet_efficient_2015, sherlock_efficiency_2015, schmon_large-sample_2021}, where results are derived under two key assumptions:
\begin{enumerate}[(i)]
    \item The additive noise satisfies $\omega_N(\theta) := \log\{W_N(\theta)\} \sim \mathcal{N}(-\sigma^2/2, \sigma^2)$ for all $\theta \in \Theta$, with $\sigma^2$ constant with respect to $\theta$.
    \label{hypi}
    \item The variance of the additive noise scales as $\sigma^2 \propto 1/N$.
    \label{hypii}
\end{enumerate}
\begin{remark}
    Assumption~\eqref{hypii} was demonstrated by \cite{ deligiannidis_correlated_2018, berard_lognormal_2014} in the large sample regime ($T \to \infty$).
\end{remark}

Under assumptions (\ref{hypi}) and (\ref{hypii}), \cite{pitt_properties_2012} proposed optimizing the additive noise standard deviation \(\sigma\) by minimizing the computing time (CT) of the PM chain. This quantity is defined, for functions $\bar f \in L^2(\bar{\pi}_N)$, as:
\begin{equation}
    \text{CT}(\bar f,P_\sigma) :=  \frac{\operatorname{IF}(\bar f,P_\sigma)}{\sigma^2} \propto \frac{\overline{\Var}(\bar f,P_\sigma)}{\sigma^2},
    \label{CT}
\end{equation}
where the transition kernel $P_\sigma$ of the PM chain becomes, under Assumption \eqref{hypi},
\begin{align}
   {P_\sigma}\left(\theta,\omega ; \mathrm{d} \vartheta,\mathrm{d}\zeta\right)&= q(\vartheta|\theta) \varphi(\zeta ; -\sigma^2/2, \sigma^2) \min \left\{1, r(\theta, \vartheta) \exp\{\zeta-\omega\}\right\}  \mathrm{d}\vartheta \mathrm{d}\zeta \nonumber\\*
   &\quad +  {\rho_\sigma}(\theta,\omega) \delta_{\theta, \omega}(\mathrm{d} \vartheta ,\mathrm{d}\zeta),
   \label{limitkernel}
\end{align}
with \({\rho_\sigma}(\theta,\omega)\) representing the rejection probability and $\varphi(x; \mu, \sigma^2)$ the density of a $\Norm(\mu,\sigma^2)$ evaluated at $x$. The value of CT is thus affected by the interplay between computational cost, which scales as $1/\sigma^2 \propto N$ (the number of particles used to compute the unbiased estimator), and the inefficiency factor $\operatorname{IF}(\bar{f},P_\sigma)$.

Under this framework, the authors in \cite{schmon_large-sample_2021} generalized the work of \cite{doucet_efficient_2015, sherlock_efficiency_2015, pitt_properties_2012} and obtained a weak convergence result for the PM chain \(P_N\) as the dataset size \(T \to \infty\). Specifically, under appropriate regularity conditions, they demonstrated that a properly rescaled PM chain converges weakly to a limiting PM chain targeting a Normal distribution. In this limiting regime, the transition kernel $P_\sigma$ satisfies \eqref{limitkernel} and the additive noise follows a Normal distribution with constant mean and variance, as assumed in (\ref{hypi}); we refer the reader to Section 3 in \cite{schmon_large-sample_2021} for a precise definition of this weak convergence result.

Based on the limiting chain $P_\sigma$ in~\eqref{limitkernel} and using the Normal random walk proposal density
\[
q(\vartheta|\theta) = \varphi\left(\vartheta; \theta, \frac{l^2}{d} I_d\right),
\]
where the scaling $l^2/d$ follows the framework in~\cite{sherlock_efficiency_2015}, the authors in~\cite{schmon_large-sample_2021} obtained the optimal values $(l_{\text{opt}}, \sigma_{\text{opt}})$ that minimize the computing time $\mathrm{CT}(\bar f, P_\sigma)$ defined in~\eqref{CT}. To obtain these values, they considered optimizing under the function $\bar f(\theta, \omega_N(\theta)) = \theta_1$, where $\theta_1$ denotes the first coordinate of $\theta$. For each dimension $d$, they ran multiple chains over a finely spaced grid of candidate values for $(l, \sigma)$, computed the corresponding computing times for each pair, and then selected the pair $(l_{\text{opt}}, \sigma_{\text{opt}})$ that yielded the minimum computing time. The resulting optimal tuning parameters were reported for various values of $d$ (see Table~1 in~\cite{schmon_large-sample_2021}).

Following these observations, \cite{schmon_large-sample_2021} proposed a method of implementation the PM algorithm with the optimal number of particles, $N_{\text{opt}}$. After using Table 1 in \cite{schmon_large-sample_2021} to identify the estimated optimal scaling parameter $l_\text{opt}$ and additive noise standard deviation $\sigma_\text{opt}$ as a function of the parameter dimension $d$, this method can be summarized as follows:
\begin{enumerate}
    \item Run a preliminary PM algorithm with some initial $N_1$ to obtain $\hat \theta_{N_1}$ and $\widehat \Sigma_{N_1}$, the posterior mean and covariance estimates of $\theta$. At this stage, the proposal density is a Normal random walk with covariance ${l_{\text{opt}}^2 \Sigma_p}/{d}$, where $\Sigma_p$ is some positive-definite matrix.
    \label{step1}

    \item Let $\sigma_N^2(\hat \theta_{N_1})$ denote the variance of the additive noise $\omega_N(\hat \theta_{N_1})$, defined as:
    \begin{align}
        \sigma_N^2(\hat \theta_{N_1}) &= \Var(\omega_N(\hat \theta_{N_1})|\hat \theta_{N_1})\nonumber \\
        &= \Var\big(\log\{\widehat p_N(y |\hat \theta_{N_1}, \widehat U)\}- \log\{p_N(y |\hat \theta_{N_1})\}\big|\hat \theta_{N_1}\big) \nonumber\\
        &= \Var\big(\log\{\widehat p_N(y |\hat \theta_{N_1}, \widehat U)\}|\hat \theta_{N_1}\big),
        \label{estimationsd}
    \end{align}
    where the auxiliary variables $\widehat U|\hat\theta_{N_1} \sim m_{N, \hat \theta_{N_1}}$. 
    We estimate $\sigma_N^2(\hat \theta_{N_1})$ via Monte Carlo methods for multiple values of $N$ and select $N_{\text{opt}}$ such that the variance estimate of the additive noise, $\hat \sigma_{N_{\text{opt}}}^2(\hat \theta_{N_1})$, matches the target value $\sigma_{\text{opt}}^2$. 
    This step is facilitated by the inverse proportionality between $1/\sigma^2$ and $N$, as stated in Assumption~\eqref{hypii}. Thanks to this relationship, one can choose a reasonable interval for $N$ and efficiently narrow down the search, as the approximate location of the optimal $N$ can be anticipated.
    \label{step2}
    
    \item Execute the PM algorithm using $N_{\text{opt}}$ and a Normal random walk proposal density with covariance ${l_{\text{opt}}^2\widehat \Sigma_{N_1}}/{d}$.
    \label{step3}
    \label{NAmethod}
\end{enumerate}

\begin{remark}
        Choosing an excessively large value for \( N_1 \) may lead to unnecessary computational burden, while a value too small may yield inaccurate estimates of \( \hat\theta_{N_1} \) and \( \hat\sigma_{N_1} \), thereby degrading the performance of subsequent steps.
\end{remark}
\begin{remark}
Step~\ref{step2} is typically carried out via manual tuning of the number of particles \( N \), by iteratively adjusting \( N \) and visually inspecting the resulting variability of the additive noise. However, such trial-and-error procedures are inherently subjective, difficult to automate, and complicate reproducibility. To overcome these limitations, we employ a principled and automated approach based on a \emph{dichotomic search} algorithm (see Subsection~\ref{dichotomic}), which iteratively narrows the search interval until a predefined precision \( a_1 \) is achieved. This method ensures consistent selection of \( N_{\text{opt}} \) across independent runs, while also enabling precise measurement of runtime, which facilitates a direct comparison with the APM algorithm introduced in Section~\ref{sec:APM}.
\end{remark}

This non adaptive approach requires multiple steps, each of which must be executed sequentially, making the process somewhat time-consuming and adding a cognitive load. We propose an efficient alternative to this approach by combining these three steps into a single process, in which we allow the parameter $N$ to gradually approach its optimal value as the PM iterations progress using an adaptive scheme. Both methods will then be rigorously compared using synthetic and real data. 

\section{The Adaptive Pseudo-Marginal Algorithm}
\label{sec:APM}
In adaptive MCMC methods, tuning parameters (e.g., proposal variance) are often updated using \textit{epoch-based strategies} to ensure computational stability (see \citep{roberts_examples_2009, haario_adaptive_2001}). An \textit{epoch} is a series of $K$ consecutive MCMC states, \( \{\theta_{\ell-K+1}, \dots, \theta_\ell\} \), during which no adaptation occurs. After each \textit{epoch}, empirical statistics (e.g., mean, covariance, or acceptance rate) are computed from the samples in that \textit{epoch}. These statistics serve as inputs to an \textit{adaptation criterion}, for example, comparing the observed acceptance rate to an optimal value (like 0.234 for the Random Walk Metropolis (RWM), \cite{gelman_weak_1997}). If the criterion suggests suboptimal performance, the algorithm adjusts its parameters before proceeding to the next \textit{epoch}. This periodic adjustment balances adaptation with stability.

In this section, we introduce our adaptive version of the PM algorithm, called APM, where the parameter $N$, used in the estimation of \( \widehat{p}_{N,T}(y | \theta, U) \), evolves according to a \textit{epoch-based adaptation strategy}. As discussed in Section~\ref{sec:review}, Table~1 in \cite{schmon_large-sample_2021} provides explicit values for the optimal standard deviation of the additive noise $\sigma_{\text{opt}}$ as a function of parameter dimension $d$. During sampling, the APM periodically computes an empirical estimate \( \hat{\sigma}^2_\ell \) of the additive noise variance after every \textit{epoch} of size \( K \). This occurs at iterations \( \ell = K j \), where \( j \in \nats^*\) is the \textit{epoch} index. By comparing $\hat{\sigma}_\ell^2$ to $\sigma^2_{\text{opt}}$, the APM adjusts $N$ with step size $a \in \nats^*$ after each \textit{epoch}. The inverse proportionality between $\sigma^2$ and $N$ directly motivates our adaptive framework: increasing \( N \) reduces \( \sigma \), while decreasing \( N \) inflates it, which allows us to steer $\sigma$ toward $\sigma_\textopt$.
 
Let $N_\ell$ be an $\mathbb{N}^*$-valued random variable controlling the transition kernel at iteration $\ell$. The state of the algorithm at this iteration is given by the $\Theta \times \mathcal{W}$-valued random variable $(\theta_\ell, W_\ell)$, where $W_\ell= W_{N_{\ell-1}}(\theta_\ell)$ is the weight introduced in Section~\ref{sec:pm}. The filtration $\mathcal{G}_\ell = \sigma(\theta_0, W_0, N_0, \ldots, \theta_\ell, W_\ell, N_\ell)$ encodes the full history of the algorithm up to iteration $\ell$. The transition dynamics satisfy
\begin{align}
    \mathbb{P}\left[(\theta_{\ell+1}, W_{\ell+1}) \in A |\mathcal{G}_\ell\right] = P_{N_\ell}(\theta_\ell, W_\ell; A) = P_{\psi(N_{\ell-1}, \hat\sigma_\ell)}(\theta_\ell, W_\ell; A),
    \label{adaptivetransition}
\end{align}
where $P_{N_\ell}$ is the PM transition kernel defined in equation~(\ref{equa5}), and $\psi$ is the adaptation function given by
\begin{align*}
\psi(N_{\ell-1}, \hat{\sigma}_\ell) = N_{\ell-1} + a \cdot \kappa_\ell (\hat\sigma_\ell) \cdot \ind(\ell \in K\nats^*),
\end{align*}
with step size \( a \in \natsstar \) and $K\nats^*=\{Kj , j \in \natsstar\}$. The random variable $\kappa_\ell$ indicates the direction of adaptation at iteration $\ell$ and is defined conditionally on $\mathcal{G}_{\ell-1}$ as:
\begin{align*}
\kappa_\ell(\hat\sigma_\ell) = 
\begin{cases}
-1 & \text{with probability } p_{\lfloor \frac{\ell}{K} \rfloor}, \quad \text{if } \hat{\sigma}_\ell < {\sigma}_{\text{opt}} - \sigma_{\text{e}} \\
0 & \text{with probability } 1, \quad \text{if } |\hat{\sigma}_\ell - {\sigma}_{\text{opt}}| \leq \sigma_{\text{e}} \\
+1 & \text{with probability } p_{\lfloor \frac{\ell}{K} \rfloor}, \quad \text{if } \hat{\sigma}_\ell > {\sigma}_{\text{opt}} + \sigma_{\text{e}}
\end{cases},
\end{align*}
where $\hat{\sigma}_\ell$ is the estimate of the standard deviation of the additive noise at iteration $\ell$ obtained from  \textit{epoch} $j$, $p_j$ is an adaptation probability, and $\sigma_{\text{e}}$ is a tolerance parameter. 

To implement the \textit{adaptation criterion}, we estimate the additive noise variance $\hat{\sigma}_\ell^2$ at the end of each \textit{epoch}. This estimator approximates, when $T \to \infty$, the asymptotic variance of the additive noise $\sigma^2$ of Theorem~1 in \cite{schmon_large-sample_2021} by leveraging the chain's history. The estimation proceeds through the following phases.

First, the theoretical variance limit establishes that as $T \to \infty$, the additive noise variance converges to 
\begin{align}
    \sigma^2 :=  \lim_{T \to \infty} \Var\left( \omega_N(\bar\theta)\right) = \lim_{T \to \infty} \Var\left(\log \widehat{p}_{N,T}(y|\bar{\theta}, \bar{V})\right), \quad \bar{V}|\bar\theta \sim m_{N,\bar{\theta}},
    \label{variance}
\end{align}
with $\bar{\theta}$ being the limiting parameter in \cite[Assumption~1 and Assumption~3]{schmon_large-sample_2021}, with Assumption~1 being a Bernstein–von Mises-type posterior concentration around $\bar{\theta}$ and Assumption~3 a central limit theorem for the additive noise holding uniformly in a neighborhood of $\bar{\theta}$.

Since $\bar{\theta}$ is unknown in practice, we approximate it, after each \textit{epoch}, using the sample mean of all accepted parameters up to iteration $\ell$, that is
\begin{equation}
    \hat{\theta}_\ell =
    \begin{cases}
    \hat{\theta}_{\ell-1} & \text{if } \ell \notin K\natsstar\\
    \frac{1}{\ell}\sum_{i=1}^\ell \theta_i & \text{if } \ell \in K\natsstar
    \end{cases}.
    \label{samplemean}
\end{equation}
To approximate \( \log \{\widehat{p}_{N,T}(y|\bar{\theta}, \bar{V})\} \) using 
\( \log \{\widehat{p}_{N,T}(y|\hat{\theta}_\ell, \widehat{V})\} \) with 
\( \widehat{V}|\hat{\theta}_\ell \sim m_{N,\hat{\theta}_\ell} \), we recycle values computed during the current \textit{epoch} rather than generating new Monte Carlo samples, as done in Step~\ref{step2} of the non adaptive method. 
For iterations \( i = \ell-K+1 \) to \( \ell \), with \( \ell = Kj \), the \( j \)-th \textit{epoch} contains the accepted parameters \( \theta_i \), 
the proposed parameters \( \vartheta_i|\theta_i \sim q(\cdot|\theta_i) \), 
and the proposed auxiliary variables \( V_i|\vartheta_i \sim m_{N_{i-1},\vartheta_i} \). 
Assuming the existence of a transformation \( h \) such that 
\( \widehat{V}_i = h(V_i, \vartheta_i, \hat{\theta}_{i-1}) | \hat{\theta}_{i-1} \sim m_{N_{i-1},\hat{\theta}_{i-1}} \) 
whenever \( V_i|\vartheta_i \sim m_{N_{i-1},\vartheta_i} \), 
we then compute the transformed log-likelihood estimates
\[
\log \left\{\widehat{p}_{N_{i-1},T}\big(y|\hat{\theta}_{i-1}, h(V_i, \vartheta_i, \hat{\theta}_{i-1})\big)\right\}.
\]

\begin{remark}
We use the proposed auxiliary variables $V_i|\vartheta_i \sim m_{N_{i-1},\vartheta_i}$ rather than the accepted ones $U_i|\theta_i \sim m_{N_{i-1},\theta_i}$ because they also maintain the desired limiting variance $\sigma^2$ (see Theorem 1, \cite{schmon_large-sample_2021}) while exhibiting non sticking behavior.
\end{remark}

 At iteration $\ell=jK$, the empirical variance $\hat{\sigma}_\ell^2$ is then calculated as the sample variance of these transformed log-likelihood estimates within the $j$-th \textit{epoch} 
{\small
 \begin{equation}
    \hat{\sigma}_\ell^2 = \frac{1}{K-1} \sum_{i = \ell-K+1}^{\ell} \left( \log \left\{\widehat{p}_{N_{i-1},T}(y | \hat{\theta}_{i-1}, \widehat{V}_i)\right\} - \frac{1}{K} \sum_{i = \ell-K+1}^{\ell} \log \left\{\widehat{p}_{N_{i-1},T}(y |\hat{\theta}_{i-1}, \widehat{V}_i)\right\} \right)^2.
    \label{hatsigma}
    \end{equation}
}

\begin{remark}
In importance sampling for likelihood estimation with Normal proposals, 
the transformation \( h \) has a linear form. 
For instance, consider Classical Importance Sampling where the proposals are distributed as \( V_i | \vartheta_i \sim \mathcal{N}(\vartheta_i, 1) \). 
In this case, the transformed variables, given by \(
\widehat{V}_i = h(V_i, \vartheta_i, \hat{\theta}_{i-1}) = V_i - \vartheta_i + \hat{\theta}_{i-1},
\) follow a Normal distribution, \( \widehat{V}_i|\hat{\theta}_{i-1} \sim \mathcal{N}(\hat{\theta}_{i-1}, 1) \). 
A linear transformation is used in both our synthetic data example (Section~\ref{sec:syntheticex}) and our real data application (Section~\ref{sec:realex}) for the auxiliary variables involved in the likelihood estimation.
\end{remark}

We now present the complete APM algorithm, incorporating all components developed in this section. 
\begin{algorithm}[H]
{\small
\caption{Adaptive Pseudo-Marginal Algorithm}\label{algo:apm}
\begin{algorithmic}[1]
\State \textbf{Input}: Initial $\theta_0$, $N_0$, \textit{epoch} size $K$, reference $\sigma_\textopt$, optimal scaling $l_{\text{opt}}$, tolerance $\sigma_\text{e}$, step size $a$, adaptation probability $p_j$, transformation $h$, $\hat \theta_0=\theta_0$
\State \textbf{Output}: Marginal sample chain $\{\theta_\ell, 1 \leq \ell \leq L\}$

\For{each iteration $\ell = 1$ to $L$}
    \State Propose $\vartheta_\ell|\theta_{\ell-1} \sim \Norm(\theta_{\ell-1}, {l_{\text{opt}}^2 \Sigma_p}/{d})$, with $\Sigma_p$ as defined in Step~\ref{step1} of Section~\ref{sec:review}
    \State Generate likelihood estimate $\widehat{p}_{N_{\ell-1},T}(y|\vartheta_\ell,v_\ell)$
    \State Compute $\widehat{p}_{N_{\ell-1},T}(y|\hat{\theta}_{\ell-1},\hat{v}_\ell)$, where $$\hat v_\ell = h(v_\ell, \vartheta_\ell, \hat\theta_{\ell-1}) \text{ and }
    \hat v_\ell| \hat\theta_{\ell-1}\sim m_{N_{\ell-1},\hat\theta_{\ell-1}}
   \text{ when } v_\ell|\vartheta_\ell \sim m_{N_{\ell-1},\vartheta_\ell}$$
    \State Accept/reject $\vartheta_\ell$ with  probability $\alpha_{N_{\ell-1}}$ defined in Equation \eqref{alphaN}
    \State Let $N_{\ell}=N_{\ell-1}$ and $\hat\theta_{\ell}=\hat\theta_{\ell-1}$
    \If{$\ell \in K\natsstar$}
        \State Estimate current standard deviation of the additive noise $\widehat{\sigma}_\ell$ using \eqref{hatsigma}
        \If{$\widehat{\sigma}_\ell > \sigma_\textopt + \sigma_\text{e}$} \label{adaptstep}
            \State $N_\ell=N_{\ell}+a$ with probability $p_j$ ($\ell=Kj$)
        \ElsIf{$\widehat{\sigma}_\ell < \sigma_\textopt - \sigma_\text{e}$ and $N_{\ell-1} > a$}
            \State $N_\ell=N_{\ell}-a$  with probability $p_j$
        \EndIf
        \State Update sample mean $\hat{\theta}_\ell$ of parameters using Equation \eqref{samplemean}
    \EndIf
\EndFor
\end{algorithmic}}
\end{algorithm}

The APM algorithm's update mechanism depends on four main parameters. 
First, the \textit{step size} $a \in \mathbb{N}^*$ controls the magnitude of $N$-adjustments between \textit{epochs}. 
Smaller values of $a$ lead to smoother but typically slower adaptation. 
This parameter is not particularly critical, as the performance of the algorithm is generally robust to its choice. 

Second, the \textit{epoch size} $K$ plays a central role in determining the stability of the additive noise variance estimates. 
Larger values of $K$ yield more stable estimates, but this comes at the expense of slower adaptation. 
Since the adaptation process relies on the quality of these variance estimates, $K$ is a more influential parameter. 
In the context of estimating the variance of the additive noise using the proposed auxiliary variables, we found that a value of $K=100$ provides sufficiently accurate and stable estimates in practice. 

Third, the \textit{adaptation probability} $p_j$ governs how frequently adaptation steps occur. To satisfy the \textit{diminishing adaptation} condition, $p_j$ must converge to $0$ as $j \to \infty$. 
In our experiments, we selected $p_j = j^{-1/2}$ because it allows for a sustained amount of adaptation throughout the iterations without stopping too early. 
This choice is also commonly adopted in the literature, such as in~\cite{roberts_examples_2009}. 
We observed that when $p_j = j^{-k}$ with $k \geq 1$, the adaptation tends to decrease too rapidly, causing the algorithm to stop adapting before reaching its optimal configuration.

Finally, the \textit{tolerance} $\sigma_\text{e} > 0$ specifies the acceptable deviation from the target value $\sigma_{\text{opt}}$. Updates are triggered only when $|\hat{\sigma}_\ell - \sigma_{\text{opt}}| > \sigma_\text{e}$, thereby avoiding unnecessary adjustments when the algorithm is already close to optimality. This parameter is not particularly sensitive and can be set to any reasonable value.

\begin{remark}
    The use of a fixed random walk proposal variance in the APM method may limit its flexibility compared to the non adaptive method, wherein the proposal variance $\Sigma_p$ is refined between Steps~\ref{step1} and~\ref{step3}. Allowing $N$ and the proposal variance matrix to simultaneously adapt over time could improve the efficiency of the APM method and merits further investigation.
\end{remark}

\section{Ergodicity of the Adaptive Pseudo-Marginal Algorithm}
\label{sec:APMproof}
In this section, we establish sufficient conditions for the ergodicity of the APM algorithm, culminating in the proof of Theorem \ref{thm1}. Our goal is to rigorously demonstrate that the APM chain converges marginally to the posterior distribution $\pi$.

Before diving into this section, we introduce some notations. The supremum norm of a function \( f \) is defined as 
\( |f|_\infty = \sup_{x \in \mathcal{X}} |f(x)| \). For a signed measure \( \mu \) on the measurable space 
\( (\mathcal{X}, \mathcal{B}(\mathcal{X})) \), we consider two norms: the \emph{total variation norm} 
\( \|\mu\| = 2 \sup_{A \in \mathcal{B}(\mathcal{X})} |\mu(A)| \), and the \emph{\( V \)-norm}, where for a function 
\( V : \mathcal{X} \to [1, \infty) \), it is given by 
\( \|\mu\|_V = \sup_{\substack{f:\, |f|_\infty \leq V}} |\mu(f)| \).

As discussed in Section~\ref{sec:intro}, two conditions are sufficient to ensure the ergodicity of an adaptive MCMC algorithm: the \textit{diminishing adaptation} and the \textit{containment condition}. We begin by noting that the APM algorithm, as outlined in Algorithm~\ref{algo:apm}, satisfies the \textit{diminishing adaptation} condition. This condition means that the adaptation fades away as the algorithm progresses, which is ensured by construction since the probability of adaptation $p_j \to 0$ as $j \to \infty$, where $j$ refers to the $j$-th \textit{epoch}. The following lemma provides a formal statement of the \textit{diminishing adaptation} condition.

\begin{lemma}[Diminishing Adaptation] Let $D_\ell = \sup_{(\theta,w) \in \Theta \times \mathcal{W}} \| P_{N_{\ell+1}}(\theta, w; \cdot) - P_{N_\ell}(\theta, w; \cdot) \|$.
If $p_{\lfloor{\ell/K}\rfloor} \to 0$ as $\ell \to \infty$, then the \textit{diminishing adaptation} is satisfied. That is, for any \( \epsilon > 0 \), \(
\Prob(D_\ell \geq \epsilon) \to 0\) when $\ell \to \infty$.
\label{lemma1}
\end{lemma}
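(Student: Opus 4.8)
The plan is to exploit two structural facts: (i) the PM kernel $P_N$ depends on the chain's history only through the integer $N$, and (ii) the value of $N$ changes only at epoch boundaries, and then only with probability $p_j$. Fact (i) is the linchpin: whenever $N_{\ell+1} = N_\ell$, the kernels $P_{N_{\ell+1}}(\theta,w;\cdot)$ and $P_{N_\ell}(\theta,w;\cdot)$ coincide for every $(\theta,w) \in \Theta \times \mathcal{W}$, so the supremum defining $D_\ell$ is exactly $0$. Hence, for any $\epsilon > 0$,
$$\{D_\ell \geq \epsilon\} \subseteq \{D_\ell > 0\} \subseteq \{N_{\ell+1} \neq N_\ell\},$$
and it suffices to bound $\Prob(N_{\ell+1} \neq N_\ell)$.

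First I would read off from the adaptation function $\psi$ that $N_{\ell+1} = N_\ell + a\,\kappa_{\ell+1}\,\ind(\ell+1 \in K\natsstar)$, so that $N_{\ell+1} \neq N_\ell$ requires simultaneously $\ell+1 \in K\natsstar$ and $\kappa_{\ell+1} \neq 0$. Away from epoch boundaries, i.e.\ when $\ell+1 \notin K\natsstar$, the increment vanishes deterministically and $\Prob(D_\ell \geq \epsilon) = 0$.

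At an epoch boundary $\ell+1 = Kj$, I would condition on the information determining the empirical estimate $\hat\sigma_{\ell+1}$. By the definition of $\kappa_{\ell+1}$, once $\hat\sigma_{\ell+1}$ is fixed the direction is nonzero (equal to $\pm 1$) with probability $p_j$ when $|\hat\sigma_{\ell+1} - \sigma_{\textopt}| > \sigma_{\text{e}}$ and with probability $0$ otherwise; in either case this conditional probability is at most $p_j$. Taking expectations via the tower property yields $\Prob(N_{\ell+1} \neq N_\ell) \leq p_j = p_{\lfloor (\ell+1)/K \rfloor}$. Combining the two cases gives, for every $\ell$ and every $\epsilon > 0$,
$$\Prob(D_\ell \geq \epsilon) \leq p_{\lfloor (\ell+1)/K \rfloor}.$$

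Letting $\ell \to \infty$, the index $\lfloor (\ell+1)/K \rfloor \to \infty$, so the hypothesis $p_j \to 0$ forces the right-hand side to $0$, which is precisely the diminishing adaptation condition. I do not expect a substantive obstacle: the argument is essentially bookkeeping of when adaptation can occur, combined with the uniform-in-$(\theta,w)$ collapse of the kernel difference on the no-adaptation event. The only point demanding mild care is the conditioning step — making precise the $\sigma$-algebra relative to which $\hat\sigma_{\ell+1}$ is measured and the $\text{Bernoulli}(p_j)$ activation is applied — and noting that the crude bound $D_\ell \leq 2$ is never actually needed, since $\epsilon > 0$ already confines us to the event $\{N_{\ell+1} \neq N_\ell\}$.
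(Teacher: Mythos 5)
Your proposal is correct and follows essentially the same route as the paper's proof: both reduce the event $\{D_\ell \geq \epsilon\}$ to the event $\{N_{\ell+1} \neq N_\ell\}$ (the kernel difference vanishing identically when $N$ does not change) and then bound the probability of adaptation by a multiple of $p_j$ via the tower property. The only cosmetic difference is that the paper first writes out the kernel difference explicitly and bounds $D_\ell$ by an integral of $|\mathcal{Q}_{N_{\ell+1},\vartheta}-\mathcal{Q}_{N_\ell,\vartheta}|$ before observing it is zero on the no-adaptation event (and settles for the cruder bound $2p_j$), whereas you argue the collapse directly and obtain the sharper bound $p_{\lfloor(\ell+1)/K\rfloor}$; neither difference affects the conclusion.
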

\begin{proof}[Proof of Lemma~{\upshape\ref{lemma1}}]
    See Appendix~\ref{appendixB}.
\end{proof}

Next, we address the \textit{containment condition}. Let $M_\epsilon$ denotes the $\epsilon$-convergence time of the PM transition kernel in~\eqref{equa5}, defined as: 
\[
M_\epsilon(\theta, w, N) = \inf\left\{\ell \geq 1 : \|P_{N}^\ell(\theta, w, \cdot) - \bar\pi_N(\cdot)\| \leq \epsilon\right\},
\]
with $P_N^\ell$ representing the $\ell$-step PM transition kernel parameterized by $N$. Formally, this condition requires that the sequence $\{M_\epsilon(\theta_\ell, W_\ell, N_\ell)\}_{\ell \geq 0}$ be bounded in probability; that is, for any $\epsilon > 0$ and any $\delta > 0$, there exists $K_\delta > 0$ such that
\[
\sup_{\ell \geq 0} \mathbb{P}(M_\epsilon(\theta_\ell, W_\ell, N_\ell) > K_\delta) \leq \delta.
\]
This condition is generally abstract and difficult to verify directly. To make it more concrete, \cite{roberts_coupling_2007} proved in Theorem 18 of their work that a \textit{simultaneous strong aperiodic geometric ergodicity condition} (Definition~\ref{def2} in Appendix~\ref{appendixA}) ensures that the \textit{containment condition} is satisfied. The ergodicity of the PM algorithm was studied extensively in \cite{andrieu_convergence_2015}. In particular, for a PM algorithm with a marginal RWM algorithm targeting a super-exponentially decaying distribution with regular contours (see Assumption~\ref{hyp1}), geometric ergodicity (see Definition~\ref{def3} in Appendix~\ref{appendixA}) fails if there does not exist a uniform bound $\bar{w} < \infty$ such that $\mathcal{Q}_{N,\theta}([0, \bar{w}]) = 1$ for $\pi$-almost every $\theta \in \Theta$ (Remark 34, \cite{andrieu_convergence_2015}).

In contrast, polynomial ergodicity (Definition~\ref{def4} in Appendix~\ref{appendixA}) holds under more general assumptions on the distribution of the weights (see Assumption~\ref{hyp3}). In general, proving polynomial ergodicity involves establishing a polynomial drift condition together with a minorization condition, as in~\eqref{eq:convergence_rate}. Furthermore, \cite{atchade_limit_2010} extended ergodicity results for adaptive MCMC algorithms from the setting of the \textit{simultaneous strong aperiodic geometric ergodicity condition} to the more general case of \textit{simultaneous minorization and polynomial drift conditions} (see Definition~\ref{def5} in Appendix~\ref{appendixA}). In particular, Proposition~2.4 of \cite{atchade_limit_2010} shows that these conditions, combined with each non adaptive kernel being $\phi$-irreducible and aperiodic, are sufficient to ensure the \textit{containment condition}. 

To prove our main result (see Theorem~\ref{thm1}), we rely on Theorem~\ref{appendix:thm4} in Appendix~\ref{appendixA}, the main theorem of \cite{atchade_limit_2010}, which, together with the \textit{diminishing adaptation} condition, provides three sufficient conditions for the ergodicity of adaptive MCMC algorithms. Since the assumptions in Theorem~\ref{appendix:thm4} are generally difficult to verify directly, we instead appeal to Corollary~\ref{appendix:Corollary2} in Appendix~\ref{appendixA}, which establishes that these conditions are satisfied if the family of kernels $\{P_N\}_{N}$ is $\phi$-irreducible, aperiodic, and satisfies \textit{simultaneous in $N$ polynomial drift and minorization conditions}.

However, the results in \cite{atchade_limit_2010} assume that all non adaptive transition kernels share a common stationary distribution. This is not directly applicable to our setting, since the invariant distribution $\bar\pi_N$ of the PM algorithm depends explicitly on the parameter $N$. To address this, we show that a modified version of Theorem~\ref{appendix:thm4} remains valid when \textit{simultaneous minorization and polynomial drift} conditions hold and each non adaptive kernel is $\phi$-irreducible and aperiodic. In particular, we generalize Condition~\eqref{appendix:ii} in that theorem to accommodate the case where the stationary distribution varies with the adaptation parameter. This generalized condition still guarantees ergodicity, but only for the marginal chain in the parameter $\theta$. This is sufficient for our purposes, since in the PM algorithm the noise variables are ultimately discarded.

Based on this framework, we derive four sufficient conditions that jointly imply an $N$-\textit{simultaneous polynomial drift} condition and a \textit{minorization} condition, which together ensure the marginal ergodicity for the APM algorithm.

\begin{hypothesis}[Regularity and Tail Behavior of the Target Density]
The target density $\pi$ is continuously differentiable and supported on $\reals^d$. We assume it possesses both super-exponentially decaying tails and regular contours. More precisely,
$$
\frac{\theta}{|\theta|} \cdot \nabla\log(\pi(\theta)) \underset{|\theta| \rightarrow \infty}{\longrightarrow} -\infty \quad \text{and} \quad \underset{|\theta| \rightarrow \infty}{\limsup} \frac{\theta}{|\theta|} \cdot \frac{\nabla\pi(\theta)}{|\nabla\pi(\theta)|} < 0,
$$
where $|\cdot|$ is the Euclidean norm.
Furthermore, the proposal distribution $q(A|\theta) = \int_A q(\vartheta-\theta) d\vartheta$ is assumed to have a symmetric density $q$ that is bounded away from $0$ in some neighborhood of the origin, that is there exist $ \delta_q>0$ and $\varepsilon_q>0$ such that, for any $|\vartheta| \leq  \delta_q$, $q(\vartheta| \theta)> \varepsilon_q$.
\label{hyp1}
\end{hypothesis}
The conditions in Assumption~\ref{hyp1} ensure that the target distribution is well-behaved, with rapidly decaying tails, a standard requirement for the stability of MCMC algorithms (see \cite{jarner_geometric_2000}). 
Before stating the next assumption, we first recall the notion of convex order between two random variables.

\begin{definition}[Convex Order]  
For two random variables $X$ and $Y$, we say that $X$ is smaller than $Y$ in the convex order (denoted $X \preceq_{cx} Y$) if
$$
\E[h(X)] \leq \E[h(Y)]
$$
for all convex functions $h : \reals \rightarrow \reals$ for which the expectations exist.
\label{def1}
\end{definition}

The following assumption guarantees that increasing the number of particles beyond some $N_0$ smooths the estimates without introducing excessive variability. This condition is essential to establish the \textit{$N$-simultaneous polynomial drift} for the PM algorithm.

\begin{hypothesis}[Convex Order of Weights]
There exists a fixed integer $N_0 \geq 1$ such that for all $N \geq N_0$, the weights $W_N(\theta)$ are stochastically smaller in the convex order than $W_{N_0}(\theta)$, i.e.,
$$W_N(\theta)\preceq_{cx} W_{N_0}(\theta), \quad \forall \theta \in \Theta.$$
\label{hyp2}
\end{hypothesis}
\begin{remark}
Assumption~\ref{hyp2} holds when the likelihood estimator is constructed using Classical Importance Sampling (see Definition 3.9 in \cite{robert_monte_2004}).
\end{remark}

The next condition implies that extremely small or extremely large weights, which could cause numerical instability or prevent proper mixing, are not too probable. This is a common requirement when dealing with the PM algorithm to ensure that the moments of the weights remain bounded; see \cite{andrieu_convergence_2015}. This condition ensures, along with Assumption~\ref{hyp1}, the existence of a polynomial drift for the PM algorithm for each $N\geq1$.

\begin{hypothesis}[Bounded Moments of Weights]
There exist constants $\alpha_{0} > 0$ and $\beta_{0} > 1$ such that for some $N_0 \geq 1$,
$$
M_{W_{N_0}} = \underset{\theta \in \Theta}{\mathrm{ess\,sup}} \int (w^{-\alpha_{0}} \vee w^{\beta_{0}}) \mathcal{Q}_{N_0,\theta}(\mathrm{d}w) < \infty,
$$
where $a\vee b=\max(a,b)$.
\label{hyp3}
\end{hypothesis}

The next assumption ensures that the PM chain is not forced into a cyclic pattern, but rather free to explore its state space.
\begin{hypothesis}[Positive Rejection Probability]
The rejection probability of the PM algorithm in~\eqref{varrho} remains strictly positive for all $N \geq N_0$ and for all states $(\theta, u) \in \Theta \times \mathcal{U}$. 
\label{hyp4}
\end{hypothesis}
Assumptions~\ref{hyp1} and \ref{hyp4} guarantee that the PM algorithm is $\phi$-irreductible and aperiodic for every $N\geq N_0$.

\begin{theorem}
    Together, Assumptions \ref{hyp1}, \ref{hyp2}, \ref{hyp3}, and \ref{hyp4} ensure that the APM algorithm is ergodic in the following sense:
$$
\underset{\{f, |f|_\infty\leq1\}}{\sup} |\E[f(\theta_\ell)]-\pi(f)| \underset{\ell \rightarrow \infty} {\longrightarrow}0,
$$
where $\{\theta_\ell\}$ is the marginal APM process .
\label{thm1}
\end{theorem}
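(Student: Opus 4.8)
The plan is to establish ergodicity by verifying the two sufficient conditions of \cite{roberts_coupling_2007}: \textit{diminishing adaptation} and \textit{containment}. The former is already supplied by Lemma~\ref{lemma1}, since the adaptation probability $p_{\lfloor \ell/K\rfloor}\to 0$. Hence the entire effort reduces to establishing containment, which I would obtain through Corollary~\ref{appendix:Corollary2}: it suffices to show that the family $\{P_N\}_{N\geq N_0}$ is $\phi$-irreducible and aperiodic and satisfies a \emph{simultaneous-in-$N$ polynomial drift} together with a \emph{minorization} condition, and then feed this into the suitably modified Theorem~\ref{appendix:thm4}.

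First I would dispatch irreducibility and aperiodicity. Under Assumption~\ref{hyp1}, the super-exponential tails, regular contours, and the proposal density bounded away from $0$ near the origin make the underlying RWM $\phi$-irreducible; Assumption~\ref{hyp4}, the strict positivity of the rejection probability, then forbids periodicity, so every $P_N$ with $N\geq N_0$ is $\phi$-irreducible and aperiodic, as already remarked after Assumption~\ref{hyp4}. Next, for each fixed $N$, I would follow the polynomial-ergodicity analysis of \cite{andrieu_convergence_2015}: Assumption~\ref{hyp3} (bounded negative and positive moments of the weights under $\mathcal{Q}_{N_0,\theta}$) combined with the tail geometry of Assumption~\ref{hyp1} produces a drift function $V:\Theta\times\mathcal{W}\to[1,\infty)$ and constants such that $P_N V \leq V - c\,V^{\alpha} + b\,\ind_C$ for some $0\leq\alpha<1$ on a small set $C$, i.e. a polynomial drift accompanied by a minorization on $C$.

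The decisive step is to make this drift \emph{simultaneous}, with a single quadruple $(V,c,b,\alpha)$ valid for all $N\geq N_0$. This is precisely where Assumption~\ref{hyp2} is used: the convex ordering $W_N(\theta)\preceq_{cx}W_{N_0}(\theta)$ allows me to dominate every $\mathcal{Q}_{N,\theta}$-expectation of a convex functional of the weight by the corresponding $\mathcal{Q}_{N_0,\theta}$-expectation, so that the drift inequality verified for the extremal kernel $P_{N_0}$ transfers uniformly to each $P_N$ with $N\geq N_0$. The minorization on $C$ can be rendered uniform in $N$ as well, using the lower bound on the proposal near the origin together with the positivity of the weights. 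This yields the simultaneous polynomial drift and minorization demanded by Corollary~\ref{appendix:Corollary2}.

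The main obstacle is that \cite{atchade_limit_2010} assumes all non-adaptive kernels share a single invariant law, whereas here $\bar\pi_N$ genuinely depends on $N$ through $\mathcal{Q}_{N,\theta}$. I would circumvent this by proving the modified version of Theorem~\ref{appendix:thm4} announced in the text: because unbiasedness of the estimator forces every $\bar\pi_N$ to share the \emph{same} $\theta$-marginal $\pi$, the $N$-dependence is confined to the noise coordinate $w$, which is ultimately discarded. Recasting Condition~\eqref{appendix:ii} at the level of the $\theta$-marginal, the containment argument of \cite{atchade_limit_2010} still closes and controls $\sup_{\{f:\,|f|_\infty\leq 1\}}|\E[f(\theta_\ell)]-\pi(f)|$, even though the full kernels lack a common stationary measure. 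Combining this marginal containment with the diminishing adaptation of Lemma~\ref{lemma1} gives the claimed convergence. I expect the subtle points to be (a) confirming that the convex-order domination applies to the specific convex functionals appearing in the drift computation, and (b) checking that the marginalized form of Condition~\eqref{appendix:ii} remains strong enough to drive the containment argument.
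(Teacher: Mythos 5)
Your proposal follows essentially the same route as the paper: diminishing adaptation from Lemma~\ref{lemma1}, irreducibility and aperiodicity from Assumptions~\ref{hyp1} and~\ref{hyp4}, a simultaneous-in-$N$ polynomial drift obtained by using the convex order of Assumption~\ref{hyp2} on the convex map $w \mapsto w^{-\alpha_0}\vee w^{\beta_0}$ to propagate the moment bound of Assumption~\ref{hyp3} from $N_0$ to all $N\geq N_0$, a minorization on level sets, and a marginal-in-$\theta$ modification of Theorem~\ref{appendix:thm4} to cope with the $N$-dependence of $\bar\pi_N$. The only slight imprecision is that the convex order is used to make the moment bound (the hypothesis of Theorem~\ref{appendix:thm3}) uniform in $N$, which then yields uniform drift constants, rather than to transfer the drift inequality of $P_{N_0}$ directly; this does not change the argument.
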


Theorem~\ref{thm1} is established by combining general ergodicity results for adaptive MCMC algorithms (Theorem~\ref{appendix:thm4} and Corollary~\ref{appendix:Corollary2} in Appendix~\ref{appendixA}) with specific properties of the PM sampler (Theorems~\ref{appendix:thm2} and~\ref{appendix:thm3} in Appendix~\ref{appendixA}). All auxiliary results required for the proof are presented in Appendix~\ref{appendixA}. The overall structure of the proof is summarized in Diagram~\ref{diagram}.

\begin{figure}[h]
\centering
\resizebox{\textwidth}{!}{
\begin{tikzpicture}[
    node distance=1.5cm,
    assumption/.style={rectangle, draw=red, fill=red!10, thick, minimum width=3cm, align=center},
    theorem/.style={rectangle, draw=blue, fill=blue!10, thick, minimum width=3cm, align=center},
    maintheorem/.style={rectangle, draw=black, fill=yellow!20, thick, minimum width=3.5cm, align=center, font=\bfseries},
    driftbox/.style={rectangle, draw=orange, fill=orange!20, thick, rounded corners, minimum width=4cm, align=center},
    arrow/.style={->, >=stealth, thick},
    highlightarrow/.style={->, >=stealth, thick, draw=red, line width=1.2pt}
]

\node (hyp1) [assumption] {Assumption~\ref{hyp1} \\ (Regularity)};
\node (appthm1) [theorem, right=of hyp1] {Theorem~\ref{appendix:thm1}, \cite{jarner_geometric_2000} \\ (RWM irreducibility)};
\node (appthm2) [theorem, right= of appthm1] {Theorem~\ref{appendix:thm2}, \cite{andrieu_pseudo-marginal_2009} \\ (PM irreducibility)};
\node (hyp4) [assumption, above=of appthm2] {Assumption~\ref{hyp4} \\ (Positive rejection)};

\node (appthm3) [theorem, below=of hyp1] {Theorem~\ref{appendix:thm3}, \cite{andrieu_convergence_2015}\\ (Drift condition)};
\node (hyp3) [assumption, left=of appthm3] {Assumption~\ref{hyp3} \\ (Moment control)};
\node (drift) [driftbox, right=of appthm3] {Simultaneous-in-$N$ \\ polynomial drift};
\node (hyp2) [assumption, right=of drift] {Assumption~\ref{hyp2} \\ (Convexity)};

\node (cor1) [theorem, below left=of drift] {Corollary~\ref{appendix:Corollary1}, \cite{atchade_limit_2010} };
\node (p1) [theorem, below=of cor1] {Proposition~\ref{appendix:proposition1}, \cite{atchade_limit_2010}};
\node (modcor2) [theorem, below=of p1] {Corollary~\ref{appendix:Corollary2bis}, Appendix~\ref{appendixA} \\ (Ergodicity conditions; \\modified Condition~\eqref{appendix:ii})};

\node (thm1) [maintheorem, below=of modcor2] {Theorem~\ref{thm1} \\ (APM ergodicity)};
\node (rest) [driftbox, left=of thm1] {Restriction to $\theta$};
\node (DA) [theorem, right=of thm1] {Lemma~\ref{lemma1}, Appendix~\ref{appendixB}\\ (\textit{Diminishing Adaptation})};

\node (lemma2) [theorem, below right=of drift] {Lemma~\ref{lemma2}, Appendix~\ref{appendixC} \\ (Compactness of level\\ sets of drift function)};
\node (lemma1) [theorem, below=of lemma2] {Lemma~\ref{appendix:lemma1}, \cite{atchade_limit_2010} };
\node (minor) [driftbox, below=of lemma1] {Simultaneous \\ minorization condition};

\draw [arrow] (hyp1) -- (appthm1);
\draw [arrow] (appthm1) -- (appthm2);
\draw [arrow] (hyp4) -- (appthm2);
\draw [highlightarrow] (appthm2) -- ++(1.6,0) -| ++(1.1,0) |- ++(-11.5,-10.5) -- ++(0,0.8)(modcor2);

\draw [arrow] (hyp1)  -- (appthm3); 
\draw [arrow] (hyp3) -- (appthm3); 
\draw [arrow] (hyp2) -- (drift); 
\draw [arrow] (appthm3) -- (drift); 
\draw [highlightarrow] (drift) -- ++(0,1) -| ++(7.1,-8.8) -|++(-11,0.6)  (modcor2); 
\draw [highlightarrow] (minor) --++ (-7.45,0) (modcor2); 

\draw [arrow] (lemma1) -- (minor);
\draw [arrow] (lemma2) -- (lemma1); 

\draw [arrow] (p1) -- (modcor2);
\draw [arrow] (cor1) -- (p1);

\draw [arrow] (drift) ++(-0.5,-0.5) |- ++(-2.6,-1.2)(cor1);
\draw [arrow] (minor) ++(-2,0.2) -- ++ (-5,0) --++(0,4.9) --++(-1.05,0)(cor1);
\draw [arrow] (appthm2) ++ (-0.5,0.5) |- ++(-15.4,0.5) -| ++(0,-5.38) --++ (cor1);

\draw [arrow] (drift) ++(0.5,-0.5) |- ++(2.6,-1.2) (lemma2);

\draw [highlightarrow] (modcor2) -- (thm1); 
\draw [highlightarrow] (rest) -- (thm1); 
\draw [highlightarrow] (DA) -- (thm1);
\end{tikzpicture}
}
\caption{Diagram of the proof of Theorem~\ref{thm1}.
}
\label{diagram}
\end{figure}
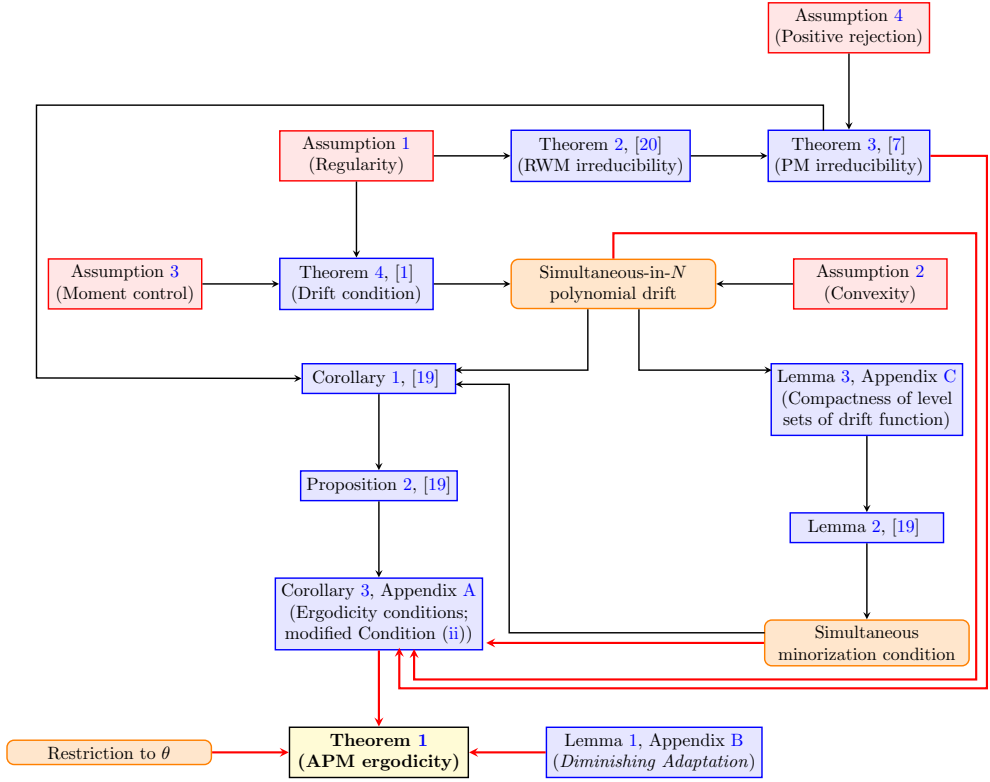

\begin{proof}[Proof of Theorem~{\upshape\ref{thm1}}]
First, we prove that the PM algorithm is $\phi$-irreducible and aperiodic. For $N\geq N_0$, under Assumption~\ref{hyp1}, Theorem~\ref{appendix:thm1} establishes that the marginal RWM algorithm associated with the PM algorithm is $\mu_{\text{Leb}}$-irreducible and aperiodic. Building upon this, Assumption~\ref{hyp4} (positive rejection probability) allows us to invoke Theorem~\ref{appendix:thm2}, which confirms that the PM kernel $P_N$ is also $\mu_{\text{Leb}}$-irreducible and aperiodic for all $N$. 

Next, we construct an appropriate function to guarantee that the polynomial drift condition~\eqref{polynomialdrift} in Definition~\ref{def5} holds simultaneously for $P_N$ across all $N \geq N_0$. From Assumption~\ref{hyp2} and the convexity of the function $w \mapsto w^{-\alpha_{0}} \vee w^{\beta_{0}}$ for $w \in \mathcal{W}$, Definition~\ref{def1} implies that for $N \ge N_0$,
\[
\E[W_N^{-\alpha_{0}} \vee W_N^{\beta_{0}}|\theta]  \leq \E[W_{N_0}^{-\alpha_{0}} \vee W_{N_0}^{\beta_{0}}|\theta].
\]
Consequently, taking the essential supremum over $\theta$ gives,
\[
\underset{\theta \in \Theta}{\mathrm{ess\,sup}}\,  \E[W_N^{-\alpha_{0}} \vee W_N^{\beta_{0}}|\theta]  \leq \underset{\theta \in \Theta}{\mathrm{ess\,sup}}\,  \E[W_{N_0}^{-\alpha_{0}} \vee W_{N_0}^{\beta_{0}}|\theta].
\]
From Assumption~\ref{hyp3}, we have $M_{W_{N_0}}=  {\mathrm{ess\,sup} }_{\theta \in \Theta}\, \E[W_{N_0}^{-\alpha_{0}} \vee W_{N_0}^{\beta_{0}}|\theta] <  \infty$.
Therefore, we can conclude that $M_{W_N} = {\mathrm{ess\,sup} }_{\theta \in \Theta}\,  \E[W_N^{-\alpha_{0}} \vee W_N^{\beta_{0}}|\theta]  < \infty$. This directly satisfies Condition~\eqref{appendix:moments} of Theorem~\ref{appendix:thm3} with $\alpha'=\alpha_{0}$ and $\beta'=\beta_{0}$. We now define the function $V : \Theta \times \mathcal{W} \to [1, \infty)$ as
\[
V(\theta, w) = c_\pi^{\eta}\pi^{-\eta}(\theta) (w^{-\alpha} \vee w^{\beta}),
\]
where $c_\pi ={\sup}_{\vartheta \in \Theta} \pi(\vartheta)$, and the parameters are chosen as $\eta = \min(\alpha_{0},1,\beta_{0}-1)/2 \in (0, \min(\alpha_{0},1,\beta_{0}-1))$, $\alpha = (\alpha_{0}+\eta)/2 \in (\eta, \alpha_{0}]$, and $\beta = (\beta_{0}-\eta+1)/2 \in (1, \beta_{0}-\eta)$. With this function and along with Assumption~\ref{hyp1}, we apply Theorem~\ref{appendix:thm3} and we conclude that there exist constants $\bar{w} \in [1, \infty)$, $M \in [1, \infty)$, $c \geq 1$, $\underline{w} \in (0,1]$, and $\delta>0$ such that the polynomial drift condition holds simultaneously for $N \geq N_0$:
\begin{align}
    P_N V(\theta, w) \leq V(\theta, w) - \delta V^{\frac{\beta-1}{\beta}}(\theta, w) + c\ind_\mathcal{C}(\theta, w),
    \label{polydrift}
\end{align}
where $\mathcal{C} =\{(\theta, w) : |\theta|\leq M, w\in [\underline{w}, \bar{w}]\}$.

To establish the ergodicity of the APM, it remains to verify that the minorization condition~\eqref{appendix:minorization} (see Definition~\ref{def5}) holds for all level sets of the function $V$. In Lemma~\ref{lemma2} in Appendix~\ref{appendixC}, we show that for any $b>1$, the level set 
\(B = \{(\theta, w) \in \Theta \times \mathcal{W} | V(\theta, w) \leq b\} \) of $V$
is compact and has positive Lebesgue measure.

Along with Lemma~\ref{appendix:lemma1} we conclude that there exist $\epsilon_B > 0$ and a probability measure $\nu_B$ such that for all $N \geq N_0$ and for all $B$ level sets of $V$,
\begin{align}
P_N(\theta, w; \cdot) \geq \epsilon_B \ind_B(\theta, w) \nu_B(\cdot).
    \label{minorization}
\end{align}

We have established $\phi$-irreducibility and aperiodicity, together with the existence of polynomial drift and minorization conditions, simultaneously in $N$, over all level sets of the function $V$. Consequently, all assumptions of Corollary~\ref{appendix:Corollary1} are satisfied. Therefore, there exists a level set \( B \subset \Theta \times \mathcal{W} \) of $V$, constants \( \varepsilon_B, c_B > 0 \), and a probability measure \( \nu_B \) such that
\begin{align*}
P_N(\theta,w ;\cdot) &\geq \ind_B(\theta,w)\, \varepsilon_B \nu_B(\cdot), \quad
P_N V(\theta,w) \leq V(\theta,w) - c_B V^{1-\alpha}(\theta,w) + b\, \ind_B(\theta,w),
\end{align*}
with $\sup_B V < \infty$, $\nu_B(B) > 0$, and $c_B \inf_{B^c} V^{1-\alpha} \geq b$.

By Proposition~\ref{appendix:proposition1}, there exists a constant $C$ depending on $\sup_B V$, $\nu(B)$, and $\varepsilon, \alpha, b, c$, such that for any $0 \leq \beta \leq 1 - \alpha$ and $1 \leq \kappa \leq \alpha^{-1}(1 - \beta)$,
\begin{equation*}
(n+1)^{\kappa-1}\,\|P_N^n(\theta,w ; \cdot) - \bar\pi_N(\cdot)\|_{V^\beta} \leq C\,V^{\beta+\alpha\kappa}(\theta,w).
\end{equation*}
Choosing $\beta=0$ and $\alpha=1/\kappa$, and taking the supremum over $N \geq N_0$ and then over $(\theta,w)$, we obtain
\begin{equation}
\lim_{\ell \to \infty} \,
\sup_{(\theta,w) \in \Theta \times \mathcal{W}}
V^{-1}(\theta,w)
\sup_{N \in \mathbb{N}}
\| P^\ell_N((\theta,w), \cdot) - \bar\pi_N \|  = 0,
\label{a1ii}
\end{equation}
which is a generalized version of Condition~\eqref{appendix:ii} in Theorem~\ref{appendix:thm4}.

The remaining two conditions, \eqref{appendix:i} and \eqref{appendix:iii}, required by Theorem~\ref{appendix:thm4} to ensure ergodicity of adaptive MCMC algorithms, follow directly from Corollary~\ref{appendix:Corollary2}. Indeed, the proof in Subsection~4.3 of \cite{atchade_limit_2010} applies to these two points. This yields Corollary~\ref{appendix:Corollary2bis}, which is identical to Corollary~\ref{appendix:Corollary2} except that Condition~\eqref{appendix:ii} is replaced by its generalized form in~\eqref{a1ii}.

Finally, our main result follows directly from the \textit{diminishing adaptation} property of the APM algorithm (Lemma~\ref{lemma1}), together with Corollary~\ref{appendix:Corollary2bis} and the restriction of the function $f$ in Theorem~\ref{appendix:thm4} to $\Theta$. To establish ergodicity of the APM under the generalized Condition~\eqref{a1ii}, we apply exactly the same proof as in Theorem~\ref{appendix:thm4} (see Subsection~4.3.2 of \cite{atchade_limit_2010}), with the sole modification that $f$ is defined only on $\Theta$. This completes the proof of ergodicity for the APM algorithm.
\end{proof}

\section{Synthetic Data Example}
\label{sec:syntheticex}
In this section, a synthetic data example is considered to illustrate the practical verification of Assumptions~\ref{hyp1}--\ref{hyp4}, which are required by Theorem~\ref{thm1}. The APM algorithm (Algorithm~\ref{algo:apm}) is applied to this example, and its performance is subsequently compared with that of the non adaptive method introduced in Section~\ref{sec:review}.

The example involves a Bayesian latent variable model in which the observations \( Y_t |U_t \sim \mathcal{N}(U_t, 1) \), for \( t \in \{1, \ldots, T\} \), are assumed to be conditionally independent given the latent variables \( U_t \). The latent variables are themselves modeled as conditionally independent given a parameter \( \theta \in \mathbb{R} \), with \( U_t |\theta \sim \mathcal{N}\left(\theta, {1}/\{\theta^2 + 1\}\right) \) for each \( t \). A uninformative Gaussian prior \( \theta \sim \mathcal{N}(0, \sigma_0^2) \) is assigned to the parameter, where \( \sigma_0 = 10^5 \). 

This hierarchical model yields the following observed likelihood:
\begin{align*}
    p(y |\theta)
    &= \prod_{t=1}^{T} \varphi\left(y_t ; \theta, \tfrac{\theta^2 + 2}{\theta^2 + 1}\right).
\end{align*}

Although the likelihood is available in closed form, an unbiased positive estimator, constructed using a Monte Carlo method, is employed to align with the context of the PM and APM algorithms. This estimator is defined as
\begin{align}
    \widehat{p}_{T,N}(y |\theta, U) = \prod_{t=1}^{T} \widehat{p}_{N}(y_t |\theta, U_t) = \prod_{t=1}^{T} \frac{1}{N} \sum_{n=1}^{N} \varphi(y_t ; U_{t,n}, 1),
    \label{se:estimator}
\end{align}
where \( U_{t,n} |\theta \sim \mathcal{N}(\theta, 1/\{\theta^2 + 1\}) \) are independent and identically distributed for \( t \in \{1,\ldots, T\} \) and \( n \in \{1,\ldots, N\} \).

The corresponding posterior distribution satisfies
\begin{align}
    \pi(\theta) 
    &\propto \left( \frac{\theta^2 + 1}{\theta^2 + 2} \right)^{\frac{T}{2}} \exp\left\{ -\frac{1}{2} \left( \frac{\theta^2 + 1}{\theta^2 + 2} \sum_{t=1}^{T}(\theta - y_t)^2 + \frac{\theta^2}{\sigma_0^2} \right) \right\}.
    \label{SEpi}
\end{align}
Since the posterior density \( \pi \) is known up to a normalizing constant, a MH algorithm can be implemented. The posterior averages estimates obtained via the MH algorithm are compared to those produced by the PM and APM algorithms, for validation purposes.

In the experimental setup, the data were generated under the parameter value \( \bar\theta = 0 \), with a total of \( T = 200 \) observations. For the implementation of the MH, the PM in Step~\ref{step1} and the APM, the following Gaussian proposal distribution is adopted:
\begin{align}
    q(\vartheta |\theta) = \varphi\left(\vartheta; \theta, l_\textopt^2\frac{2}{T}\right) = \varphi\left(\vartheta; \theta, \frac{8}{T}\right),
    \label{intru}
\end{align}
where the scaling \( l_\text{opt}^2 = 4 \) follows the recommendation of \cite{schmon_large-sample_2021} for the case of a one-dimensional parameter, and the term \( 2/T \) corresponds to the inverse of the Fisher information evaluated at \( \bar{\theta} = 0 \), which is the parameter value used to generate the synthetic data (see Theorem 10.1 in \cite{Vaart_1998} for a theoretical justification of using the inverse Fisher information matrix as the proposal variance).

The explicit verification of Assumptions~\ref{hyp1}--\ref{hyp4} implies that Theorem~\ref{thm1} can be applied to this synthetic data example, thereby ensuring that the APM algorithm is ergodic.
\begin{proposition}
\label{se:verification}
Assumptions~\ref{hyp1}--\ref{hyp4} hold for the synthetic data example.
\end{proposition}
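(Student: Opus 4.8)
The plan is to verify the four assumptions one at a time, exploiting the explicit Gaussian structure of the model. For Assumption~\ref{hyp1}, continuous differentiability and full support on $\reals$ are immediate from the closed form~\eqref{SEpi}, since the denominators $\theta^2+1$ and $\theta^2+2$ never vanish and $\pi>0$ everywhere. To check the super-exponential tails I would differentiate $\log\pi$; writing $g(\theta)=(\theta^2+1)/(\theta^2+2)\to 1$ with $g'(\theta)=O(\theta^{-3})$, the dominant contribution to $(\mathrm{d}/\mathrm{d}\theta)\log\pi(\theta)$ for large $|\theta|$ is $-g(\theta)\sum_t(\theta-y_t)-\theta/\sigma_0^2\sim -(T+\sigma_0^{-2})\theta$, so that $\mathrm{sign}(\theta)\,(\mathrm{d}/\mathrm{d}\theta)\log\pi(\theta)\to-\infty$. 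Regular contours hold trivially in dimension $d=1$, since $\mathrm{sign}(\theta)\,\nabla\pi/|\nabla\pi|=\mathrm{sign}(\theta)\,\mathrm{sign}(\nabla\log\pi)\to-1$. Finally the random-walk proposal~\eqref{intru} has a symmetric Gaussian density that is continuous and strictly positive, hence bounded below on a neighbourhood of the origin.

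For Assumption~\ref{hyp2} I would first recognize~\eqref{se:estimator} as a Classical Importance Sampling estimator: each factor $N^{-1}\sum_n\varphi(y_t;U_{t,n},1)$ is an unbiased estimator of $p(y_t|\theta)=\int\varphi(y_t;u,1)\varphi(u;\theta,1/(\theta^2+1))\,\mathrm{d}u$ whose importance distribution is exactly the latent prior. Normalizing gives mean-one weights $Z_{t,N}=\widehat p_N(y_t|\theta,U_t)/p(y_t|\theta)$, each an average of i.i.d.\ terms. The exchangeability of i.i.d.\ averages yields the reverse-martingale identity $\E[Z_{t,N_0}\,|\,Z_{t,N}]=Z_{t,N}$ for $N\ge N_0$, so $Z_{t,N}\preceq_{cx}Z_{t,N_0}$ by Jensen. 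Because the $Z_{t,N}$ are independent across $t$ and $W_N(\theta)=\prod_t Z_{t,N}$, conditioning on $\sigma(Z_{1,N},\dots,Z_{T,N})$ factorizes and gives $\E[\prod_t Z_{t,N_0}\,|\,\sigma(Z_{1,N},\dots,Z_{T,N})]=\prod_t Z_{t,N}$, whence $W_N(\theta)\preceq_{cx}W_{N_0}(\theta)$ for every $\theta$ and every $N\ge N_0$ (with any $N_0$, e.g.\ $N_0=1$).

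Assumption~\ref{hyp3} is the computational heart. Taking $N_0=1$, so that $W_1(\theta)=\prod_t X_t$ with $X_t=\varphi(y_t;U_t,1)/p(y_t|\theta)$ and $U_t\sim\Norm(\theta,\tau^2)$, $\tau^2=1/(\theta^2+1)$, I would evaluate the one-observation moments by standard Gaussian integrals: for $\beta_0>1$, $\E[X_t^{\beta_0}|\theta]$ equals a bounded prefactor times the exponential of a bounded multiple of $\tau^2(y_t-\theta)^2$, and similarly for $\alpha_0\in(0,1)$, $\E[X_t^{-\alpha_0}|\theta]$; the negative-moment integral converges precisely because $\alpha_0\tau^2\le\alpha_0<1$ (the worst case being $\theta=0$, where $\tau^2=1$). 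The crucial observation is that $\tau^2(y_t-\theta)^2=(y_t-\theta)^2/(\theta^2+1)$ stays bounded uniformly in $\theta$, so each one-observation moment is bounded uniformly in $\theta$; taking the finite product over $t=1,\dots,T$ and using $w^{-\alpha_0}\vee w^{\beta_0}\le w^{-\alpha_0}+w^{\beta_0}$ then yields $M_{W_1}<\infty$.

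It remains to verify Assumption~\ref{hyp4}: for fixed $(\theta,w)$, choose a bounded set $A$ of proposals with $q(A|\theta)>0$ on which $r(\theta,\cdot)$ is bounded by some $R<\infty$; since the proposed weight $z=W_N(\vartheta)$ has a continuous law charging arbitrarily small values (each factor $\varphi(y_t;U_{t,n},1)$ can be made small with positive probability), the event $\{z<w/R\}$ has positive probability on $A$, forcing $r(\theta,\vartheta)z/w<1$ on a set of positive measure, so the acceptance integral in~\eqref{varrho_repar} is strictly less than one and $\varrho_N(\theta,w)>0$. I expect Assumption~\ref{hyp3} to be the main obstacle, because it alone requires exact control of both the negative and positive moments \emph{uniformly} over the unbounded parameter range; the argument hinges on the model-specific cancellation $\tau^2(y_t-\theta)^2\to 1$ as $|\theta|\to\infty$, and on restricting $\alpha_0<1$ to keep the negative-moment Gaussian integral finite.
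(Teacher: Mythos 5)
Your proposal is correct and follows essentially the same route as the paper's proof: the same gradient computation and sign analysis for Assumption~\ref{hyp1}, the same martingale/Jensen characterization of the convex order combined with the product closure for Assumption~\ref{hyp2} (the paper invokes Theorem~\ref{appendix:thm5} and Corollary~\ref{appendix:corollary3} for these two steps and fixes $N_0=1$), the same per-observation Gaussian moment computation with the cancellation $(y_t-\theta)^2/(\theta^2+1)=O(1)$ for Assumption~\ref{hyp3} (the paper just fixes $\alpha_0=1/2$, $\beta_0=2$), and the same strategy of exhibiting a positive-probability event on which the acceptance ratio is strictly below one for Assumption~\ref{hyp4}. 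The only cosmetic difference is the choice of that last event (your $\{z<w/R\}$ versus the paper's $\{|\vartheta|>|\theta|,\ |V_{t,n}-y_t|>|U_{t,n}-y_t|\ \forall t,n\}$), which does not change the argument.
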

\begin{proof}[Proof of Proposition~\ref{se:verification}]
See Appendix~\ref{appendixD0}.
\end{proof}

In the following, a numerical comparison is carried out between the APM algorithm and its non adaptive counterpart described in Section~\ref{sec:review}.

To ensure a fair and reproducible comparison, the APM and the non adaptive methods were implemented using consistent coding practices, with shared components reused when applicable. Simulations were conducted in the same computational environment (Linux kernel 5.14, AMD Ryzen 9 5950X, 62~GB RAM) using \texttt{R} version 4.2.1, random seeds were fixed to ensure reproducibility, and execution times were recorded via \texttt{Sys.time}.

Key parameters were aligned across both implementations to support a meaningful comparison. Both methods used an initial number of particles \( N_0 = N_1 = 100 \), a common starting point \( \theta_0 = 0 \), the same step size \( a = a_1 = 1 \). The APM algorithm (Algorithm~\ref{algo:apm}) was run for \(10^6\) iterations, matching iterations used in Step~\ref{step3} of the non adaptive method. The same burn-in of $2 \cdot 10^5$ (20\% of total samples) was considered for the APM algorithm and for the final run (Step~\ref{step3}) of the non adaptive method. The burn-in was determined through Geweke diagnostics (see \cite{geweke_1995})  when comparing the first 20\% versus last 50\% of chains. Table~\ref{tab:method_comparison} in Appendix~\ref{appendixD2} summarizes the corresponding settings.

Quantitative comparison between the methods was based on posterior mean and variance estimates, averaged over 10 independent runs. For each run, we also evaluated the acceptance rate \( \widehat{P} \) and the estimate \( \widehat{IF} \) of the inefficiency factor (see Equation~\eqref{inefficiency}) computed using the \textit{overlapping batch means} method of~\cite{flegal_batch_2010}. These metrics, along with execution times, were systematically compared across both methods. Visual comparisons were also performed using trace and autocorrelation plots of the parameter \( \theta \), allowing qualitative assessment of sampling efficiency and mixing behavior (see Appendix~\ref{appendixD4}).

Prior to detailing the implementation of the non adaptive method, we present benchmark results obtained using the MH algorithm on the same synthetic example. This serves as a validation of the posterior estimates produced by the non adaptive method. The MH algorithm was executed 10 times, each with \(10^6\) iterations and using the proposal distribution defined in Equation~\eqref{intru}. The resulting posterior mean was \( \hat\theta_\text{MH} = -0.026 \pm 0.0002 \), and the posterior variance was \( \hat\sigma^2_\text{MH} = 0.009 \pm 0.0000 \). We now detail the implementation steps for the non adaptive method (Section~\ref{sec:review}) on the synthetic example.

First and for each run, a preliminary execution of the PM algorithm was conducted using \( N_1 = 100 \) particles and the proposal variance specified in Equation~\eqref{intru}. Summary results of all runs are reported in Table~\ref{se:tab1}.
    \begin{table}[h]
    \centering
    \caption{Preliminary run of the PM algorithm on the synthetic example with \(10^5\) iterations and \(N_1 = 100\). Reported values are means and standard deviations over 10 independent runs.}
    \begin{tabular}{l@{\hspace{3cm}}c}
    \hline
    Statistic & Mean $\pm$ SD \\
    \hline
    Posterior Mean $\hat\theta_{100}$ & $-0.025 \pm 0.0021$ \\
    Posterior Variance $\widehat{\sigma}_{100}^2$ & $0.009 \pm 0.0002$ \\
    Acceptance Rate $\widehat P$ (\%) & $15.881 \pm 0.4559$ \\
    \hline
    \end{tabular}
    \label{se:tab1}
    \end{table}

 Recall that in this example the parameter dimension is \( d = 1 \), and the value \( \sigma_{\text{opt}} = 1.16 \) was chosen from Table 1 in \cite{schmon_large-sample_2021} for implementing Step~\ref{step2}. For each run, using the estimate \( \hat\theta_{100} \), the standard deviations \( \sigma_N(\hat \theta_{100}) \) of the additive noise \( \omega_N(\hat \theta_{100}) \) were estimated for various values of \( N \), following the approach in Step~\ref{step2}. These estimates were obtained via Monte Carlo simulation using the identity in Equation~\eqref{estimationsd} and $10^4$ iterations for each $N$, with the goal of identifying an optimal number of particles \( N_{\text{opt}} \) such that \( \hat \sigma_{N_{\text{opt}}}(\hat \theta_{100}) \approx 1.16 \).
    \label{dichotomic}
    The \textit{dichotomic search} interval was initialized as \( [100, 1000] \) for the number of particles. At each iteration, the estimate \( \hat\sigma_N(\hat\theta_{100}) \) was computed at the current midpoint of the interval. Initially, evaluations were performed at \( N = 100 \) and \( N = 1000 \). The midpoint value \( N = 550 \) was then tested. If the estimated standard deviation at the midpoint exceeded the target value \( \sigma_{\mathrm{opt}} = 1.16 \), the lower bound of the interval was updated to the midpoint; otherwise, the upper bound was updated accordingly. This bisection procedure was repeated until the length of the final interval reached \( a_1 = 1 \). An example run of this \textit{dichotomic search} is detailed in Appendix~\ref{appendixD1}.

\begin{table}[h]
\centering
\caption{Optimal $N$ for multiple independent runs.}
\label{tab:optimal_N_runs}
\begin{tabular}{@{}l *{10}{c} @{}}

\toprule
 {Run} & 1 & 2 & 3 & 4 & 5 & 6 & 7 & 8 & 9 & 10 \\
\hline
{Optimal} ${N}$ & 203 & 202 & 199 & 201 & 208 & 201 & 202 & 204 & 199 & 201 \\
$\hat{\sigma}_{N}{(\hat\theta_{100})}$ & 1.162 & 1.161 & 1.158 & 1.156 & 1.160 & 1.162 & 1.161 & 1.160 & 1.157 & 1.160 \\
\bottomrule
\end{tabular}
\end{table}

Table~\ref{tab:optimal_N_runs} summarizes the results of the optimal \( N \) for each of the 10 independent runs. In all cases, the optimal number of particles was found within the range \( \{199, \ldots, 208\} \). Detailed values of \( N \) selected by the \textit{dichotomic search} algorithm, along with the corresponding estimates of \( \sigma_N(\hat\theta_{100}) \), are provided in Table~\ref{comp:step2} in Appendix~\ref{appendixD3}.

Finally, the PM algorithm was executed for each run using the corresponding optimal number of particles \( N_{\text{opt}} \) identified earlier. The Markov chain was initialized at \( \hat{\theta}_{100} \), and a Gaussian random walk proposal with variance \( 4 \hat{\sigma}_{100}^2 \) was used, where \( \hat{\sigma}_{100} \) is the posterior standard deviation estimate from Step~\ref{step1}. Posterior estimates, averaged over 10 independent runs and reported in the first column of Table~\ref{se:tab3}, closely match those obtained using the MH algorithm, confirming the correctness of the implementation. 

In the remainder of this section, we demonstrate how the APM outperforms the non adaptive method, in the context of this synthetic data example. The APM algorithm can be run directly, thereby eliminating the need for the preliminary tuning steps. Furthermore, we show that, for the example under study, the APM algorithm achieves superior performance in terms of execution time compared to its non adaptive counterpart.

Additional APM specific parameters were set as follows: the \textit{epoch} size was \( K = 100 \), the adaptation tolerance was \( \sigma_\text{e} = 0.015 \) and the adaptation probability was defined as \( p_j = 1/\sqrt{j} \), a standard choice in adaptive MCMC schemes~\cite{roberts_examples_2009}.

As outlined in Section~\ref{sec:APM}, a transformation of the proposed auxiliary variables \( V \) was applied to estimate the likelihood at a fixed parameter value \( \hat{\theta}_\ell \). For each \textit{epoch} \( j \), with \( \ell = Kj \), the auxiliary variables \( V_{i,t,n}|\vartheta_i \sim \mathcal{N}(\vartheta_i, 1/\{\vartheta_i^2 + 1\}) \) for \( i \in \{\ell - K + 1, \ldots, \ell\} \), \( t \in \{1, \ldots, T\} \), and \( n \in \{1, \ldots, N_{i-1}\} \), were transformed as

\[
\widehat{V}_{i,t,n} = h(V_{i,t,n},\vartheta_i,\hat{\theta}_{i-1}) = \sqrt{\frac{\vartheta_i^2 + 1}{\hat{\theta}_{i-1}^2 + 1}}(V_{i,t,n} - \vartheta_i) + \hat{\theta}_{i-1},
\]
so that \( \widehat{V}_{i,t,n}|\hat{\theta}_{i-1} \sim \mathcal{N}(\hat{\theta}_{i-1}, 1/\{\hat{\theta}_{i-1}^2 + 1\}) \). This transformation allowed the evaluation of the log-likelihood estimator at $\hat\theta_{i-1}$ and $\widehat{V}_i=\{\widehat{V}_{i,t,n}\}_{1 \leq t\leq T, 1 \leq n\leq N_{i-1}}$. Using Equation~\eqref{se:estimator} we get,
\begin{align*}
   \log\{ \widehat{p}_{T, N_{i-1}}(y |\hat{\theta}_{i-1}, \widehat{V}_i) \} = \sum_{t=1}^{T} \log \left\{\frac{1}{N_{i-1}} \sum_{n=1}^{N_{i-1}} \varphi(y_t ; \widehat{V}_{i,t,n}, 1) \right\}.
\end{align*}
 The standard deviation of this log-likelihood is then computed after each \textit{epoch} using Equation~\eqref{hatsigma}.

\begin{table}[h]
\caption{Comparison of summary statistics and execution times, with the same initial values for both approaches, between the PM and the APM algorithms using $10^6$ iterations ($20\%$ Burn-in) across 10 runs. MH results with a 20\% Burn-in are used as the ground truth. Values are reported as mean $\pm$ standard deviation except for $N$ using median [min,max].}
\centering
\begin{tabular}{lccc}
\hline
 & MH & Non adaptive & Adaptive \\
\hline
Optimal N & – & 202 [199,208] & 199 [196,202] \\
Posterior Mean & $-0.0260 \pm 0.0002$ & $-0.0260 \pm 0.0004$ & $-0.0260 \pm 0.0005$ \\
Posterior Variance & $0.00909 \pm 0.00003$ & $0.00909 \pm 0.00003$ & $0.00910 \pm 0.00004$ \\
Acceptance Rate $\widehat{P}$ (\%) & $48.66 \pm 0.04$ & $26.14 \pm 0.20$ & $25.04 \pm 0.15$ \\
Inefficiency Factor $\widehat{IF}$ & $4.45 \pm 0.06$ & $11.77 \pm 0.24$ & $12.37 \pm 0.49$ \\
Execution Time & – & 1h19m31s $\pm$ 11m21s & \textbf{1h02m22s $\pm$ 1m03s} \\
\hline
\end{tabular}
\label{se:tab3}
\end{table}

\begin{figure}
    \centering
    \includegraphics[width=0.9\linewidth]{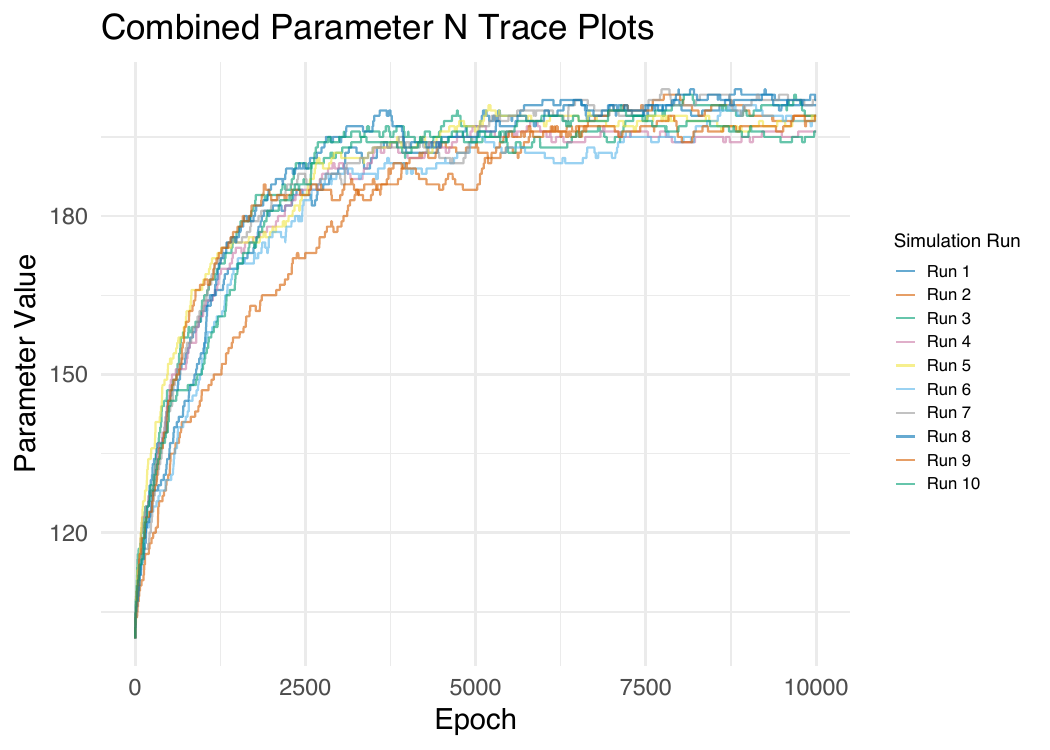}
    \caption{Trace of \( N \) using \( 10^4 \) epochs of $100$ iterations across 10 APM runs.}
    \label{se:fig2}
\end{figure}

As shown in the second column of Table~\ref{se:tab3}, the posterior mean and variance are estimated as \( \hat{\theta} = -0.026 \) and \( \hat{\sigma}^2 = 0.009 \), respectively. These estimates coincide with those obtained in Step~\ref{step3} of the non adaptive procedure and with the MH algorithm. The inefficiency factor is \(\widehat{IF} = 12.372\), slightly higher than in the non adaptive case, though the difference is minor. This increase is expected, as the inefficiency factor decreases with larger values of \(N\). The APM algorithm begins with \(N_0 = 100\), increasing \(N\) gradually during adaptation.  
Although a 20\% burn-in period removes the very first iterations from the analysis, part of the adaptation phase remains, which may slightly reduce efficiency compared to the non adaptive procedure. In contrast, Step~\ref{step3} uses the optimal number of particles, \( N_\text{opt} \in \{199, \ldots, 208\} \), allowing for greater efficiency from the outset. Starting the APM with \(N\) near \(N_\text{opt}\) would likely yield more similar performance.

From a computational perspective, however, it is generally advantageous not to use \(N_\text{opt}\) from the start. Instead, beginning with a smaller value while the algorithm is still in its warm-up phase, and increasing it progressively as needed, is more efficient. This is precisely the strategy of the APM algorithm, which achieves substantial time savings without loss of accuracy.

Over 10 runs, the APM algorithm required an average of 1 hour, 2 minutes, and 22 seconds, compared with 1 hour, 19 minutes, and 31 seconds for the full non adaptive method. Thus, the APM is approximately \(1.275\) times faster, providing improved computational efficiency while maintaining accuracy in the posterior estimates. Moreover, when accounting for execution time, the APM delivers \(21\%\) more effective samples per minute (1037 vs.~855), despite a slightly higher inefficiency factor. The latter is computed as
\[
\frac{\text{Sample size}}{\widehat{{IF}} \times \text{Time (min)}} 
\;\overset{\text{APM}}{=}\; 
\frac{800 \, 000}{12.372 \times 62.37} 
= 1\,037 
\quad \text{effective samples/minute}.
\]

Figure~\ref{se:fig2} shows the evolution of the number of particles \( N \) during each APM run. In all cases, \( N \) gradually stabilizes near the corresponding optimal value, indicating effective adaptation. The median of the optimal values \( N_{\text{opt}} \), defined as the final number of particles used in each run, ranged from 196 to 202 across the 10 APM runs. This range is consistent with that obtained using the non adaptive method, and although slightly narrower, the difference is minimal. This suggests that both approaches identify similar values for \( N \), with no significant practical difference, supporting the fairness of the comparison and the robustness of the tuning strategy.

A slightly narrower interval for \( N_{\text{opt}} \) in the non adaptive method could have been obtained by increasing the number of Monte Carlo iterations in Step~\ref{step2} from \( 10^4 \) to \( 10^5 \), at the cost of roughly one additional hour of computation. Under such settings, our experiments indicate that the APM algorithm would be approximately 2.2 times faster than its non adaptive counterpart.

Additional convergence diagnostics, including autocorrelation and trace plots of \( \theta \), are provided in Appendix~\ref{appendixD4}.

\section{Real Data Study}
\label{sec:realex}
In this section, we evaluate the performance of the APM algorithm relative to the non adaptive method using a real dataset from a longitudinal cohort study of preschool-aged children in Indonesia. The dataset was previously analyzed by \cite{zeger_generalized_1991} via Bayesian mixed-effects models, and later by \cite{schmon_large-sample_2021} to illustrate weak convergence properties of PM chains. We use the same modeling framework as in~\cite{schmon_large-sample_2021}.

The data consist of \(1200 \) repeated binary responses observed across \( T = 275 \) subjects. Each response indicates the presence or absence of a respiratory infection. Covariates include age, sex, height, a vitamin deficiency indicator, a below-average height indicator, two seasonal terms, and an intercept, yielding a total of eight covariates.

To account for intra-subject correlation, a subject-specific random intercept \( U_t|\tau \sim \mathcal{N}(0, \tau) \) is introduced for each \( t \in \{1, \dots, T\} \), with \( \tau > 0 \). Conditional on \( \theta = (\beta, \tau) \in \mathbb{R}^9 \), the variables \( U_t \) are mutually independent.

Let \( Y_t = (Y_{t,1}, \dots, Y_{t,J_t}) \in \{0,1\}^{J_t} \) denote the binary responses observed for subject \( t \), where \( J_t \) represents the number of repeated measurements for that subject. Conditionally on the random effect \( U_t \in \mathbb{R} \) and parameters \( \theta \), \( Y_t \) are modeled as independent variables via a logistic regression model.
\[
g(y_t | u_t, \theta) = \prod_{j=1}^{J_t} \frac{\exp(y_{t,j} \eta_{t,j})}{1 + \exp(\eta_{t,j})}, \quad \eta_{t,j} = c_{t,j}^\top \beta + u_t,
\]
where \( c_{t,j} \in \mathbb{R}^{8} \) denotes the covariate vector. The random effects density is given by
\[
f(u_t | \theta) = \varphi(u_t; 0, \tau).
\]
The prior on \( \theta \) factorizes as
\[
p(\theta) = p(\beta) p(\tau) = \varphi(\beta; 0_{8}, 10^4 I_{8}) \cdot p(\tau; 1, 1.5),
\]
where \( p(\tau; a_1, a_2) \) denotes the density of an inverse-gamma distribution with shape \( a_1 \) and scale \( a_2 \), i.e.,
\[
p(\tau; a_1, a_2) = \frac{1}{\Gamma(a_1)} \cdot \frac{a_2^{a_1}}{\tau^{a_1 + 1}} \exp\left( -\frac{a_2}{\tau} \right).
\]
This setup yields the following likelihood function,
\[
p_T(y | \theta) = \prod_{t=1}^T \int_{\mathbb{R}} \left[ \prod_{j=1}^{J_t} \frac{\exp(y_{t,j}(c_{t,j}^\top \beta + u))}{1 + \exp(c_{t,j}^\top \beta + u)} \right] \varphi(u; 0, \tau) \, \mathrm{d}u,
\]
The resulting posterior distribution is therefore given by:
\begin{align*}
    \pi(\theta) 
    &\propto p_T(y | \theta) \, p(\theta) \\
    &= \left\{ \prod_{t=1}^T \int_{\mathbb{R}} \prod_{j=1}^{J_t} \frac{\exp \left\{ y_{t,j}(c_{t,j}^\top \beta + x) \right\}}{1+\exp \left\{ c_{t,j}^\top \beta + x \right\}} \varphi(x; 0, \tau) \, \mathrm{d}x \right\} \\
    &\qquad  \varphi(\beta; 0_{8}, 10^4 I_{8}) \, p(\tau; 1, 1.5).
\end{align*}

\begin{remark}
The posterior \( \pi(\theta) \) cannot be evaluated pointwise due to intractable integrals in the likelihood. Hence, direct implementation of the MH algorithm is not feasible.
\end{remark}
\noindent
Within the PM and APM frameworks, the likelihood is estimated using the Classical Importance Sampling. The estimator takes the form
\[
\widehat{p}_{T,N}(y | \theta, U) = \prod_{t=1}^T  \frac{1}{N} \sum_{n=1}^N \varpi(y_t, U_{t,n}, \theta),
\]
where the importance weights are defined by
\[
\varpi(y_t, U_{t,n}, \theta) = \frac{g(y_t | U_{t,n}, \theta) f(U_{t,n} | \theta)}{s(U_{t,n} | y_t, \theta)}, \quad s(U_{t,n} | y_t, \theta) = \varphi(U_{t,n}; \hat{u}_t, \tau),
\]
where \( U_{t,n} \sim \mathcal{N}(\hat{u}_t, \tau) \) and \( \hat{u}_t = \argmax_{u}  g(y_t | u, \theta) f(u | \theta) \).

It can be shown that this likelihood estimator is unbiased and positive. Substituting the expressions for \( g \), \( f \), and \( s \), the estimator becomes
\begin{align}
    \widehat{p}_{T,N}(y | \theta, U) 
= \prod_{t=1}^T \frac{1}{N} \sum_{n=1}^N 
\frac{\left[ \prod_{j=1}^{J_t} \frac{\exp\{y_{t,j}(c_{t,j}^\top \beta + U_{t,n})\}}{1 + \exp\{c_{t,j}^\top \beta + U_{t,n}\}} \right] \varphi(U_{t,n}; 0, \tau)}{\varphi(U_{t,n}; \hat{u}_t, \tau)}.
\label{re:estim}
\end{align}

In this application, it is recalled that the parameter dimension is \( d = 9 \). To implement both the non adaptive and the APM algorithm, the following Gaussian proposal distribution is adopted:
\begin{align}
    q(\vartheta |\theta) = \varphi\left(\vartheta; \theta, \frac{l_\textopt^2}{d}{\Sigma}_p\right) = \varphi\left(\vartheta; \theta, \frac{2.2^2}{9}{\Sigma}_p\right),
    \label{intru2}
\end{align}
where the scaling parameter \( l_{\text{opt}} = 2.2 \) follows the recommendation of \cite{schmon_large-sample_2021}, 
and \( \Sigma_p \) is set to the value used in \cite{schmon_large-sample_2021}. Note that this value is provided only in the accompanying program of article~\cite{schmon_large-sample_2021} and not explicitly stated in the text 
(see Appendix~\ref{appendixE2} for the exact value of \( \Sigma_p \)).

Verifying Assumptions~\ref{hyp1}, \ref{hyp3}, and \ref{hyp4} is nontrivial due to the intractability of the posterior distribution. In contrast, Assumption~\ref{hyp2} holds whenever the likelihood estimator is constructed via Classical Importance Sampling.

The same procedure outlined in Section~\ref{sec:syntheticex} will be followed for the comparison between the non adaptive method and the APM algorithm, and for this reason, some of the explanatory details will be omitted. In this example, simulations were conducted using the same computational environment and practices as described for the synthetic data example in Section~\ref{sec:syntheticex}. Similarly, key parameters were aligned across both implementations. The APM algorithm (Algorithm~\ref{algo:apm}) was run for \(10^6\) iterations with burn-in of $40\%$, matching iterations (with same burn-in) used in Step~\ref{step3} of the non adaptive method. Both methods used an initial number of particles \( N_0 = N_1 = 10 \), a common starting point \( \theta_0 \), and the same step size \( a = a_1 = 1 \). Table~\ref{tab:method_comparison2} in Appendix~\ref{appendixE2} summarizes the corresponding settings.

As in Section~\ref{sec:syntheticex}, the quantitative comparison between the methods was carried out using posterior mean and variance estimates, averaged over 10 independent runs. In the real data example, since $d = 9$, the Euclidean norm of the posterior estimator was reported. For each run, the acceptance rate $\widehat{P}$, the estimated inefficiency factor $\widehat{\text{IF}}$, and the execution time were recorded. The inefficiency factor was obtained by first estimating it for each component and subsequently summing over all components. Trace and autocorrelation plots for each component of the parameter $\theta$ are presented in Appendix~\ref{appendixE4}.

The implementation steps for the non adaptive method (Section~\ref{sec:review}) on the real data example are now described. First and for each run, a preliminary execution of the PM algorithm was conducted using \( N_1 = 10 \) particles and the proposal distribution specified in Equation~\eqref{intru2}. A summary of the results from all runs is presented in Table~\ref{re:tab1}.
    \begin{table}[h]
    \centering
    \caption{Preliminary run of the PM algorithm on the real data example with \(10^5\) iterations and \(N_1 = 10\). Reported values are the mean and standard deviation of euclidean norms over 10 independent runs.}
    \begin{tabular}{l@{\hspace{3cm}}c}
    \hline
    Statistic & $ \text{Mean}\pm  \text{SD}$ \\
    \hline
    Norm of Posterior Mean $\hat\theta_{10}$ & $3.093 \pm 0.0184$ \\
    Norm of Posterior Covariance $\widehat{\Sigma}_{10}$ & $0.388 \pm 0.0182$ \\
    Acceptance Rate $\widehat P$ (\%) & $7.233 \pm 0.3386$ \\
    \hline
    \end{tabular}
    \label{re:tab1}
    \end{table}

 Recall that in this example the parameter dimension is \( d = 9 \), and the value \( \sigma_{\text{opt}} = 1.44 \) was chosen from Table 1 in \cite{schmon_large-sample_2021} for implementing Step~\ref{step2}. For each run, using the estimate \( \hat\theta_{10} \), the standard deviations \( \sigma_N(\hat \theta_{10}) \) of the additive noise \( \omega_N(\hat \theta_{10}) \) were estimated for various values of \( N \). These estimates were obtained via Monte Carlo using $10^4$ iterations for each $N$. The \textit{dichotomic search} interval was initialized as \( [10, 100] \) for the number of particles and terminated when the length of the final interval reached \( a_1 = 1 \). The results summary are in Table~\ref{tab:optimal_N_sigma2} showing the optimal $N$ of each run and the corresponding $\hat{\sigma}_{N}{(\hat\theta_{10})}$. The optimal number of particles was found within the range \( \{21, 22, 23\} \). Details of the estimations of each run are in Table~\ref{re:step2} in Appendix~\ref{appendixE3}.

\begin{table}[h]
\centering
\caption{Optimal $N$ and corresponding $\hat\sigma_{N_{\text{opt}}}(\hat\theta_{10})$ values for multiple independent runs.}
\label{tab:optimal_N_sigma2}
\begin{tabular}{@{}l *{10}{c} @{}}
\toprule
 {Run} & 1 & 2 & 3 & 4 & 5 & 6 & 7 & 8 & 9 & 10 \\
\hline
{Optimal} ${N}$ & 22 & 22 & 22 & 22 & 22 & 22 & 21 & 21 & 21 & 23 \\
$\hat{\sigma}_{N}{(\hat\theta_{10})}$ & 1.436 & 1.433 & 1.447 & 1.430 & 1.458 & 1.425 & 1.460 & 1.429 & 1.454 & 1.444 \\
\bottomrule
\end{tabular}
\end{table}

Finally, the PM algorithm was executed for each run using the corresponding optimal number of particles \( N_{\text{opt}} \) identified earlier. The Markov chain was initialized at \( \hat{\theta}_{10} \), and a Gaussian random walk proposal with variance \( \{2.2^2/9\} \widehat{\Sigma}_{10} \) was used, where \( \widehat{\Sigma}_{10} \) is the posterior covariance estimate from Step~\ref{step1}. The euclidean norm of posterior estimates is averaged over 10 independent runs and reported in the first column of Table~\ref{re:tab3}. 

In the remainder of this section, we detail the implementation of the APM algorithm on the real data example. Additional APM specific parameters were set as follows: the \textit{epoch} size was \( K = 100 \), the adaptation tolerance was \( \sigma_\text{e} = 0.015 \) and the adaptation probability was defined as \( p_j = 1/\sqrt{j} \).

As outlined in Section~\ref{sec:APM}, a transformation of the proposed auxiliary variables \( V \) was applied to estimate the likelihood at a fixed parameter value \( \hat{\theta}_\ell \). For each \textit{epoch} \( j \), with \( \ell = Kj \), the auxiliary variables of the \textit{epoch} $j$, \( V_{i,t,n}|\vartheta_i \sim \mathcal{N}(\hat {v}_{i,t}, \uptau_{i}) \), where \( i \in \{\ell-K+1, \ldots, \ell\} \), $t \in \{1, \ldots, T\} $, $n \in \{1, \ldots, N_{i-1}\}$, $\hat {v}_{i,t} = \argmax_{u}  g(y_t | u, \vartheta_i) f(u | \vartheta_i)$,  and $\uptau_i$ is the last component of $\vartheta_i$, were transformed as
\[
\widehat{V}_{i,t,n} = h(V_{i,t,n}, \vartheta_i, \hat\theta_{i-1}) = \sqrt{\frac{\hat \tau_{i-1}}{\uptau_{i}}}(V_{i,t,n} - \hat {v}_{i,t}) + \widehat{\hat{u}}_{i-1,t},
\]
where $\widehat{\hat{u}}_{i-1,t}=\argmax_{u}  g(y_t | u, \hat\theta_{i-1}) f(u | \hat\theta_{i-1})$, so that \( h(V_{t,n}) \sim \mathcal{N}\left(\widehat{\hat{u}}_{i-1,t}, \hat \tau_{i-1}\right) \). This enabled the evaluation of the log-likelihood estimator \( \log\{\widehat{p}_{T,N_{i-1}}(y|\hat\theta_{i-1},\widehat{V}_i)\} \) using Equation~\eqref{re:estim}, 
{\small
\begin{align*}
    \log\{\widehat{p}_{T,N_{i-1}}(y|\hat\theta_{i-1},\widehat{V}_i)\}
= \sum_{t=1}^T \log\left\{ \frac{1}{N_{i-1}} \sum_{n=1}^{N_{i-1}} 
\frac{\left[ \prod_{j=1}^{J_t} \frac{\exp\{y_{t,j}(c_{t,j}^\top \hat\beta_{i-1} + \widehat{V}_{i,t,n})\}}{1 + \exp\{c_{t,j}^\top \hat\beta_{i-1} + \widehat{V}_{i,t,n}\}} \right] \varphi(\widehat{V}_{i,t,n}; 0, \hat\tau_{i-1})}{\varphi(\widehat{V}_{i,t,n}; \widehat{\hat{u}}_{i-1,t}, \hat\tau_{i-1})} \right \},
\end{align*}
}
where $\hat\beta_{i-1}$ are the first eight components of $\hat\theta_{i-1}$ and $ \hat\tau_{i-1}$ is the last one. The standard deviation of this log-likelihood estimate was estimated using Equation~\eqref{hatsigma}.
\begin{table}[h]
\caption{Comparison of summary statistics and execution times, with same initial values for both approaches, between the PM and the APM algorithms using $10^6$ iterations (with 40\% burn-in) across 10 runs. Values are reported as mean $\pm$ standard deviation except for $N$ using median[min,max].}
  \centering
  \begin{tabular}{lcc}
    \hline
     & Non adaptive & Adaptive \\
    \hline
    Optimal N & $22[21,23]$ & $22[21,24]$\\
    Norm of Posterior Mean & $3.102 \pm 0.0042$ & $3.103 \pm 0.0033$ \\
    Norm of Posterior Variance & $0.386 \pm 0.0029$ & $0.386 \pm 0.0028$ \\
    Acceptance Rate $\widehat{P}$ (\%) & $14.316 \pm 0.3638$ & $14.156 \pm 0.1733$ \\
    Inefficiency factor $\widehat{IF}$ & $86.891 \pm 3.6771$ & $85.081 \pm 3.9266$ \\
    Execution time & {11h10min51s} $\pm$ 30min37s& {11h07min05s} $\pm$ 33min45s \\
    \hline
  \end{tabular}
  \label{re:tab3}
\end{table}

\begin{figure}
    \centering
    \includegraphics[width=0.9\linewidth]{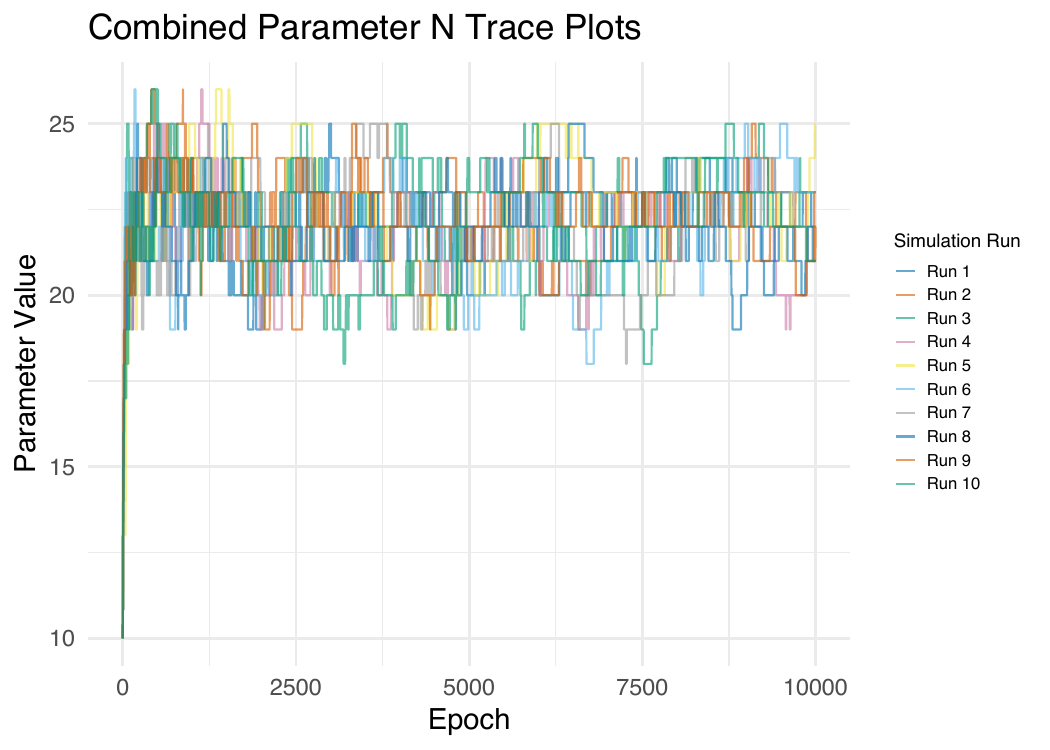}
    \caption{Trace of \( N \) using \( 10^4 \) \textit{epochs} of $100$ iterations across 10 APM runs applied on the real data Example.}
    \label{re:fig2}
\end{figure}

As shown in the second column of Table~\ref{se:tab3}, the Euclidean norms of the posterior means and variances match the estimates from Step~\ref{step3} of the non adaptive method. The inefficiency factor of the non adaptive method was estimated as \( \widehat{IF} = 86.891 \), slightly higher than that of the APM, giving the latter an efficiency advantage in this case.

Both methods exhibit similar computational performance, with average execution times of approximately 11~hours. 
The APM algorithm provides a gain of about 3 minutes compared to the total runtime of all steps in the non adaptive method. Its main advantage, however, lies in relying on a single process rather than multiple ones, thereby avoiding the overhead and 
complexity associated with the non adaptive approach.

Using a 40\% burn-in (i.e., retaining 600\,000 samples), the effective sample size per minute is computed as in Section~\ref{sec:syntheticex}. With this measure, the adaptive method attains a 3.5\% higher sampling efficiency (8.9 vs.\ 8.6 effective samples per minute).

Figure~\ref{se:fig2} shows the evolution of the number of particles \( N \) during each APM run. In all cases, \( N \) gradually approaches the corresponding optimal value, indicating effective adaptation. The median of the optimal values \( N_{\text{opt}} \), defined as the last number of particles reached in each run, ranged from 21 to 24. This range is slightly larger but consistent with that obtained via the non adaptive method.

Additional convergence diagnostics, including autocorrelation and trace plots of \( \theta \), are provided in Appendix~\ref{appendixE4}.

\section{Conclusion}
\label{sec:conclusion}
In this work, we have made three main contributions. 
First, we proposed an adaptive mechanism that overcomes the tuning difficulties commonly encountered in PM algorithms. Second, we established verifiable sufficient conditions ensuring the ergodicity of the resulting adaptive process. Third, we quantitatively illustrated the benefits of our approach through a series of numerical experiments.

Looking ahead, several avenues for future research remain open. An immediate extension would be to investigate central limit theorems (CLTs) for the APM in order to provide a more complete theoretical characterization. 
It is worth noting, however, that most existing CLTs for adaptive chains require stronger conditions. In particular, \cite{atchade_2012} shows that a CLT holds in the polynomial case if the adaptation random variable converges almost surely, which is not guaranteed under the current adaptation mechanism. Therefore, future work could either focus on establishing a CLT for polynomially ergodic adaptive chains under alternative conditions, or on modifying the adaptation scheme to ensure that the standard convergence assumptions are satisfied.  

Furthermore, in the non adaptive method, the proposal variance changes from Step~\ref{step1} to Step~\ref{step3}, whereas in the current APM scheme, the adaptation mechanism uses a single proposal variance, which may limit its flexibility. Future work would be to explore \textit{dual adaptation} schemes, where both the number of particles $N$ and the proposal distributions are adapted simultaneously.

\section*{Acknowledgments}
Special thanks to Radu Craiu (University of Toronto) for his contribution to the subject of this article.


\appendix
\section{Technical Preliminaries}
\label{appendixA}
\begin{definition}[Simultaneous Strong Aperiodic Geometrical Ergodicity]  
A family $\{P_\gamma\}_{\gamma \in \mathcal{Y}}$ of Markov kernels defined on a space state $(\mathcal{X}, \mathcal{B}(\mathcal{X}))$ is \textit{simultaneously strongly aperiodically and geometrically ergodic} if there exist $C \in \mathcal{B}(\mathcal{X})$, $V: \mathcal{X} \to [1,\infty)$, $\delta > 0$, $\lambda < 1$, and $b < \infty$, such that $\sup_C V < \infty$, and

\begin{enumerate}
    \item[(i)] for each $\gamma \in \mathcal{Y}$, there exists a probability measure $\nu_\gamma(\cdot)$ on $C$ with $P_\gamma(x,\cdot) \geq \delta\nu_\gamma(\cdot)$ for all $x \in C$, and
    
    \item[(ii)] $P_\gamma V(x) \leq \lambda V(x) + b \ind_C(x)$ for all $x \in \mathcal{X}$.
\end{enumerate}
\label{def2}
\end{definition}

\begin{definition}[Geometric Ergodicity]  
A $\phi$-irreducible, aperiodic Markov kernel $P$ defined on $(\mathcal{X}, \mathcal{B}(\mathcal{X}))$ with stationary distribution $\pi(\cdot)$ is \textit{geometrically ergodic} if there exist $\rho < 1$, $R<\infty$, and a function $V: \mathcal{X} \to [1,\infty)$ such that, for all $A \in \mathcal{B}(\mathcal{X})$, $n \geq 1$, and $x \in \mathcal{X}$,
\[
\|P^n(x, \cdot) - \pi(\cdot)\|_V \leq RV(x)\rho^n,
\]
where the $V$-norm of a measure $\mu$ is defined as $\|\mu\|_V=\sup_{f, |f|_\infty\leq V}|\mu(f)|$.
\label{def3}
\end{definition}

\begin{definition}[Polynomial Ergodicity]  
A $\phi$-irreducible, aperiodic Markov kernel $P$ defined on $(\mathcal{X}, \mathcal{B}(\mathcal{X}))$ with stationary distribution $\pi(\cdot)$ is \textit{polynomially ergodic} if there exist constants $R$, $0<\alpha \leq 1$, and a function $V: \mathcal{X} \to [1,\infty)$ such that, for any $0 \leq \beta \leq 1 -\alpha$ 
and $1 \leq \kappa \leq \alpha^{-1}(1 - \beta)$,
\[
\|P^n(x,\cdot) - \pi(\cdot)\|_{V^\beta} \leq R V^{\beta + \alpha\kappa}(x)(n+1)^{1-\kappa}.
\]
\label{def4}
\end{definition}

\begin{definition}[Simultaneous Minorization and Polynomial Drift Conditions]
Let $\{P_\gamma\}_{\gamma \in \mathcal{Y}}$ be a family of Markov transition kernels on a measurable space $(\mathcal{X}, \mathcal{B}(\mathcal{X}))$, where each $P_\gamma$ is $\phi$-irreducible, aperiodic, and admits a stationary distribution $\pi$. The family is said to be:
\begin{enumerate}[(i)]
    \item \textit{Simultaneous in $\gamma$ Polynomial Drift:} There exist a measurable set $C \subseteq \mathcal{X}$, a function $V: \mathcal{X} \to [1,\infty)$, a constant $\alpha \in (0, 1)$, and constants $b, c > 0$ such that, for all $\gamma \in \mathcal{Y}$,
    \[
    P_\gamma V(x) \leq V(x) - c V^{1-\alpha}(x) + b \ind_C(x), \quad \forall x \in \mathcal{X}.
    \]
    \label{polynomialdrift}
    
    \item \textit{Simultaneous in $\gamma$ Minorization:} For every level set $B = \{x \in \mathcal{X} : V(x) \leq b\}$ of $V$ (for some $b > 1$), there exist $\varepsilon_B > 0$ and a probability measure $\nu_B$ such that, for all $\gamma \in \mathcal{Y}$,
    \[
    P_\gamma(x, \cdot) \geq \varepsilon_B \ind_B(x)\, \nu_B(\cdot), \quad \forall x \in \mathcal{X}.
    \]
    \label{appendix:minorization}
\end{enumerate}
\label{def5}
\end{definition}

\begin{theorem}[Theorem 2.1, \cite{jarner_geometric_2000}]
The RWM algorithm satisfying Assumption~\ref{hyp1} is $\mu_{\text{Leb}}$-irreducible and aperiodic.
    \label{appendix:thm1}
\end{theorem}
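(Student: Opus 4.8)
The plan is to exploit the two structural facts provided by Assumption~\ref{hyp1}, noting at the outset that the super-exponential tail and regular-contour conditions are \emph{not} needed for irreducibility or aperiodicity (they drive the geometric drift in \cite{jarner_geometric_2000}); only positivity of $\pi$ and the local positivity of $q$ are used here. First, since $\log\pi$ and its gradient appear in Assumption~\ref{hyp1}, $\pi$ is strictly positive on $\reals^d$, so the acceptance probability $\alpha(\theta,\vartheta)=\min\{1,\pi(\vartheta)/\pi(\theta)\}$ never vanishes. Second, the symmetric proposal density satisfies $q(\vartheta-\theta)\geq\varepsilon_q$ whenever $|\vartheta-\theta|\leq\delta_q$. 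Together these give a lower bound on the absolutely continuous part of the RWM kernel,
\[
P(\theta,d\vartheta)\;\geq\; q(\vartheta-\theta)\,\alpha(\theta,\vartheta)\,d\vartheta\;\geq\;\varepsilon_q\,\alpha(\theta,\vartheta)\,\ind\{|\vartheta-\theta|\leq\delta_q\}\,d\vartheta,
\]
which is strictly positive throughout the ball $B(\theta,\delta_q)$.

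For $\mu_{\text{Leb}}$-irreducibility, I would first observe that from any $\theta$ the one-step kernel charges every positive-measure subset of $B(\theta,\delta_q)$. To reach a set $A$ with $\mu_{\text{Leb}}(A)>0$ located arbitrarily far from $\theta$, I would chain these local moves: by a convolution argument, the $n$-step sub-density $p^{(n)}(\theta,\vartheta)$ built from the accepted-move part is strictly positive for every $\vartheta$ with $|\vartheta-\theta|<n\delta_q$, since $\theta$ can be joined to $\vartheta$ by $n$ increments each of length $<\delta_q$ along which both $q$ and $\alpha$ remain positive. Because $\bigcup_{n}B(\theta,n\delta_q)=\reals^d$, for any such $A$ there is an $n$ with $\mu_{\text{Leb}}(A\cap B(\theta,n\delta_q))>0$, whence $P^n(\theta,A)>0$. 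This establishes $\mu_{\text{Leb}}$-irreducibility.

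For aperiodicity, I would show the rejection probability $\varrho(\theta)=1-\int q(\vartheta-\theta)\,\alpha(\theta,\vartheta)\,d\vartheta$ is strictly positive on a set of positive Lebesgue measure. Since $\pi$ is a non-constant $C^1$ density, the open set $G=\{\theta:\nabla\pi(\theta)\neq 0\}$ is non-empty and hence of positive measure; for $\theta\in G$, a short displacement along $-\nabla\pi(\theta)$ strictly decreases $\pi$, so $\{\vartheta\in B(\theta,\delta_q):\pi(\vartheta)<\pi(\theta)\}$ has positive Lebesgue measure and $\alpha(\theta,\vartheta)<1$ there, forcing $\varrho(\theta)>0$. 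A strictly positive holding probability on a positive-measure set precludes any cyclic decomposition: were the space split into disjoint $D_0,\dots,D_{p-1}$ with $P(\theta,D_{(i+1)\bmod p})=1$ for $\theta\in D_i$ and $p\geq 2$, then any $\theta\in D_i\cap G$ would give $P(\theta,D_i)\geq\varrho(\theta)>0$, contradicting $D_i\cap D_{i+1}=\emptyset$. Hence the chain is aperiodic.

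The main obstacle is the irreducibility chaining step. Because $q$ is controlled only near the origin, the $n$-step density must be assembled by composing local moves, and one must verify that the convolution of the locally positive sub-densities stays strictly positive on the expanding balls $B(\theta,n\delta_q)$. This is precisely where the everywhere-positivity of $\pi$ (hence of $\alpha$) is essential: it guarantees that no intermediate move is annihilated by a vanishing acceptance ratio, so the full $n$-step transition density does not degenerate.
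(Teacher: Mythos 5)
Your proof is correct, but note that there is no in-paper argument to compare it against: the paper states Theorem~\ref{appendix:thm1} purely as an imported result, citing \cite{jarner_geometric_2000}, and the ``proof'' in the paper is that citation. Your self-contained derivation reproduces, in essence, the standard argument from that reference: $\mu_{\text{Leb}}$-irreducibility by chaining the locally positive accepted-move sub-density $q(\vartheta-\theta)\,\alpha(\theta,\vartheta)\geq \varepsilon_q\,\alpha(\theta,\vartheta)$ over balls of radius $\delta_q$, and aperiodicity from a strictly positive rejection probability, which rules out a cyclic decomposition. Two points deserve slight tightening. First, in the chaining step, to conclude that the $n$-step sub-density is strictly positive on $\{|\vartheta-\theta|<n\delta_q\}$ you should invoke the joint continuity and strict positivity of $(\theta,\vartheta)\mapsto\alpha(\theta,\vartheta)=\min\{1,\pi(\vartheta)/\pi(\theta)\}$ on a compact tube of intermediate points around the segment joining $\theta$ to $\vartheta$; this is where strict positivity of $\pi$ enters, and your inference that Assumption~\ref{hyp1} forces $\pi>0$ everywhere is sound, since $\nabla\log\pi$ must be defined. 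Second, in the aperiodicity step, a cyclic decomposition $D_0,\dots,D_{p-1}$ only covers the state space up to a $\mu_{\text{Leb}}$-null set, so the existence of a point $\theta\in D_i\cap G$ requires $\mu_{\text{Leb}}(G)>0$; you do have this, since $G=\{\theta:\nabla\pi(\theta)\neq 0\}$ is open and non-empty (an integrable $C^1$ density on $\reals^d$ cannot have identically vanishing gradient), but the null-set caveat should be stated explicitly. Your opening observation that the super-exponential tail and regular-contour conditions play no role here, being needed only for the geometric drift in \cite{jarner_geometric_2000}, is also accurate and clarifies exactly which parts of Assumption~\ref{hyp1} this theorem consumes.
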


\begin{theorem}[Theorem 1, \cite{andrieu_pseudo-marginal_2009}]
Let $P$ be a $\phi$-irreducible and aperiodic MH chain with invariant distribution $\pi$. Then, for any $N \geq 1$ such that $\rho_N(\theta, w) > 0$ for all $(\theta, w)$ (as defined in Equation~\eqref{varrho_repar}), the PM kernel $P_N$ is also $\phi$-irreducible and aperiodic.
\label{appendix:thm2}
\end{theorem}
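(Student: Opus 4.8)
The plan is to transfer $\phi$-irreducibility and aperiodicity from the marginal chain $P$ on $\Theta$ to the extended chain $P_N$ on $\Theta \times \mathcal{W}$, using the fact that $P_N$ is itself a Metropolis--Hastings kernel (see~\eqref{equa5}) with proposal $(\vartheta,z) \sim q(\vartheta|\theta)\,\mathcal{Q}_{N,\vartheta}(\mathrm{d}z)$ and acceptance probability $\min\{1, r(\theta,\vartheta)\,z/w\}$. The key structural observation is that $z$, drawn from $\mathcal{Q}_{N,\vartheta}$ on $(0,\infty)$, is almost surely strictly positive; hence the extended acceptance probability is strictly positive exactly when the marginal acceptance $\min\{1,r(\theta,\vartheta)\}$ is, namely when $r(\theta,\vartheta) > 0$. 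Consequently the accepted moves of $P_N$ have the same support in the $\vartheta$-coordinate as those of $P$, and each accepted move refreshes the weight by a fresh draw from $\mathcal{Q}_{N,\vartheta}$.

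For $\phi$-irreducibility I would take as candidate irreducibility measure on the extended space $\bar\phi(\mathrm{d}\vartheta,\mathrm{d}z) = \phi(\mathrm{d}\vartheta)\,\mathcal{Q}_{N,\vartheta}(\mathrm{d}z)$, where $\phi$ is an irreducibility measure for $P$. Given a set $\bar A$ with $\bar\phi(\bar A) > 0$, Fubini's theorem produces a $\phi$-positive set $A \subseteq \Theta$ whose sections $\bar A_\vartheta = \{z : (\vartheta,z) \in \bar A\}$ satisfy $\mathcal{Q}_{N,\vartheta}(\bar A_\vartheta) > 0$ for $\vartheta \in A$. Since $P$ is $\phi$-irreducible, from any starting $\theta$ there is an $n$ and a positive-probability $P$-trajectory reaching $A$; I would lift this trajectory to $P_N$ one step at a time, realising each accepted move with positive $P_N$-probability (guaranteed by $z > 0$ almost surely) and each rejection with positive probability via the hypothesis $\rho_N(\theta,w) > 0$. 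Taking the final step into $A$ to be an accepted move, the terminal weight is a fresh draw from $\mathcal{Q}_{N,\vartheta}$ and therefore lands in $\bar A_\vartheta$ with positive probability, giving $P_N^n(\theta,w;\bar A) > 0$ and hence $\bar\phi$-irreducibility.

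Aperiodicity is then immediate from the standing assumption. Because $\rho_N(\theta,w) > 0$ for every $(\theta,w)$, the holding probability obeys $P_N(\theta,w;\{(\theta,w)\}) \geq \rho_N(\theta,w) > 0$. For a $\phi$-irreducible chain, a strictly positive self-transition probability on an accessible small set precludes any cyclic decomposition of the state space, forcing the period to equal $1$; I would make this precise by combining the minorization on a small set $C$ (available from $\phi$-irreducibility) with the positive self-loop on $C$.

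I expect the $\phi$-irreducibility step to be the main obstacle, specifically the rigorous lifting of a positive-probability $P$-trajectory to $P_N$ while controlling the terminal weight. The subtlety is that the extended acceptance probability $\min\{1, r(\theta,\vartheta)\,z/w\}$ depends on both the current weight $w$ and the proposed weight $z$, not merely on $(\theta,\vartheta)$, so the $\theta$-marginal of $P_N$ is not itself Markov; this is circumvented by the equivalence of acceptance positivity noted above together with the weight refreshment, which decouples the terminal $z$ from the evolution of $\theta$. The additional bookkeeping of replacing point trajectories by small neighborhoods is routine. Aperiodicity, by contrast, requires only the positive-rejection hypothesis and is essentially immediate.
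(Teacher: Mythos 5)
The paper does not prove this statement: Theorem~\ref{appendix:thm2} is imported verbatim as Theorem~1 of \cite{andrieu_pseudo-marginal_2009} and listed among the technical preliminaries, with only a remark noting that the measure-domination hypothesis of the original is automatic here because $W_N(\theta)=\widehat{p}_{N,T}(y|\theta,U)/p_T(y|\theta)$ is explicitly defined. So there is no in-paper argument to compare against; what you have done is reconstruct the Andrieu--Roberts argument, and your reconstruction is essentially sound. The two load-bearing observations are correct: since $\mathcal{Q}_{N,\vartheta}$ is supported on $(0,\infty)$, the factor $\min\{1,r(\theta,\vartheta)z/w\}$ is positive for $\mathcal{Q}_{N,\vartheta}$-a.e.\ $z$ precisely when $r(\theta,\vartheta)>0$, so the accept part of $P_N$ charges the same $\vartheta$-sets as the accept part of $P$; and the nested $z$-integrals along a lifted trajectory are all positive for the same reason, so positivity of the marginal $k$-step accept kernel on $A=\{\vartheta:\mathcal{Q}_{N,\vartheta}(\bar A_\vartheta)>0\}$ transfers to positivity of $P_N^k(\theta,w;\bar A)$. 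The aperiodicity argument from $\rho_N>0$ (a positive holding probability is incompatible with a cyclic decomposition of a $\phi$-irreducible chain) is standard and correct, and correctly uses only the stated hypothesis rather than aperiodicity of $P$.

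Two points deserve tightening if you were to write this out in full. First, the terminal weight after an accepted move into $\vartheta$ is not a ``fresh draw from $\mathcal{Q}_{N,\vartheta}$'' but a draw from $\mathcal{Q}_{N,\vartheta}$ tilted by $\min\{1,r(\cdot)z/z_{\text{prev}}\}$; since that tilt is a.e.\ positive whenever $r>0$, the tilted law has the same null sets as $\mathcal{Q}_{N,\vartheta}$, so your conclusion survives, but the justification should go through equivalence of null sets rather than equality of laws. Second, your scheme requires the trajectory to end with an \emph{accepted} move into $A$, which is not automatic when the starting point $\theta$ already lies in $A$ and the marginal chain can satisfy $P^n(\theta,A)>0$ by pure rejection; in the present setting $\phi=\mu_{\mathrm{Leb}}$ is atomless, so replacing $A$ by $A\setminus\{\theta\}$ forces at least one acceptance and closes this gap, but the step should be stated.
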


\begin{remark}
In Theorem~\ref{appendix:thm2}, the authors assume that the weights are well-defined using the concept of measure domination. This assumption is not required in our setting, as the weights are explicitly defined as $W_N(\theta) = \widehat{p}_{N,T}(y| \theta, U) / p_T(y| \theta)$.
\end{remark}

\begin{theorem}[Theorem 38, \cite{andrieu_convergence_2015}]
Let \( P_N \) denote a PM kernel with distributions \( Q_{N,\theta}(dw) \) satisfying the moment condition
\begin{align}
    M_{W_N} := \underset{\theta \in \Theta}{\mathrm{ess\,sup}} \int (w^{-\alpha'} \vee w^{\beta'})\, Q_{N,\theta}(dw) < \infty,
\label{appendix:moments}
\end{align}
for some constants \( \alpha' > 0 \) and \( \beta' > 1 \). Assume that the marginal algorithm is a RWM with invariant density \( \pi \) and proposal density \( q \) satisfying Assumption~\ref{hyp1}.

Define the function \( V: \Theta \times \mathcal{W} \to [1,\infty) \) as
\[
V(\theta,w) := c_\pi \pi^{-\eta}(\theta)\left(w^{-\alpha} \vee w^{\beta}\right), \quad \text{with} \quad c_\pi := \sup_{\vartheta \in \Theta} \pi(\vartheta),
\]
for constants \( \eta \in (0, \alpha' \wedge 1 \wedge \beta'-1) \), \( \alpha \in (\eta, \alpha'] \), and \( \beta \in (1, \beta' - \eta) \).

Then, there exist constants \( \overline{w}, M, b \in [1, \infty) \), \( \underline{w} \in (0,1] \), and \( \delta_V > 0 \) such that
\[
P_N V(\theta,w) \leq 
\begin{cases}
V(\theta,w) - \delta_V V^{(\beta-1)/\beta}(\theta,w), & \text{if } (\theta,w) \notin C, \\
b, & \text{if } (\theta,w) \in C,
\end{cases}
\]
where the set \( C \subset \Theta \times \mathcal{W} \) is defined by
\[
C := \left\{ (\theta, w) \in \Theta \times \mathcal{W} : |\theta| \leq M, \, w \in [\underline{w}, \overline{w}] \right\}.
\]
\label{appendix:thm3}
\end{theorem}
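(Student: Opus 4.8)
The plan is to reproduce the subgeometric drift analysis of \cite{andrieu_convergence_2015}, of which this statement is Theorem~38, applied to the reparameterized kernel $P_N$ in~\eqref{equa5}. The starting point is the exact increment identity
\[
P_N V(\theta,w) - V(\theta,w) = \int q(\vartheta|\theta)\,\min\Bigl\{1, r(\theta,\vartheta)\tfrac{z}{w}\Bigr\}\bigl[V(\vartheta,z)-V(\theta,w)\bigr]\,\mathcal{Q}_{N,\vartheta}(\mathrm{d}z)\,\mathrm{d}\vartheta,
\]
obtained by writing $P_N V = \int q\,\alpha_N V + \rho_N V$ and substituting $\rho_N = 1 - \int q\,\alpha_N$. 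The goal is to show the right-hand side is at most $-\delta_V V^{(\beta-1)/\beta}(\theta,w)$ once $(\theta,w)$ leaves a set of the form $C=\{|\theta|\le M,\ w\in[\underline w,\bar w]\}$; since $(\beta-1)/\beta = 1-1/\beta$ with $1/\beta\in(0,1)$, this is exactly a polynomial drift. I would analyse the increment in three asymptotic regimes, $w\to\infty$, $w\to 0$, and $|\theta|\to\infty$, and check that the dominant negative contribution in each matches the target rate.

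The heart of the argument, and the reason only a polynomial (not geometric) rate can hold, is the large-weight regime. Splitting the inner integral at the acceptance threshold $z=w/r(\theta,\vartheta)$, the fresh-weight gain $\int q\,\min\{1,rz/w\}V(\vartheta,z)\,\mathcal{Q}_{N,\vartheta}(\mathrm{d}z)\,\mathrm{d}\vartheta$ is controlled via the moment bound~\eqref{appendix:moments}: on $\{z<w/r\}$ one replaces the minimum by $rz/w$ and bounds $\int z^{1+\beta}\mathcal{Q}_{N,\vartheta}(\mathrm{d}z)$ by truncation against $M_{W_N}$ (using $z^{1+\beta}\le z^{\beta'}(w/r)^{1+\beta-\beta'}$ when $1+\beta>\beta'$), while on $\{z\ge w/r\}$ the tail $\int_{z\ge w/r}z^{\beta}\mathcal{Q}_{N,\vartheta}(\mathrm{d}z)$ is likewise $O(w^{\beta-\beta'})$. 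Because $\beta<\beta'$, this entire gain is $o(w^{\beta-1})$. The loss term $-V(\theta,w)\,p_N(\theta,w)$, with $p_N$ the acceptance probability, then dominates: since $\E[W_N(\vartheta)]=\int z\,\mathcal{Q}_{N,\vartheta}(\mathrm{d}z)=1$, one obtains $p_N(\theta,w)\sim C(\theta)/w$ with $C(\theta)=\pi(\theta)^{-1}\int q(\vartheta|\theta)\pi(\vartheta)\,\mathrm{d}\vartheta$, so that $-V(\theta,w)p_N(\theta,w)\sim -c_\pi\pi^{-\eta}(\theta)C(\theta)\,w^{\beta-1}$, which is precisely of the form $-\delta_V V^{(\beta-1)/\beta}$. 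The regime $w\to 0$ is easier: there acceptance is near one, a fresh draw jumps away from the origin toward an $O(1)$ value, and $V(\theta,w)\sim w^{-\alpha}\to\infty$ is slashed, giving an even stronger negative drift in the $w^{-\alpha}$ coordinate.

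For the parameter tails I would freeze $w$ in a compact interval and invoke the classical random-walk Metropolis drift of \cite{jarner_geometric_2000}: Assumption~\ref{hyp1} (super-exponentially decaying tails with regular contours and a symmetric proposal bounded below near the origin) forces the proposal, with overwhelming probability, to point inward, so accepted moves decrease $\pi^{-\eta}(\theta)$ at a rate that beats the bounded weight fluctuations. Because $z$ is drawn afresh with $M_{W_N}<\infty$, its contribution to the $\theta$-increment is uniformly integrable and cannot overturn the inward push, so $P_N V - V$ is strictly negative for $|\theta|$ large, uniformly over $w$ in the chosen interval.

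The final step is to glue the three estimates: I would fix $[\underline w,\bar w]$ and $M$ so that outside $C$ at least one regime is active with a strictly negative leading term that the lower-order contributions of the other coordinates cannot cancel, then set $\delta_V$ to the smallest of the three regime constants and $b=\sup_{C}P_N V<\infty$. The main obstacle is uniformity in $\theta$: the constant $C(\theta)$ and the super-exponential rate must be bounded below simultaneously, and the cross terms, large $w$ with moderate $\theta$ and large $\theta$ with moderate $w$, must be shown not to spoil the matching of the exponent $(\beta-1)/\beta$. This is exactly where the admissible ranges $\eta\in(0,\alpha'\wedge1\wedge(\beta'-1))$, $\alpha\in(\eta,\alpha']$, and $\beta\in(1,\beta'-\eta)$ enter: they leave an integrability margin ($\beta<\beta'$, $\alpha\le\alpha'$) so the truncation bounds close, while keeping $V\ge1$ and the drift exponent strictly below one.
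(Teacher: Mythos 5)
You should first know what you are being measured against: the paper does not prove Theorem~\ref{appendix:thm3} at all. It is imported verbatim (in adapted notation) as Theorem~38 of \cite{andrieu_convergence_2015} and invoked as a black box in the proof of Theorem~\ref{thm1}; the paper's entire ``proof'' is the citation. So the comparison must be with the original argument in \cite{andrieu_convergence_2015}, and at the level of architecture your sketch reconstructs it faithfully: the exact increment identity; the large-$w$ regime in which $\int z\,\mathcal{Q}_{N,\vartheta}(\mathrm{d}z)=1$ forces the acceptance probability to scale as $C(\theta)/w$, so the loss term is of order $\pi^{-\eta}(\theta)\,C(\theta)\,w^{\beta-1}$, which is precisely the $V^{(\beta-1)/\beta}$ rate and the structural reason the drift is polynomial rather than geometric; the truncation of the gain against the moment bound \eqref{appendix:moments}, with the margins $\beta<\beta'-\eta$ and $\alpha\le\alpha'$ absorbing the factor $\pi^{-\eta}(\vartheta)\,r(\theta,\vartheta)^{\beta'-\beta}$ and $\eta<1$ taming $\pi^{-\eta}(\vartheta)$ through the acceptance weight; and the Jarner--Hansen inward push of \cite{jarner_geometric_2000} for the $\theta$-tails under Assumption~\ref{hyp1}. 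You also correctly identify where each parameter restriction is consumed.

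The genuine gap is the one you flag but do not close: uniformity over the complement of the \emph{product} set $C=\{\lvert\theta\rvert\le M\}\times[\underline{w},\bar{w}]$, which contains joint limits $\lvert\theta\rvert\to\infty$ with $w$ large. Your single split of the inner integral at $z=w/r$ does not survive these limits as stated: integrating your on-$\{z<w/r\}$ bound over $\vartheta$ leaves a gain of order $\pi(\theta)^{-(\beta'-\beta)}w^{\beta-\beta'}$, while the loss is of order $\pi(\theta)^{-\eta}\,C(\theta)\,w^{\beta-1}$; since $\beta'-\beta>\eta$, the ratio of these crude bounds diverges as $\lvert\theta\rvert\to\infty$ at fixed large $w$, so ``$o(w^{\beta-1})$'' is not uniform in $\theta$. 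The repair, which is what the source actually does, is to interleave the weight truncation with the Jarner--Hansen geometry: split $\vartheta$ into inward ($r\ge1$) and outward ($r\le1$) moves, bound the inward gain by $\min\{1,\cdot\}\le1$ together with $\pi^{-\eta}(\vartheta)\le\pi^{-\eta}(\theta)$, bound the outward gain using $\pi^{\beta'-\beta-\eta}(\vartheta)\le\pi^{\beta'-\beta-\eta}(\theta)$, and lower-bound the loss via $r\ge1$ on the inward $\delta_q$-ball, whose volume is bounded below uniformly only because of the regular-contours part of Assumption~\ref{hyp1} --- this is also what makes $\inf_\theta C(\theta)>0$, which your asymptotic $p_N\sim C(\theta)/w$ silently needs. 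Likewise, in the small-$w$ regime the integrability of $\pi^{-\eta}(\vartheta)$ against $q$ must be extracted from the acceptance weight (the $\eta<1$ device), which you only gesture at. None of these steps requires a new idea, but they are the actual technical content of Theorem~38; as written, your proposal is a correct program with the right mechanisms and exponents, not yet a proof.
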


\begin{lemma}[Lemma 3.2, \cite{atchade_limit_2010}]
Assume that the invariant distribution of $P_\gamma$, $\pi$, is bounded from below and from above on compact sets. Then, if $C$ is a compact subset of $\mathcal{X}$ with $\mu_\text{Leb}(C) > 0$, there exist a probability measure $\nu$ on $\mathcal{X}$, a positive constant $\varepsilon$ and a set $C \in \mathcal{X}$ such that for any $x\in \mathcal{X}$,
\[
P_\gamma(x, \cdot) \geq \varepsilon \ind_C(x)  \nu(\cdot).
\]
\label{appendix:lemma1}
\end{lemma}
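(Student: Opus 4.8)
The plan is to exploit the Metropolis--Hastings structure of $P_\gamma$ and reduce the minorization to a uniform lower bound on the density of accepted moves over the compact set $C$. Writing the kernel as $P_\gamma(x,dy)=p_\gamma(x,y)\,dy+\rho_\gamma(x)\delta_x(dy)$, where $p_\gamma(x,y)=q(y|x)\min\{1,\pi(y)q(x|y)/[\pi(x)q(y|x)]\}=\min\{q(y|x),\pi(y)q(x|y)/\pi(x)\}$ is the sub-probability density of accepted proposals, I would discard the nonnegative rejection (Dirac) part and retain only the absolutely continuous part, so that for any measurable $A$ one has $P_\gamma(x,A)\geq\int_{A\cap C}p_\gamma(x,y)\,dy$.

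The core step is to bound $p_\gamma(x,y)$ from below, uniformly for $x,y\in C$. By hypothesis $\pi$ is bounded above and below on the compact set $C$, say $0<\pi_-\leq\pi(z)\leq\pi_+<\infty$ for all $z\in C$. The proposal density $q$ is continuous and strictly positive (Gaussian random walk), hence on the compact product $C\times C$ it attains $0<q_-\leq q(y|x)\leq q_+<\infty$. Combining these, for all $x,y\in C$, $p_\gamma(x,y)\geq\min\{q_-,\pi_-q_-/\pi_+\}=q_-\pi_-/\pi_+=:c_0>0$. I would then define the probability measure $\nu(A):=\mu_{\text{Leb}}(A\cap C)/\mu_{\text{Leb}}(C)$, which is well defined since $0<\mu_{\text{Leb}}(C)<\infty$, and set $\varepsilon:=c_0\,\mu_{\text{Leb}}(C)>0$. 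For $x\in C$ this gives $P_\gamma(x,A)\geq c_0\,\mu_{\text{Leb}}(A\cap C)=\varepsilon\,\nu(A)$, while for $x\notin C$ the claimed bound is vacuous because $\ind_C(x)=0$; together these yield $P_\gamma(x,\cdot)\geq\varepsilon\,\ind_C(x)\,\nu(\cdot)$.

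Because the constants $\pi_-,\pi_+,q_-,q_+$, and hence $c_0$, $\varepsilon$, and the measure $\nu$, depend only on $C$, on $\pi$, and on the (fixed) proposal, and not on $\gamma$, the bound holds simultaneously across the family $\{P_\gamma\}$, which is exactly the form needed to feed into the simultaneous minorization condition of Definition~\ref{def5}. The main obstacle I anticipate is securing the uniform lower bound $q_-$ on $C\times C$ when $C$ has large diameter: Assumption~\ref{hyp1} only guarantees $q>\varepsilon_q$ on a neighborhood of the origin, so a far-apart pair $x,y\in C$ is not covered directly. For the Gaussian proposal this is harmless, since continuity and strict positivity on the compact set $C\times C$ force a positive minimum; in a more general setting one would instead support $\nu$ on a small sub-ball and exploit the full support of $q$, or cover $C$ by finitely many $\delta_q$-balls. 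The second delicate point, which surfaces only when the abstract lemma is applied to the genuine PM kernel, is that the acceptance ratio then carries the weight factor $z/w$ and the proposal draws $z\sim\mathcal{Q}_{N,\vartheta}$ are $N$-dependent; controlling these uniformly is precisely why $C$ is taken to be a level set of $V$ on which $w\in[\underline{w},\overline{w}]$ is bounded away from $0$ and $\infty$ (Lemma~\ref{lemma2}).
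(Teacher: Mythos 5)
This lemma is imported verbatim from \cite{atchade_limit_2010}; the paper offers no proof of its own, so the relevant comparison is with the standard argument in that source, and your proof is exactly that argument. You drop the rejection (Dirac) part of the Metropolis--Hastings kernel, lower-bound the accepted-move density $p_\gamma(x,y)=\min\{q(y|x),\,\pi(y)q(x|y)/\pi(x)\}$ on $C\times C$ by $q_-\pi_-/\pi_+$ using the two-sided bound on $\pi$ over compact sets together with positivity and continuity of the proposal, and take $\nu$ to be normalized Lebesgue measure on $C$ with $\varepsilon = q_-\pi_-\,\mu_{\text{Leb}}(C)/\pi_+$. This is correct, and your observation that none of the constants depends on $\gamma$ is precisely what makes the bound usable in the simultaneous minorization of Definition~\ref{def5}.

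Two of your side remarks deserve a precision. First, the fallback you sketch for proposals satisfying only the local bound of Assumption~\ref{hyp1} --- covering $C$ by finitely many $\delta_q$-balls --- does not rescue the stated one-step bound: if $q$ has bounded support and $\operatorname{diam}(C) > 2\delta_q$, no single measure $\nu$ is reachable in one step from every $x \in C$ (the intersection of the reachable neighborhoods is empty, and the Dirac part moves with $x$), so a covering argument yields only petiteness or $m$-step smallness. The lemma genuinely needs a proposal density bounded below on all of $C-C$, which the Gaussian proposals used throughout the paper do provide, as you note. Second, you rightly flag that the paper ultimately invokes this lemma for the PM kernel $P_N$ on $\Theta\times\mathcal{W}$, whose $w$-component proposal $\mathcal{Q}_{N,\vartheta}$ is $N$- and $\vartheta$-dependent and need not dominate any fixed reference measure; a Lebesgue-minorization argument like yours does not transfer automatically to that component. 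Neither your proof nor the paper's citation resolves this extension explicitly, so on that point your attempt matches, rather than falls short of, the paper's own level of detail.
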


\begin{corollary}[Corollary A.2, \cite{atchade_limit_2010}]
Let \( P \) be a \(\phi\)-irreducible and aperiodic Markov kernel on \( (\mathcal{X}, \mathcal{B}(\mathcal{X})) \). Suppose there exist constants \( b, c > 0 \), a measurable set \( C \), an unbounded measurable function \( V \colon \mathcal{X} \to [1, \infty) \), and \( 0 < \alpha \leq 1 \) such that
\[
PV(x) \leq V(x) - c V^{1-\alpha}(x) + b\, \ind_C(x).
\]
If, in addition, all level sets of \( V \) are 1-small, then there exist a level set \( B \subset \mathcal{X} \), constants \( \varepsilon_B, c_B > 0 \), and a probability measure \( \nu_B \) such that
\begin{align*}
P(x,\cdot) &\geq \ind_B(x)\, \varepsilon_B \nu_B(\cdot), \quad
PV(x) \leq V(x) - c_B V^{1-\alpha}(x) + b\, \ind_B(x),
\end{align*}
with $\sup_B V < \infty$, $\nu_B(B) > 0$, and $c_B \inf_{B^c} V^{1-\alpha} \geq b.$
\label{appendix:Corollary1}
\end{corollary}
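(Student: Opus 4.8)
The plan is to prove this as a standard conversion of a polynomial drift condition toward a level set of $V$, combined with the minorization inherited from smallness; for $0<\alpha<1$ the only quantitative work is to choose the level defining $B$ large enough to absorb the additive constant $b$. Concretely, I would set $c_B := c/2$ and $B := \{x\in\mathcal{X}: V(x)\le d\}$, where $d$ is fixed so that $(c/2)\,d^{1-\alpha}\ge b$. This is possible because $V$ is unbounded, so its level sets are proper, and because $t\mapsto t^{1-\alpha}$ is increasing and unbounded on $[1,\infty)$; hence $V^{1-\alpha}$ grows without bound on $B^c=\{V>d\}$, and this slack is exactly what lets me trade the indicator of the (possibly large) set $C$ for the indicator of $B$.

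The core estimate splits the state space along $B$. On $B^c$ one has $V^{1-\alpha}>d^{1-\alpha}$, and since $\ind_C\le 1$,
\[
PV(x)\le V(x)-cV^{1-\alpha}(x)+b\,\ind_C(x)\le V(x)-\tfrac{c}{2}V^{1-\alpha}(x)-\big(\tfrac{c}{2}d^{1-\alpha}-b\big)\le V(x)-c_BV^{1-\alpha}(x),
\]
where the final inequality uses $(c/2)d^{1-\alpha}-b\ge 0$. On $B$ one simply invokes $c\ge c_B$ and $\ind_C\le 1$ to obtain $PV(x)\le V(x)-c_BV^{1-\alpha}(x)+b$. Together these give the required drift $PV\le V-c_BV^{1-\alpha}+b\,\ind_B$ on all of $\mathcal{X}$, while the technical inequality $c_B\inf_{B^c}V^{1-\alpha}\ge (c/2)\,d^{1-\alpha}\ge b$ falls out of the very same choice of $d$. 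It is worth noting that this argument requires neither $C\subseteq B$ nor $\sup_C V<\infty$.

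It remains to produce the minorization. Since $B$ is a level set of $V$, the hypothesis that all level sets are $1$-small immediately yields $\varepsilon_B>0$ and a probability measure $\nu_B$ with $P(x,\cdot)\ge\varepsilon_B\,\nu_B(\cdot)$ for $x\in B$, and $\sup_B V\le d<\infty$ is automatic. The one genuinely delicate point — and the step I expect to be the main obstacle — is securing $\nu_B(B)>0$, which is \emph{not} forced by smallness alone, since a one-step minorizing measure may in principle assign no mass to $B$. Here I would exploit $\phi$-irreducibility: because $V<\infty$ pointwise, the level sets increase to $\mathcal{X}$, so enlarging $d$ renders $B$ accessible (positive maximal irreducibility measure), after which the minorizing measure of an accessible $1$-small set can be arranged to charge $B$; in the MCMC setting of interest the measure supplied by Lemma~\ref{appendix:lemma1}, being built from a reference measure whose density is bounded below on the compact level set, satisfies $\nu_B(B)>0$ by construction. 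Finally, the boundary case $\alpha=1$ lies in the classical geometric regime and would be handled separately, taking $B\supseteq C$ directly rather than through the absorption argument above.
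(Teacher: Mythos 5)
The paper itself offers no proof of this statement: it is imported verbatim as Corollary~A.2 of \cite{atchade_limit_2010} in the technical appendix, so your attempt can only be measured against the standard argument in that reference. Your drift manipulation reproduces it correctly: taking $c_B=c/2$ and $B=\{V\le d\}$ with $(c/2)d^{1-\alpha}\ge b$, splitting along $B$, and absorbing $b\,\ind_C$ into the surplus $-(c/2)V^{1-\alpha}\le -(c/2)d^{1-\alpha}$ on $B^c$ is exactly the right conversion, and your side conditions $\sup_B V\le d<\infty$ and $c_B\inf_{B^c}V^{1-\alpha}\ge b$ fall out as you say; your observation that neither $C\subseteq B$ nor $\sup_C V<\infty$ is needed is also correct.

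The soft spot you flagged, $\nu_B(B)>0$, is genuine, but your proposed repair does not close it. One-step smallness of $B$ places no constraint on where the minorizing measure lives, and accessibility of $B$ under $\phi$-irreducibility does not fix this at the one-step level: the classical repair (Meyn--Tweedie style) only yields an $m$-step minorization $P^m(x,\cdot)\ge\varepsilon'\nu'(\cdot)$ with $\nu'(B)>0$, which is not the object consumed by Proposition~\ref{appendix:proposition1} as stated. In fact one can construct $\phi$-irreducible, aperiodic, downward-drifting chains on $[1,\infty)$ with absolutely continuous transitions, all of whose level sets are $1$-small, yet where the common component of $\{P(x,\cdot):V(x)\le d\}$ is supported in $\{V>d\}$ for every $d$ (take the only overlapping component to have density proportional to $e^{x-t}\ind\{t\ge x\}$), so that every admissible $\nu_B$ gives $\nu_B(B)=0$; hence ``can be arranged to charge $B$'' requires either an extra hypothesis or a concrete construction, not irreducibility alone. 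Your closing remark is the correct resolution in the context of this paper: here the minorization is never obtained from abstract smallness, but from Lemma~\ref{appendix:lemma1} applied to the compact, positive-Lebesgue-measure level sets established in Lemma~\ref{lemma2}, where $\nu_B$ is built from a reference measure bounded below on $B$ and therefore charges $B$ by construction. Finally, deferring $\alpha=1$ is more than a formality: there $V^{1-\alpha}\equiv 1$, so $c_B\inf_{B^c}V^{1-\alpha}\ge b$ forces $c_B\ge b$, which neither your absorption step nor ``taking $B\supseteq C$'' delivers when $c<b$; this is harmless downstream only because the paper invokes the corollary with exponent $1-\alpha=(\beta-1)/\beta$, i.e.\ $\alpha=1/\beta\in(0,1)$.
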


\begin{proposition}[Proposition A.1, \cite{atchade_limit_2010}]
Let $P$ be a $\phi$-irreducible and aperiodic transition kernel on $(\mathcal{X}, \mathcal{B}(\mathcal{X}))$.
\begin{enumerate}[(i)]
    \item Assume that there exist a probability measure $\nu$ on $\mathcal{X}$, positive constants $\varepsilon, b, c$, a measurable set $\mathcal{C}$, a measurable function $V:\mathcal{X} \to [1,+\infty)$ and $0 < \alpha \leq 1$ such that
    \begin{align}
    P(x,\cdot) &\geq \varepsilon \ind_\mathcal{C}(x)\nu(\cdot), 
    \qquad PV \leq V - cV^{1-\alpha} + b\ind_\mathcal{C}. 
    \end{align}
    Then $P$ possesses an invariant probability measure $\pi$ and $\pi(V^{1-\alpha}) < +\infty$.
    \item Assume, in addition, that $c \inf_{\mathcal{C}^c} V^{1-\alpha} \geq b$, $\sup_\mathcal{C} V < +\infty$ and $\nu(\mathcal{C}) > 0$. Then there exists a constant $C$ depending on $\sup_\mathcal{C} V$, $\nu(\mathcal{C})$ and $\varepsilon, \alpha, b, c$, such that for any $0 \leq \beta \leq 1 - \alpha$ and $1 \leq \kappa \leq \alpha^{-1}(1 - \beta)$,
    \begin{equation}
    (n+1)^{\kappa-1}\|P^n(x,\cdot) - \pi(\cdot)\|_{V^\beta} \leq CV^{\beta+\alpha\kappa}(x). \label{eq:convergence_rate}
    \end{equation}
\end{enumerate}
\label{appendix:proposition1}
\end{proposition}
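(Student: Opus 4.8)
The plan is to treat the two parts by separate mechanisms: part~(i) is a Foster--Lyapunov existence-and-moment argument, while part~(ii) is the quantitative subgeometric rate, which is where essentially all the work lies. For part~(i), I would first exploit that $V\geq1$ and $1-\alpha\geq0$ force $V^{1-\alpha}\geq1$, so the hypothesised drift immediately implies the constant-gain Foster drift $PV\leq V-c+b\,\ind_\mathcal{C}$. Together with $\phi$-irreducibility, aperiodicity, and the fact that the minorization $P(x,\cdot)\geq\varepsilon\ind_\mathcal{C}(x)\nu(\cdot)$ renders $\mathcal{C}$ a $1$-small set, this yields positive Harris recurrence and hence a unique invariant probability $\pi$. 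To get $\pi(V^{1-\alpha})<\infty$ I would integrate the \emph{original} drift against $\pi$: truncating $V$ at level $n$, using invariance $\pi(P(V\wedge n))=\pi(V\wedge n)$, and passing to the limit (monotone convergence on the gain term $cV^{1-\alpha}$, Fatou on $PV$) yields $c\,\pi(V^{1-\alpha})\leq b\,\pi(\mathcal{C})\leq b<\infty$. The truncation is the only subtle point, since a priori $\pi(V)$ may be infinite.

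For part~(ii), the extra hypotheses $\sup_\mathcal{C}V<\infty$, $\nu(\mathcal{C})>0$, and $c\inf_{\mathcal{C}^c}V^{1-\alpha}\geq b$ are precisely the normalizations that turn $\mathcal{C}$ into a usable regeneration set. The route I would follow is the Tuominen--Tweedie / Douc--Fort--Moulines--Soulier programme for subgeometric rates. The rate function attached to the drift is $\phi(v)=cv^{1-\alpha}$; its inverse-integral $H_\phi(v)=\int_1^v\phi(s)^{-1}\,ds=(v^\alpha-1)/(c\alpha)$ inverts to $H_\phi^{-1}(t)=(c\alpha t+1)^{1/\alpha}$, whose derivative gives the baseline convergence speed $r(n)\sim\text{const}\cdot n^{(1-\alpha)/\alpha}$. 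Using the Nummelin split chain built from the minorization, the chain regenerates at an atom, and the polynomial drift controls the return-time behaviour through the comparison bound $\E_x\bigl[\sum_{k=0}^{\tau_\mathcal{C}-1}\phi(V(X_k))\bigr]\leq V(x)$ for $x\in\mathcal{C}^c$.

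The final step, which I expect to be the main obstacle, is producing the full three-parameter family with rate $(n+1)^{\kappa-1}$, target norm $V^\beta$, and initial moment $V^{\beta+\alpha\kappa}$ under $0\leq\beta\leq1-\alpha$ and $1\leq\kappa\leq\alpha^{-1}(1-\beta)$. From the single drift one manufactures a scale of modulated-moment drifts $PV\leq V-\psi(V)+b'\ind_\mathcal{C}$ by applying Young's inequality to conjugate power pairs; in this polynomial regime the pairs are exactly powers of $V$, and the admissibility constraint $\beta+\alpha\kappa\leq1$ (equivalent to $\kappa\leq\alpha^{-1}(1-\beta)$) is what keeps the demanded initial moment no larger than $V$. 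Feeding each such pair into the fundamental subgeometric convergence theorem, then coupling a chain started at $x$ with a stationary copy and bounding the coupling time as a sum of independent regeneration blocks whose tails are governed by the return-time moments, delivers $(n+1)^{\kappa-1}\|P^n(x,\cdot)-\pi\|_{V^\beta}\leq C\,V^{\beta+\alpha\kappa}(x)$ with $C$ depending only on $\sup_\mathcal{C}V$, $\nu(\mathcal{C})$, and $\varepsilon,\alpha,b,c$. Part~(i) being routine, the substance is entirely in this interpolation-and-coupling bookkeeping, where the precise exponents and the uniformity of $C$ must be tracked with care.
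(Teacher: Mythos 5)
This proposition is quoted verbatim from \cite{atchade_limit_2010} (their Proposition A.1) and the paper supplies no proof of its own, so there is no internal argument to compare against; your sketch, however, faithfully reconstructs the route taken in the cited source, which likewise reads the drift $PV \leq V - cV^{1-\alpha} + b\ind_{\mathcal{C}}$ as the subgeometric drift condition of Douc--Fort--Moulines--Soulier with rate function $\phi(v)=cv^{1-\alpha}$ (hence baseline polynomial rate $n^{(1-\alpha)/\alpha}$, matching your $H_\phi^{-1}$ computation and the endpoint $\kappa=\alpha^{-1}$ at $\beta=0$), obtains existence and $\pi(V^{1-\alpha})<\infty$ by the standard comparison/truncation argument, and produces the $(\beta,\kappa)$-family via exactly the Young-conjugate interpolation you describe, with $\beta+\alpha\kappa\leq 1$ playing precisely the role you identify. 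Your plan is therefore correct in substance and in approach; the only point to write out carefully is the truncation limit in part (i) (one works with $\pi(V\wedge n)=\pi(P(V\wedge n))\leq \pi((PV)\wedge n)$ and passes to the limit on the rearranged inequality, since $\pi(V)$ may be infinite), a subtlety you already flag.
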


\begin{theorem}[Theorem 2.1, \cite{atchade_limit_2010}]
    For a set $C$, denote by $\tau_C$ the return time to $C \times \mathcal{Y}$, $\tau_C = \inf\{n \geq 1 : X_n \in C\}$. Assuming the \textit{diminishing adaptation} condition and that there exist a measurable function $V \colon \mathcal{X} \to [1, +\infty)$ and a measurable set $C$ such that
    \begin{enumerate}[(i)]
    \item $\sup_{C \times \mathcal{Y}} \mathbb{E}_{x,\gamma}[r(\tau_C)] < +\infty$ for some non-decreasing function $r \colon \mathbb{N} \to (0, +\infty)$ such that $\sum_{n=1}^\infty 1/r(n) < +\infty$;
    \label{appendix:i}
    \item there exists a probability measure $\pi$ such that
    \begin{align*}
        \lim_{n\to+\infty} \sup_{x\in\mathcal{X}} V^{-1}(x) \sup_{\gamma \in \mathcal{Y}} \|P_\gamma^n(x,\cdot) - \pi\| = 0;
    \end{align*}
    \label{appendix:ii}
    \item $\sup_\gamma P_\gamma V \leq V$ on $C^c$ and $\sup_{C \times \mathcal{Y}} \{P_\gamma V(x) + V(x)\} < +\infty$,
     \label{appendix:iii}
\end{enumerate}

then,
\[
\lim_{n\to+\infty} \sup_{\{f : |f|_\infty \leq 1\}} \left|\mathbb{E}[f(X_n) - \pi(f)]\right| = 0.
\]
\label{appendix:thm4}
\end{theorem}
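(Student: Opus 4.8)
The plan is to establish ergodicity through the two-condition framework of \cite{roberts_coupling_2007}: \emph{diminishing adaptation} and \emph{containment}. Diminishing adaptation is essentially free, since the adaptation probability $p_j$ is built to vanish as $j\to\infty$; this is exactly the content of Lemma~\ref{lemma1}, whose verification reduces to bounding the total-variation gap between consecutive kernels $P_{N_{\ell+1}}$ and $P_{N_\ell}$ by a quantity proportional to $p_{\lfloor \ell/K\rfloor}$. The substantive work is therefore containment. Rather than checking boundedness in probability of the $\epsilon$-convergence times directly, I would route through the subgeometric adaptive-MCMC theory of \cite{atchade_limit_2010}: if the family $\{P_N\}_{N\geq N_0}$ is $\phi$-irreducible, aperiodic, and satisfies a \emph{simultaneous-in-$N$ polynomial drift} together with a \emph{simultaneous minorization} on the level sets of the drift function (Definition~\ref{def5}), then Corollary~\ref{appendix:Corollary2} supplies containment and Theorem~\ref{appendix:thm4} closes the argument.

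The first block is to produce these structural properties. For $\phi$-irreducibility and aperiodicity I would feed Assumption~\ref{hyp1} into Theorem~\ref{appendix:thm1} to obtain $\mu_{\text{Leb}}$-irreducibility and aperiodicity of the marginal RWM, then invoke Theorem~\ref{appendix:thm2}, using Assumption~\ref{hyp4} (strictly positive rejection probability) to transfer these properties to each $P_N$, $N\geq N_0$. For the drift, I would first convert the convex-order hypothesis into a uniform moment bound: by Assumption~\ref{hyp2} and the convexity of $w\mapsto w^{-\alpha_0}\vee w^{\beta_0}$, Definition~\ref{def1} gives $\E[W_N^{-\alpha_0}\vee W_N^{\beta_0}\mid\theta]\leq\E[W_{N_0}^{-\alpha_0}\vee W_{N_0}^{\beta_0}\mid\theta]$ for every $\theta$ and $N\geq N_0$, so the essential supremum $M_{W_N}$ is bounded by $M_{W_{N_0}}$, finite by Assumption~\ref{hyp3}. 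With the moment hypothesis of Theorem~\ref{appendix:thm3} met uniformly in $N$, I would take the Lyapunov function $V(\theta,w)=c_\pi^{\eta}\pi^{-\eta}(\theta)(w^{-\alpha}\vee w^{\beta})$ with $\eta,\alpha,\beta$ in the admissible ranges dictated by $\alpha_0,\beta_0$, and read off a polynomial drift $P_N V\leq V-\delta V^{(\beta-1)/\beta}+c\,\ind_{\mathcal{C}}$ holding for all $N\geq N_0$ with a common small set $\mathcal{C}$.

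The minorization step requires that the level sets $\{V\leq b\}$ be genuine small sets. Here I would show (Lemma~\ref{lemma2}) that each such set is compact and of positive Lebesgue measure — this is where the tail and contour conditions of Assumption~\ref{hyp1} are used — so that Lemma~\ref{appendix:lemma1} yields a single $\varepsilon_B$ and probability measure $\nu_B$ giving a uniform-in-$N$ minorization on $B$. Having drift and minorization simultaneously in $N$, I would apply Corollary~\ref{appendix:Corollary1} to normalize the pair into the form required by Proposition~\ref{appendix:proposition1}, then use that proposition to extract the polynomial rate $(n+1)^{\kappa-1}\|P_N^n(\theta,w;\cdot)-\bar\pi_N\|_{V^\beta}\leq C\,V^{\beta+\alpha\kappa}(\theta,w)$; specializing to $\beta=0$, $\alpha=1/\kappa$ and taking suprema over $N$ and over $(\theta,w)$ produces a uniform decay estimate of the type demanded by Condition~\eqref{appendix:ii} of Theorem~\ref{appendix:thm4}.

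The main obstacle — the only place where I cannot quote \cite{atchade_limit_2010} verbatim — is that their theory presumes all non-adaptive kernels share a single stationary distribution, whereas the PM invariant law $\bar\pi_N(\theta,\mathrm{d}w)=\pi(\theta)\mathcal{Q}_{N,\theta}(\mathrm{d}w)\,w$ genuinely depends on $N$. I would resolve this by generalizing Condition~\eqref{appendix:ii} to permit an $N$-dependent $\bar\pi_N$ while requiring convergence only of the $\theta$-marginal, which is all the PM sampler ultimately reports; the uniform estimate just derived is precisely of this generalized form. Conditions~\eqref{appendix:i} and~\eqref{appendix:iii} transfer unchanged from Corollary~\ref{appendix:Corollary2}, producing the modified Corollary~\ref{appendix:Corollary2bis}. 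Finally I would combine diminishing adaptation (Lemma~\ref{lemma1}) with this modified corollary and replay the proof of Theorem~\ref{appendix:thm4} with the test function $f$ restricted to $\Theta$, concluding that $\sup_{|f|_\infty\leq1}|\E[f(\theta_\ell)]-\pi(f)|\to0$ for the marginal chain.
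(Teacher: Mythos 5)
Your proposal does not prove the statement at hand. The statement is Theorem~\ref{appendix:thm4} itself, i.e.\ Theorem~2.1 of \cite{atchade_limit_2010}: a general ergodicity result asserting that \emph{diminishing adaptation} together with the return-time condition~\eqref{appendix:i}, the uniform-in-$\gamma$, $V$-normalized convergence~\eqref{appendix:ii}, and the drift/boundedness condition~\eqref{appendix:iii} force $\sup_{\{f:|f|_\infty\leq 1\}}|\mathbb{E}[f(X_n)]-\pi(f)|\to 0$ for the adaptive process. Your plan instead takes this theorem and its Corollary~\ref{appendix:Corollary2} as given, and uses them together with Assumptions~\ref{hyp1}--\ref{hyp4} to deduce the paper's Theorem~\ref{thm1} (ergodicity of the APM). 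As a proof of Theorem~\ref{appendix:thm4} this is circular: every step of your outline is devoted to \emph{verifying} the hypotheses~\eqref{appendix:i}--\eqref{appendix:iii} for the specific family $\{P_N\}_{N \geq N_0}$, not to showing that these hypotheses, combined with diminishing adaptation, imply convergence of the adaptive marginal law. Nothing in the proposal engages the actual mechanism of the theorem, which requires a direct argument along the lines of: fix a horizon $n$ and a block length $N$, couple the adaptive process over the last $N$ steps with a chain run under the frozen kernel $P_{\Gamma_{n-N}}$, use diminishing adaptation to bound the total-variation gap accumulated over the block, invoke Condition~\eqref{appendix:ii} to make the frozen chain $\varepsilon$-close to $\pi$ from states where $V$ is controlled, and use Conditions~\eqref{appendix:i} and~\eqref{appendix:iii} (return times to $C$ and the supermartingale-type control of $V$ off $C$) to guarantee the process is, with high probability, in such states. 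The paper itself supplies no proof of this statement; it is quoted from \cite{atchade_limit_2010}, and the proof of Theorem~\ref{thm1} explicitly defers to Subsection~4.3.2 of that reference.

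That said, read as a blind proof of Theorem~\ref{thm1}, your outline reproduces the paper's argument essentially step for step: irreducibility and aperiodicity of $P_N$ via Theorems~\ref{appendix:thm1} and~\ref{appendix:thm2} under Assumptions~\ref{hyp1} and~\ref{hyp4}; the convex-order hypothesis (Assumption~\ref{hyp2}) applied to the convex map $w\mapsto w^{-\alpha_0}\vee w^{\beta_0}$ to propagate the moment bound of Assumption~\ref{hyp3} uniformly in $N$, so that Theorem~\ref{appendix:thm3} yields a simultaneous polynomial drift for $V(\theta,w)=c_\pi^\eta \pi^{-\eta}(\theta)\left(w^{-\alpha}\vee w^{\beta}\right)$; compactness and positive Lebesgue measure of the level sets (Lemma~\ref{lemma2}) feeding Lemma~\ref{appendix:lemma1} for the simultaneous minorization; Corollary~\ref{appendix:Corollary1} and Proposition~\ref{appendix:proposition1} with $\beta=0$ and $\alpha=1/\kappa$ to extract the generalized, $N$-dependent version of Condition~\eqref{appendix:ii}; and finally the modified Corollary~\ref{appendix:Corollary2bis}, Lemma~\ref{lemma1}, and the restriction of $f$ to $\Theta$. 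So the mathematical content you assembled is sound and matches the paper — but it proves the wrong theorem. To prove the quoted statement you would have to supply the coupling argument sketched above, which neither your proposal nor the paper contains.
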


\begin{corollary}[Corollary 2.2, \cite{atchade_limit_2010}]
Let \( \{P_\gamma\}_{\gamma \in \mathcal{Y}} \) be a family of \(\phi\)-irreducible and aperiodic Markov kernels on \( (\mathcal{X}, \mathcal{B}(\mathcal{X})) \), each with invariant distribution \( \pi \). If the family satisfies the \textit{Simultaneous Minorization and Polynomial Drift Conditions} (see Definition~\ref{def5}), then:
\begin{enumerate}[(i)]
    \item There exists a non-decreasing function \( r \colon \mathbb{N} \to (0, \infty) \), such that \( \sum_{n=1}^\infty 1/r(n) < \infty \) and
    \[
    \sup_{(x,\gamma) \in C \times \mathcal{Y}} \mathbb{E}_{x,\gamma}[r(\tau_C)] < \infty.
    \]
    \item A probability measure \( \pi \) exists such that
    \[
    \lim_{n \to \infty} \sup_{x \in \mathcal{X}} V^{-1}(x) \sup_{\gamma \in \mathcal{Y}} \|P_\gamma^n(x, \cdot) - \pi\| = 0.
    \]
    \item The inequality \( \sup_\gamma P_\gamma V \leq V \) holds on \( C^c \) and \(
    \sup_{(x,\gamma) \in C \times \mathcal{Y}} \left\{ P_\gamma V(x) + V(x) \right\} < \infty. \)
\end{enumerate}
\label{appendix:Corollary2}
\end{corollary}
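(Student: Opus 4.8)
The plan is to derive the three hypotheses~\eqref{appendix:i}, \eqref{appendix:ii} and~\eqref{appendix:iii} of Theorem~\ref{appendix:thm4} from the simultaneous drift~\eqref{polynomialdrift} and minorization~\eqref{appendix:minorization} conditions of Definition~\ref{def5}, after which the conclusion of the corollary is exactly that of Theorem~\ref{appendix:thm4}. First I would arrange that a single set plays the role of both the drift set and the small set. Given the drift~\eqref{polynomialdrift} on a set $C$, enlarge $C$ to a level set $B=\{x:V(x)\leq b_0\}$ containing it; since $\ind_B\geq\ind_C$ the drift inequality persists with $B$ in place of $C$, and $\sup_B V=b_0<\infty$. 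The simultaneous minorization~\eqref{appendix:minorization} makes $B$ a $1$-small set with a constant $\varepsilon_B$ and measure $\nu_B$ common to all $\gamma$, so that each ($\phi$-irreducible, aperiodic) $P_\gamma$ satisfies the hypotheses of Corollary~\ref{appendix:Corollary1}. Applying that corollary to every $P_\gamma$ produces the refined pair $P_\gamma(x,\cdot)\geq\varepsilon_B\ind_B(x)\nu_B(\cdot)$ and $P_\gamma V\leq V-c_B V^{1-\alpha}+b\,\ind_B$ with $\sup_B V<\infty$, $\nu_B(B)>0$ and $c_B\inf_{B^c}V^{1-\alpha}\geq b$, all constants depending only on the shared data $\alpha,b,c,\varepsilon_B,\nu_B$ and hence uniform in $\gamma$.

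Conditions~\eqref{appendix:iii} and~\eqref{appendix:ii} then follow quickly. For~\eqref{appendix:iii}, on $B^c$ the drift gives $P_\gamma V\leq V-c_B V^{1-\alpha}\leq V$ for every $\gamma$, while on $B$ one has $P_\gamma V(x)+V(x)\leq 2\sup_B V+b<\infty$; both bounds are free of $\gamma$. For~\eqref{appendix:ii}, each $P_\gamma$ now meets the full hypotheses of Proposition~\ref{appendix:proposition1}, whose constant in~\eqref{eq:convergence_rate} depends only on $\sup_B V$, $\nu_B(B)$, $\varepsilon_B$, $\alpha$, $b$, $c_B$ and is therefore independent of $\gamma$; moreover, by $\phi$-irreducibility the invariant measure it produces is the common $\pi$. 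Taking $\beta=0$ and any $\kappa\in(1,\alpha^{-1}]$ gives $(n+1)^{\kappa-1}\|P_\gamma^n(x,\cdot)-\pi\|\leq C\,V^{\alpha\kappa}(x)$ uniformly in $\gamma$; dividing by $V(x)$ and using $V\geq1$ together with $\alpha\kappa\leq1$ yields $V^{-1}(x)\|P_\gamma^n(x,\cdot)-\pi\|\leq C\,(n+1)^{-(\kappa-1)}$, so the supremum over $x$ and over $\gamma$ vanishes as $n\to\infty$, which is precisely~\eqref{appendix:ii}.

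The genuinely delicate step is condition~\eqref{appendix:i}, the uniform control of a return-time moment, since Proposition~\ref{appendix:proposition1} supplies convergence rates but no hitting-time moments. Here I would invoke the subgeometric drift machinery used throughout \cite{atchade_limit_2010} (the comparison and interpolation technique of Tuominen and Tweedie): summing the drift $P_\gamma V\leq V-c_B V^{1-\alpha}+b\,\ind_B$ along the trajectory up to $\tau_B$ yields a control of the form $\sup_{x\in B}\E_{x,\gamma}[\,(\tau_B)^{1/\alpha}\,]<\infty$, so that $r(n)=(n+1)^{1/\alpha}$ serves as the required rate. This $r$ is non-decreasing, its reciprocal is summable since $1/\alpha>1$ for $\alpha\in(0,1)$, and both $r$ and the moment bound are the same for every $\gamma$ precisely because the drift constants are shared across the family; this establishes~\eqref{appendix:i}.

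I expect~\eqref{appendix:i} to be the main obstacle: it is the only one of the three hypotheses not handed to us directly by the single-kernel results Corollary~\ref{appendix:Corollary1} and Proposition~\ref{appendix:proposition1}, and it forces one to convert the polynomial drift into a polynomial moment of the return time while keeping the rate function and the bound uniform in $\gamma$. Everything else reduces to bookkeeping, namely checking that each constant emerging from the single-kernel lemmas depends only on the shared drift and minorization data rather than on the individual kernel.
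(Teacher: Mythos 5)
Your proposal is correct in substance, but note first what it is being compared against: the paper does not prove this statement at all --- it is imported verbatim as Corollary~2.2 of \cite{atchade_limit_2010} and used as a black box, with the paper only sketching the analogous derivation when it establishes the modified Corollary~\ref{appendix:Corollary2bis} inside the proof of Theorem~\ref{thm1}. Your reconstruction follows exactly that sketch and the source's own strategy: Corollary~\ref{appendix:Corollary1} to pass to a level set with refined drift and minorization, Proposition~\ref{appendix:proposition1} with $\beta=0$ and $\alpha\kappa\leq 1$ for condition~\eqref{appendix:ii} (using $\phi$-irreducibility to identify the invariant measure with the common $\pi$), a direct bound for condition~\eqref{appendix:iii}, and subgeometric return-time moments for condition~\eqref{appendix:i}, with the correct exponent $1/\alpha$ and the summability $\sum_n (n+1)^{-1/\alpha}<\infty$ guaranteed by $\alpha\in(0,1)$ from Definition~\ref{def5}. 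You also correctly identify~\eqref{appendix:i} as the one conclusion not handed over by the single-kernel lemmas.

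A few points deserve tightening. First, your opening move of enlarging $C$ to a level set containing it silently assumes $\sup_C V<\infty$, which Definition~\ref{def5} does not grant; the step is also unnecessary, since Corollary~\ref{appendix:Corollary1} accepts the raw drift on $C$ and produces the level set itself (the mechanism being elementary: with $b_0=(2b/c)^{1/(1-\alpha)}$, off $B=\{V\leq b_0\}$ the term $cV^{1-\alpha}$ absorbs $b$). Second, applying Corollary~\ref{appendix:Corollary1} kernel-by-kernel a priori yields a level set $B_\gamma$ per kernel; that a single $B$ and uniform constants serve all $\gamma$ is true, but it requires observing that the construction depends only on the shared data $(V,c,\alpha,b)$ together with the simultaneity of the minorization over level sets --- this is the explicit computation just indicated, not pure bookkeeping on the statement of the corollary. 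Third, in condition~\eqref{appendix:i} the expectation $\E_{x,\gamma}$ is taken under the \emph{adaptive} process, not under a fixed kernel; your drift-summation argument still goes through, but the reason should be stated: the simultaneous drift holds at every step regardless of the current value of the adaptation parameter, so the pathwise comparison is valid for the inhomogeneous chain. Finally, a single summation of the drift gives only $\E_x[\tau_B]\lesssim V(x)$; the $1/\alpha$-moment requires the interpolated drift functions $V^{1-k\alpha}$ or the modulated-moment form of the Tuominen--Tweedie machinery with rate $r_\phi(n)\asymp n^{(1-\alpha)/\alpha}$, giving $\E_x\bigl[\sum_{k<\tau_B} r_\phi(k)\bigr]\lesssim V(x)$ and hence $\sup_{x\in B}\E_x[\tau_B^{1/\alpha}]<\infty$. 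Since you cite precisely this machinery, this last point is presentational rather than a substantive gap.
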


\begin{corollary}[Modified Corollary 2.2, \cite{atchade_limit_2010}]
Let \( \{P_\gamma\}_{\gamma \in \mathcal{Y}} \) be a family of \(\phi\)-irreducible and aperiodic Markov kernels on \( (\mathcal{X}, \mathcal{B}(\mathcal{X})) \), each with invariant distribution \( \pi_\gamma \). If the family satisfies the \textit{Simultaneous Minorization and Polynomial Drift Conditions} (see Definition~\ref{def5}), then:
\begin{enumerate}[(i)]
    \item There exists a non-decreasing function \( r \colon \mathbb{N} \to (0, \infty) \), such that \( \sum_{n=1}^\infty 1/r(n) < \infty \) and
    \[
    \sup_{(x,\gamma) \in C \times \mathcal{Y}} \mathbb{E}_{x,\gamma}[r(\tau_C)] < \infty.
    \]
    \item A probability measure \( \pi_\gamma \) exists such that
    \[
    \lim_{n \to \infty} \sup_{x \in \mathcal{X}} V^{-1}(x) \sup_{\gamma \in \mathcal{Y}} \|P_\gamma^n(x, \cdot) - \pi_\gamma\| = 0.
    \]
    \item The inequality \( \sup_\gamma P_\gamma V \leq V \) holds on \( C^c \) and \(
    \sup_{(x,\gamma) \in C \times \mathcal{Y}} \left\{ P_\gamma V(x) + V(x) \right\} < \infty. \)
\end{enumerate}
\label{appendix:Corollary2bis}
\end{corollary}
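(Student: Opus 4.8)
The plan rests on a single structural observation: among the three conclusions, only~\eqref{appendix:ii} refers to the stationary measure, whereas~\eqref{appendix:i} and~\eqref{appendix:iii} are statements about the return time to the small set and about the drift function $V$ alone. I would therefore argue that the proofs of~\eqref{appendix:i} and~\eqref{appendix:iii} given in \cite{atchade_limit_2010} for the original Corollary~\ref{appendix:Corollary2} transfer verbatim. Conclusion~\eqref{appendix:iii} is immediate: the simultaneous polynomial drift of Definition~\ref{def5} gives $P_\gamma V \leq V - cV^{1-\alpha} + b\ind_C \leq V$ on $C^c$ for every $\gamma$ with $\gamma$-independent constants, and the bound on $C$ follows from $\sup_C V < \infty$. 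Conclusion~\eqref{appendix:i} follows from the same drift inequality through the standard return-time estimate, which again uses only the uniform-in-$\gamma$ constants and never the invariant distribution.

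The real content is conclusion~\eqref{appendix:ii}, and here the plan is to invoke Proposition~\ref{appendix:proposition1} one kernel at a time. For each fixed $\gamma$, the kernel $P_\gamma$ is $\phi$-irreducible and aperiodic and, by Definition~\ref{def5}, satisfies a minorization $P_\gamma(x,\cdot) \geq \varepsilon_C\ind_C(x)\nu_C(\cdot)$ and a polynomial drift $P_\gamma V \leq V - cV^{1-\alpha} + b\ind_C$ whose constants $\varepsilon_C, \nu_C, c, b, \alpha$ and whose set $C$ are, by the \emph{simultaneous} hypothesis, common to all $\gamma$. Proposition~\ref{appendix:proposition1}(i) then furnishes an invariant probability measure for $P_\gamma$, which we identify with $\pi_\gamma$, and Proposition~\ref{appendix:proposition1}(ii) yields
\begin{equation*}
(n+1)^{\kappa-1}\,\|P_\gamma^n(x,\cdot) - \pi_\gamma(\cdot)\|_{V^\beta} \leq C_\star\,V^{\beta+\alpha\kappa}(x)
\end{equation*}
for every admissible pair $(\beta,\kappa)$, where $C_\star$ is the constant of that proposition. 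The decisive point, which I would check explicitly, is that $C_\star$ depends only on $\sup_C V$, $\nu_C(C)$, and $\varepsilon_C,\alpha,b,c$, each of which is $\gamma$-independent; hence a single $C_\star$ works simultaneously for all $\gamma$.

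With this uniform constant, I would specialize to $\beta = 0$ and $\kappa = \alpha^{-1}$ (admissible since $\beta = 0 \leq 1-\alpha$ and $\kappa = \alpha^{-1} = \alpha^{-1}(1-\beta)$ when $\alpha < 1$), so that $\alpha\kappa = 1$ and
\begin{equation*}
\|P_\gamma^n(x,\cdot) - \pi_\gamma(\cdot)\| \leq C_\star\,V(x)\,(n+1)^{1-\alpha^{-1}}.
\end{equation*}
Dividing by $V(x) \geq 1$, taking the supremum over $x$ and then over $\gamma$, the right-hand side collapses to $C_\star(n+1)^{1-\alpha^{-1}}$, which tends to $0$ as $n \to \infty$ because $\alpha < 1$ makes the exponent negative. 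This is exactly conclusion~\eqref{appendix:ii} with the kernel-dependent limits $\pi_\gamma$ in place of a common $\pi$.

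The principal obstacle I foresee is purely bookkeeping: confirming that the constant $C_\star$ delivered by Proposition~\ref{appendix:proposition1} carries no hidden $\gamma$-dependence. This requires tracing, through the proof of that proposition in \cite{atchade_limit_2010}, exactly how $C_\star$ is assembled from the minorization and drift data; because the simultaneous conditions provide one drift function $V$, one small set $C$, and one set of constants valid across all $\gamma$, no such dependence can enter, but the argument must spell this out rather than appeal to it implicitly. No new analytic estimate beyond the fixed-$\pi$ case is needed — the modification is structural, replacing the single limit $\pi$ by the family $\{\pi_\gamma\}$ while preserving the uniform polynomial rate.
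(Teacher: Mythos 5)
Your route coincides with the paper's in structure: conclusions~(i) and~(iii) are carried over unchanged from Subsection~4.3 of \cite{atchade_limit_2010} because they never involve the invariant measure, and conclusion~(ii) is obtained by applying Proposition~\ref{appendix:proposition1} kernel-by-kernel with a $\gamma$-free constant and then specializing to $\beta = 0$, $\kappa = \alpha^{-1}$ (the paper's ``$\beta=0$ and $\alpha = 1/\kappa$''), so that the bound $C_\star (n+1)^{1-\alpha^{-1}} \to 0$ holds uniformly after dividing by $V \geq 1$. Your emphasis on tracing the $\gamma$-independence of $C_\star$ through $\sup V$, $\nu(\cdot)$, $\varepsilon$, $\alpha$, $b$, $c$ is also exactly the point the paper relies on.

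There is, however, one step that fails as written: you apply Proposition~\ref{appendix:proposition1}(ii) directly to the set $C$ appearing in the simultaneous polynomial drift of Definition~\ref{def5}. Part~(ii) of that proposition carries three additional hypotheses --- $c \inf_{C^c} V^{1-\alpha} \geq b$, $\sup_C V < \infty$, and $\nu(C) > 0$ --- none of which is granted by Definition~\ref{def5}: the drift set $C$ there is an arbitrary measurable set on which $V$ may be unbounded, nothing forces $c \inf_{C^c} V^{1-\alpha} \geq b$, and the simultaneous minorization is stated for \emph{level sets} of $V$, not for $C$ itself. The paper closes this gap by first invoking Corollary~\ref{appendix:Corollary1}: using that every level set of $V$ is $1$-small (which is precisely what the simultaneous minorization supplies), it replaces $C$ by a level set $B$ of $V$ and constants $\varepsilon_B$, $c_B$ for which the drift and minorization hold together with $\sup_B V < \infty$, $\nu_B(B) > 0$, and $c_B \inf_{B^c} V^{1-\alpha} \geq b$; and because the input drift constants and the level-set minorizations are uniform in $\gamma$, the \emph{same} $B$ and the same constants serve every kernel, so the resulting $C_\star$ is indeed $\gamma$-free, as you anticipated. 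With this intermediate upgrade restored, your argument becomes the paper's proof; without it, the invocation of Proposition~\ref{appendix:proposition1}(ii) is unjustified.
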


\begin{theorem}[Theorem 7, \cite{andrieu_establishing_2016}]
If there exists a probability space with random variables \( X' \) and \( Y' \) having the same distributions as \( X \) and \( Y\), respectively, and such that
\[\mathbb{E}[Y'|X'] = X' \quad \text{a.s.},\]
then \( X \preceq_{cx} Y \).
    \label{appendix:thm5}
\end{theorem}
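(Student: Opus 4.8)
The plan is to derive the convex order directly from the martingale coupling by invoking the conditional form of Jensen's inequality. Fix an arbitrary convex function $h : \reals \to \reals$ for which $\E[h(X)]$ and $\E[h(Y)]$ exist. Because $X'$ and $Y'$ share the laws of $X$ and $Y$ respectively, equality in distribution gives $\E[h(X)] = \E[h(X')]$ and $\E[h(Y)] = \E[h(Y')]$, so it suffices to prove the inequality $\E[h(X')] \leq \E[h(Y')]$ on the coupling space.

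First I would apply the conditional Jensen inequality with respect to the sub-$\sigma$-algebra $\sigma(X')$: since $h$ is convex, $h\big(\E[Y' \mid X']\big) \leq \E[h(Y') \mid X']$ almost surely. Substituting the martingale hypothesis $\E[Y' \mid X'] = X'$ a.s.\ then yields $h(X') \leq \E[h(Y') \mid X']$ almost surely. Next I would integrate both sides and use the tower property $\E\big[\E[h(Y') \mid X']\big] = \E[h(Y')]$, which gives $\E[h(X')] \leq \E[h(Y')]$. Transferring back through equality in distribution produces $\E[h(X)] \leq \E[h(Y)]$, and since $h$ was an arbitrary convex function for which the expectations exist, Definition~\ref{def1} delivers $X \preceq_{cx} Y$.

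The argument is essentially immediate once conditional Jensen is in place, so there is no substantive obstacle; the only point requiring minor care is integrability. Conditional Jensen needs $Y'$ (equivalently $Y$) integrable so that $\E[Y' \mid X']$ is well defined, and the chain of inequalities presupposes that $\E[h(Y')]$ and $\E[h(X')]$ exist in the appropriate extended sense. These requirements coincide exactly with the qualifier ``for which the expectations exist'' built into the convex-order definition, so no additional hypotheses are needed. I would close by remarking that this is the ``easy'' direction of Strassen's theorem: a martingale coupling always implies the convex order, whereas the converse (constructing such a coupling from the order) is the harder Strassen implication and is not required here.
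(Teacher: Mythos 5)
Your proof is correct: conditional Jensen's inequality applied to the martingale coupling, followed by the tower property and transfer through equality in distribution, is precisely the standard argument for this result (the easy direction of Strassen's theorem), and your remark on integrability of $Y'$ is the right caveat. The paper itself states this theorem as an imported result from \cite{andrieu_establishing_2016} without reproducing a proof, and your argument coincides with the standard one found in that reference and in \cite{shaked_stochastic_2007}, so there is nothing further to reconcile.
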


\begin{corollary}[Corollary 3.A.22, \cite{shaked_stochastic_2007}]
Let \( X_1 \) and \( X_2 \) be a pair of independent random variables, and let \( Y_1 \) and \( Y_2 \) be another pair of independent random variables. If \( X_i {\preceq_{cx}} Y_i \), for \( i = 1, 2 \), then
\[X_1 X_2 \preceq_{cx} Y_1 Y_2.\]
\label{appendix:corollary3}
\end{corollary}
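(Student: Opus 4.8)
The statement is the classical closure of the convex order under products of independent factors (Corollary~3.A.22 of \cite{shaked_stochastic_2007}); the plan is to reduce it to two applications of the one-dimensional convex order, glued together by Fubini's theorem. Write $\mu_i$ and $\nu_i$ for the laws of $X_i$ and $Y_i$, so that the law of $X_1 X_2$ is the pushforward of $\mu_1 \otimes \mu_2$ under multiplication and likewise the law of $Y_1 Y_2$ is that of $\nu_1 \otimes \nu_2$, using independence within each pair. Fix a convex $h:\reals\to\reals$ for which all the relevant expectations exist; it then suffices to show $\int h(st)\,\mathrm{d}(\mu_1\otimes\mu_2) \le \int h(st)\,\mathrm{d}(\nu_1\otimes\nu_2)$.

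The key elementary observation is that for any fixed real $a$, the section $u \mapsto h(au)$ is convex, being the composition of the convex $h$ with the linear map $u\mapsto au$ (the sign of $a$ is irrelevant). First I would fix $s$ and apply $X_2 \preceq_{cx} Y_2$ to the convex function $t\mapsto h(st)$, obtaining $\int h(st)\,\mathrm{d}\mu_2(t) \le \int h(st)\,\mathrm{d}\nu_2(t)$ for every $s$; integrating this inequality against $\mu_1(\mathrm{d}s)$ and using Fubini yields $\int h(st)\,\mathrm{d}(\mu_1\otimes\mu_2) \le \int h(st)\,\mathrm{d}(\mu_1\otimes\nu_2)$. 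Next I would swap the order of integration and, for each fixed $t$, apply $X_1 \preceq_{cx} Y_1$ to the convex function $s\mapsto h(st)$, giving $\int h(st)\,\mathrm{d}\mu_1(s) \le \int h(st)\,\mathrm{d}\nu_1(s)$; integrating against $\nu_2(\mathrm{d}t)$ produces $\int h(st)\,\mathrm{d}(\mu_1\otimes\nu_2) \le \int h(st)\,\mathrm{d}(\nu_1\otimes\nu_2)$. Chaining the two bounds gives $\E[h(X_1X_2)] \le \E[h(Y_1Y_2)]$, which is exactly $X_1 X_2 \preceq_{cx} Y_1 Y_2$.

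An equivalent route, should one prefer a probabilistic formulation, is to invoke Strassen's theorem to build martingale couplings $(X_i,Y_i)$ with $\E[Y_i\mid X_i]=X_i$, taken independent across $i$; then $\E[Y_1Y_2\mid X_1,X_2]=\E[Y_1\mid X_1]\,\E[Y_2\mid X_2]=X_1X_2$ by conditional independence, and the tower property gives $\E[Y_1Y_2\mid X_1X_2]=X_1X_2$, so Theorem~\ref{appendix:thm5} applies directly. I expect the only real obstacle to be bookkeeping rather than ideas: one must check that each inner expectation is well-defined (no $\infty-\infty$) and integrable so that the two monotone-integration and Fubini steps are legitimate, and that the sections are jointly measurable. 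In the paper's use case the factors are the nonnegative importance-sampling weights, so every integrand is nonnegative and these technicalities are automatic; in general they are absorbed by the standing restriction to convex $h$ \emph{for which the expectations exist}.
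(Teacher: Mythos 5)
Your proof is correct; note, however, that the paper does not prove this statement at all --- it is imported verbatim as Corollary~3.A.22 of \cite{shaked_stochastic_2007} and used as a black box (in Appendix~\ref{appendixD0}, to pass from the per-observation convex orderings to the product weight $W_N \preceq_{cx} W_1$), so there is no internal proof to compare against. Your primary route is the standard textbook argument for this closure property: for fixed $s$ the section $t \mapsto h(st)$ is convex (affine precomposition, sign of $s$ irrelevant), so conditioning on one factor at a time and chaining through the mixed law $\mu_1 \otimes \nu_2$ gives $\E[h(X_1X_2)] \leq \E[h(X_1Y_2)] \leq \E[h(Y_1Y_2)]$. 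Your integrability bookkeeping is also sound, and in fact slightly better than you claim: since $X_i \preceq_{cx} Y_i$ forces finite first moments, independence gives $\E|X_1 Y_2| = \E|X_1|\,\E|Y_2| < \infty$ automatically, and since any convex $h$ is bounded below by an affine function, all the expectations in the chain are well defined in $(-\infty, +\infty]$ and the inequalities hold in the extended reals --- so the mixed term needs no separate hypothesis. Your alternative coupling route is also valid and is a nice fit with the paper's toolkit: it uses the hard direction of Strassen's theorem to build independent martingale couplings, verifies $\E[Y_1Y_2 \mid X_1 X_2] = X_1 X_2$ via conditional independence of the pairs and the tower property (legitimate since $\sigma(X_1X_2) \subseteq \sigma(X_1, X_2)$), and then closes with Theorem~\ref{appendix:thm5} --- exactly the result the paper itself pairs with this corollary when verifying Assumption~\ref{hyp2}. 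As you observe, in the paper's application the factors are nonnegative importance-sampling weights, so even the mild technicalities you flag are vacuous there.
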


\section{Proof of Lemma~\ref{lemma1}}
\label{appendixB}
\begin{proof}[Proof of Lemma~{\upshape\ref{lemma1}}]
Let $A \in \mathcal{B}(\Theta\times\mathcal{W})$, $(\theta, w )\in \Theta\times\mathcal{W}$. The difference between two consecutive kernels in~\eqref{equa5} can be expressed as:
{\small
\begin{align*}
    P_{N_{\ell+1}}(\theta,w ; A)-P_{N_{\ell}}(\theta,w ; A) &= \delta_{(\theta,w)}(A) \int_{\Theta \times \mathcal{W}} q(\vartheta|\theta) \min\left\{1,  r(\theta, \vartheta) \frac{z}{w}\right\} \left(\mathcal{Q}_{N_{\ell},\vartheta}(\mathrm{d}z)\right.\\
    &\quad \left.- \mathcal{Q}_{N_{\ell+1},\vartheta}(\mathrm{d}z) \right) \mathrm{d} \vartheta 
    + \int_A q(\vartheta|\theta) \min\left\{1,  r(\theta, \vartheta) \frac{z}{w}\right\} \\    &\quad \left(\mathcal{Q}_{N_{\ell+1},\vartheta}(\mathrm{d}z)- \mathcal{Q}_{N_{\ell},\vartheta}(\mathrm{d}z) \right) \mathrm{d} \vartheta.
\end{align*}
}
Taking the absolute value, we get:
{\small
\begin{align*}
    \lefteqn{\left|P_{N_{\ell+1}}(\theta,w ; A)-P_{N_{\ell}}(\theta,w ; A)\right|} \\
    &\leq 2 \int_{\Theta \times \mathcal{W}} q(\vartheta|\theta) \min\left\{1,  r(\theta, \vartheta) \frac{z}{w}\right\} 
    \left| \mathcal{Q}_{N_{\ell+1},\vartheta}(\mathrm{d}z) - \mathcal{Q}_{N_{\ell},\vartheta}(\mathrm{d}z) \right| \mathrm{d} \vartheta.
\end{align*}
}
The total variation norm of the difference between successive kernels is then bounded by:
{\small
\begin{align*}
    \|P_{N_{\ell+1}}(\theta,w ; .)-P_{N_{\ell}}(\theta,w ; .)\| &\leq 4 \int_{\Theta \times \mathcal{W}} q(\vartheta|\theta)  |\mathcal{Q}_{N_{\ell+1},\vartheta}(\mathrm{d}z)- \mathcal{Q}_{N_{\ell},\vartheta}(\mathrm{d}z)| \mathrm{d} \vartheta.
\end{align*}}
Let $\epsilon>0$, $D_\ell$ is bounded by
{\small
\begin{align*}
    D_\ell \leq 4 \, \underset{\theta}{\sup} \int_{\Theta \times \mathcal{W}} q(\vartheta|\theta)  |\mathcal{Q}_{N_{\ell+1},\vartheta}(\mathrm{d}z)- \mathcal{Q}_{N_{\ell},\vartheta}(\mathrm{d}z)| \mathrm{d} \vartheta.
\end{align*}}
If \(|N_{\ell+1}-N_\ell| <a\), where $a$ is the step size defined in Section~\ref{sec:APM}, $D_\ell < \epsilon$ ($D_\ell=0$). We can conclude that the event $\{D_\ell \geq \epsilon\} \subseteq \{|N_{\ell+1}-N_\ell| \geq a\}$ and thus, 
$$\Prob(D_\ell \geq \epsilon) \leq \Prob(|N_{\ell+1}-N_\ell| \geq a).$$
Furthermore, the probability of the number of particles changing by at least $a$ is
\begin{align*}
    \Prob(|N_{\ell+1}-N_{\ell}| \geq a) &= \E[\Prob(|N_{\ell+1}-N_{\ell}| \geq a|\mathcal{G}_\ell)]\\
    &= \E[\Prob(N_{\ell+1}=N_\ell+a|\mathcal{G}_\ell) + \Prob(N_{\ell+1}=N_\ell-a|\mathcal{G}_\ell)]\\
    &= \E\left[p_j \ind{(\hat{\sigma}_\ell \geq {\sigma}_{\text{opt}}+\sigma_\text{e})} + p_j \ind{(\hat{\sigma}_\ell < {\sigma}_{\text{opt}}-\sigma_\text{e}})\right]\\
    &\leq 2p_j.
\end{align*}
Since $p_j$, the probability of adapting the number of particles, converges to $0$ as $j \rightarrow \infty$, then so does $\ell=Kj$. We conclude that for all $\epsilon >0$, $\Prob(D_\ell \geq \epsilon)$ converges to $0$ as $\ell \rightarrow \infty$. 
\end{proof}

\section{Compactness of Level Sets of $V$}
\label{appendixC}
\begin{lemma}
For any $b > 1$, the level set 
\(B = \{(\theta, w) \in \Theta \times \mathcal{W} | V(\theta, w) \leq b\} \) of $V$
is compact and has positive Lebesgue measure.
\label{lemma2}
\end{lemma}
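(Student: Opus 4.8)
The plan is to invoke the Heine--Borel theorem, so the task splits into showing that $B$ is bounded and closed as a subset of $\reals^{d+1} = \Theta \times \reals$, and then separately exhibiting an open subset of $B$. The starting observation is that $V$ factorizes as the product of $c_\pi^\eta \pi^{-\eta}(\theta)$ and $w^{-\alpha} \vee w^{\beta}$, each of which is bounded below by $1$: the first because $\pi(\theta) \leq c_\pi = \sup_\vartheta \pi(\vartheta)$ for all $\theta$, the second because $w^{-\alpha} \vee w^{\beta} \geq 1$ on $(0,\infty)$, with equality only at $w = 1$. Continuity of $V$ on $\Theta \times \mathcal{W}$ follows from Assumption~\ref{hyp1}, which guarantees $\pi \in C^1$, together with continuity of $w \mapsto w^{-\alpha} \vee w^{\beta}$ on $(0,\infty)$.

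For boundedness I would exploit these factor-wise lower bounds. On $B$ each factor is therefore at most $b$: from $c_\pi^\eta \pi^{-\eta}(\theta) \leq b$ one obtains $\pi(\theta) \geq c_\pi\, b^{-1/\eta} =: \pi_{\mathrm{min}} > 0$, and from $w^{-\alpha} \vee w^{\beta} \leq b$ one obtains $w \in [\, b^{-1/\alpha}, b^{1/\beta}\,]$. The $w$-bound already confines the weight to a compact interval of $(0,\infty)$, crucially bounded away from $0$. The $\theta$-bound confines $\theta$ to the superlevel set $\{\theta : \pi(\theta) \geq \pi_{\mathrm{min}}\}$, which is bounded precisely because of the super-exponential tail decay in Assumption~\ref{hyp1}: since $\pi(\theta) \to 0$ as $|\theta| \to \infty$, this superlevel set cannot contain points of arbitrarily large norm. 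Hence $B$ is contained in a bounded box and is itself bounded.

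Closedness in $\reals^{d+1}$ requires a short argument because $\mathcal{W} = (0,\infty)$ is not closed. I would take a Euclidean-convergent sequence $(\theta_n, w_n) \in B$ with limit $(\theta_*, w_*)$; the uniform bound $w_n \in [\, b^{-1/\alpha}, b^{1/\beta}\,]$ forces $w_* \in [\, b^{-1/\alpha}, b^{1/\beta}\,] \subset (0,\infty)$, so the limit lies in $\Theta \times \mathcal{W}$ and, by continuity of $V$, satisfies $V(\theta_*, w_*) \leq b$, giving $(\theta_*, w_*) \in B$. Bounded and closed then yields compactness. For positive Lebesgue measure, since $b > 1$ we have $c_\pi\, b^{-1/\eta} < c_\pi = \sup_\theta \pi(\theta)$, so there exists $\theta_0$ with $\pi(\theta_0) > c_\pi\, b^{-1/\eta}$; evaluating at $w = 1$ gives $V(\theta_0, 1) < b$, and continuity then produces an open Euclidean ball around $(\theta_0, 1)$ contained in $B$. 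Any nonempty open subset of $\reals^{d+1}$ has strictly positive Lebesgue measure, so $\mu_{\text{Leb}}(B) > 0$.

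The main obstacle is the boundedness of the $\theta$-section, as this is the only step that genuinely uses the tail hypothesis of Assumption~\ref{hyp1}; everything else is elementary once the two-factor lower bound is established. A secondary subtlety is the closedness argument, where one must rule out limit points accumulating on $\{w = 0\}$ --- this is handled by the lower bound $w \geq b^{-1/\alpha}$ that holds throughout $B$.
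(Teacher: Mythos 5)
Your proof is correct and follows essentially the same route as the paper's: bound $\theta$ via the super-exponential tail decay of $\pi$, bound $w$ via the level constraint, deduce closedness and hence compactness, and obtain positive measure from a nonempty open subset of $B$. Two minor points in your favour: you make explicit the lower bound $w \geq b^{-1/\alpha}$, which is genuinely needed to rule out limit points accumulating on $\{w = 0\}$ (since $\mathcal{W}=(0,\infty)$ is not closed there) and which the paper leaves implicit, and your measure argument --- exhibiting $(\theta_0,1)$ with $V(\theta_0,1)<b$ and taking a ball inside the open set $\{V<b\}$ --- is more direct than the paper's appeal to the generalized intermediate value theorem.
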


\begin{proof}[Proof of Lemma~\ref{lemma2}]
Let \( b > 1 \). We claim that the set \( B \) is bounded. Suppose, for contradiction, that for every \( M' > 0 \), there exists \( (\theta, w) \) such that \( |\theta| > M' \) and \( V(\theta, w) \leq b \). Then,
\begin{align*}
    V(\theta, w) \leq b
    &\Longleftrightarrow
    \pi^{-\eta}(\theta) (w^{-\alpha} \vee w^{\beta}) \leq b c_\pi^{-\eta}, \\
    &\Longleftrightarrow
    \pi^{-\eta}(\theta) \leq b c_\pi^{-\eta} (w^{-\alpha} \vee w^{\beta})^{-1}, \\
    &\Longleftrightarrow
    \pi(\theta) \geq b^{-\frac{1}{\eta}} c_\pi \left( w^{-\alpha} \vee w^{\beta} \right)^{\frac{1}{\eta}}, \\
    &\Longrightarrow
    \pi(\theta) \geq b^{-\frac{1}{\eta}} c_\pi, \qquad \text{since } \min_w \left( w^{-\alpha} \vee w^{\beta} \right) = 1.
\end{align*}
This inequality implies that \( \pi(\theta) \) is bounded away from zero as \( |\theta| \to \infty \), which contradicts Assumption~\ref{hyp1}, under which \( \pi(\theta) \to 0 \) as \( |\theta| \to \infty \). We conclude that there exists a constant \( M_\theta > 0 \) such that for all \( (\theta, w) \in B \), it holds that \( |\theta| \leq M_\theta \).

Similarly, we suppose, again by contradiction, that for every \( M'' > 0 \), there exists \( (\theta, w) \in B \) such that \( w > M'' \). Then,
\begin{align*}
    V(\theta, w) \leq b
    &\Longrightarrow
    w^{\beta} \leq b c_\pi^{-\eta} \pi^{\eta}(\theta), \\
    &\Longrightarrow
    w^{\beta} \leq b c_\pi^{-\eta} c_\pi^{\eta}, \qquad \text{since } \pi(\theta) \leq c_\pi, \\
    &\Longrightarrow
    w \leq b^{1/\beta}.
\end{align*}
This contradicts the assumption that \( w \) can be made arbitrarily large while satisfying \( V(\theta, w) \leq b \). Therefore, there exists \( M_w > 0 \) such that \( w \leq M_w \) for all \( (\theta, w) \in B \).

In conclusion, both \( \theta \) and \( w \) are bounded on the level set \( B \) of $V$, and hence there exists \( M > 0 \) such that \( |(\theta, w)| \leq M \).

The set $B$ is also closed. Let $(\theta_\ell, w_\ell)$ be a sequence in $B^\nats$ that converges to $(\theta, w)$ as $\ell \to \infty$. The function $\tilde{V}(\theta,w) = \pi^{-\eta}(\theta) (w^{-\alpha} \vee w^{\beta})$ is continuous as the product of two continuous functions. Therefore,
\[
\tilde{V}(\theta_\ell, w_\ell) \to \tilde{V}(\theta, w) \quad \text{as } \ell \to \infty.
\]
Since $\tilde{V}(\theta_\ell, w_\ell) \leq b c_\pi^{-\eta}$ for all $\ell$, it follows that $\tilde{V}(\theta, w) \leq b c_\pi^{-\eta}$, implying $(\theta, w) \in B$. Hence, $B$ is compact.

Moreover, we claim that $\mu_{\text{Leb}}(B) = \mu_{\text{Leb}}(V^{-1}[1,b]) > 0$. Since $(1,b) \subset [1,b]$, it follows that $V^{-1}(1,b) \subset V^{-1}[1,b]$. Additionally, $V^{-1}(1,b)$ is an open set because $(1,b)$ is open and $V$ is continuous. We now demonstrate that $V^{-1}(1,b) \neq\varnothing$. Since $\inf_{(\theta,w)} V = 1$ and $V$ is not constant at $1$, there must exist some $(\theta_1, w_1)$ such that $V(\theta_1, w_1) > 1$. If $V(\theta_1, w_1) < b$, then $(\theta_1, w_1) \in V^{-1}(1,b)$. If $V(\theta_1, w_1) \geq b$, then by the continuity of $V$ and the generalized intermediate value theorem, given that $\Theta \times \mathcal{W}$ is connected, there exists $(\theta_2, w_2)$ such that
  \[
  1 < V(\theta_2, w_2) < b \leq V(\theta_1, w_1).
  \]
  Specifically, we can choose $V(\theta_2, w_2) = (b+1)/2$ and find $(\theta_2, w_2)$ along a path connecting $\underset{(\theta,w)}{\arg \min }V$ and $(\theta_1, w_1)$. Therefore, $(\theta_2, w_2) \in V^{-1}(1,b)$. Since $V^{-1}(1,b) \neq\varnothing$ and $V^{-1}(1,b)$ is open, there must exist some $(\theta_0, w_0) \in V^{-1}(1,b)$ and some $\delta > 0$ such that
\[
B\left((\theta_0, w_0), \delta\right) \subset V^{-1}(1,b).
\]
Since the Lebesgue measure of an open ball is positive, $\mu_{\text{Leb}}\left(B\left((\theta_0, w_0), \delta\right)\right) > 0$, which implies $\mu_{\text{Leb}}(B) > 0$.
\end{proof}

\section{Synthetic Data Example: Theoretical Guarantees for Ergodicity}
\label{appendixD0}
\begin{proof}[Proof of Proposition~\ref{se:verification}]
Under the instrumental proposal distribution specified in~\eqref{intru}, the MH algorithm corresponding to the PM algorithm is a RWM algorithm targeting the posterior distribution \( \pi \), whose density is given in~\eqref{SEpi}, up to a normalizing constant. This density is continuously differentiable and supported on \( \mathbb{R} \). 

\noindent
To analyze the behavior of the log-density, we observe that
\begin{align*}
\nabla \log \pi(\theta)
&= \theta \left\{ \frac{T}{\theta^2 + 1} - \frac{T}{\theta^2 + 2} 
- \frac{1}{(\theta^2 + 2)^2} \sum_{t=1}^T (\theta - y_t)^2 \right. \\
&\quad \left. - \frac{\theta^2 + 1}{\theta^2 + 2} \left( T - \frac{1}{\theta} \sum_{t=1}^T y_t \right) 
- \frac{1}{\sigma_0^2} \right\},
\end{align*}
from which it follows that
\[
\frac{\theta}{|\theta|} \nabla \log \pi(\theta) 
\underset{|\theta| \to \infty}{\sim} 
-\frac{\theta^2}{|\theta|} \left(T + \frac{1}{\sigma_0^2}\right) 
\xrightarrow[|\theta| \to \infty]{} -\infty.
\]
\noindent
Furthermore, for some constant \( C > 0 \), the gradient of the  posterior density satisfies
\begin{align*}
\nabla \pi(\theta)
&= C \cdot \exp\left\{-\frac{1}{2}\left( 
\frac{\theta^2+1}{\theta^2+2} \sum_{t=1}^{T}(\theta - y_t)^2 
+ \frac{\theta^2}{\sigma_0^2} \right)\right\} \times \left\{ \frac{T\theta}{(\theta^2 + 2)^2} 
\left(\frac{\theta^2 + 1}{\theta^2 + 2}\right)^{\frac{T}{2} - 1} \right. \\
&\quad - \left(\frac{\theta^2 + 1}{\theta^2 + 2}\right)^{\frac{T}{2}} 
\left[ \frac{\theta}{(\theta^2 + 2)^2} \sum_{t=1}^{T}(\theta - y_t)^2 
+ \left( T\theta - \sum_{t=1}^T y_t \right) \frac{\theta^2 + 1}{\theta^2 + 2} 
+ \frac{\theta}{\sigma_0^2} \right] \Bigg\}.
\end{align*}
Consequently,
\[\frac{\theta}{|\theta|} \frac{\nabla \pi(\theta)}{|\nabla \pi(\theta)|} 
\underset{|\theta| \to \infty}{\sim} 
\frac{-(T + 1/\sigma_0^2)\theta^2}{(T + 1/\sigma_0^2)\theta^2} = -1 < 0.\]\\

\noindent
The proposal density \( q \) in~\eqref{intru} is Gaussian, and hence symmetric and bounded away from zero in a neighborhood of the origin. It follows that all conditions in Assumption~\ref{hyp1} are satisfied in the synthetic example setting.

Verification of Assumption~\ref{hyp2} is now provided for $N_0=1$. For any \( N \geq 1 \), define
\[
W_{N} := W_{N}(\theta) = \frac{\widehat p_{T,N}(y| \theta, U)}{p_{T}(y| \theta)}
= \prod_{t=1}^{T} \frac{1}{N} \sum_{n=1}^{N} 
\frac{\varphi(y_t ; U_{t,n}, 1)}{\varphi\left(y_t; \theta, \tfrac{\theta^2 + 2}{\theta^2 + 1} \right)}.
\]
The terms  \({\varphi(y_t ; U_{t,n}, 1)}/\varphi\left(y_t; \theta, \{\theta^2 + 2\}/\{\theta^2 + 1\} \right)\) are iid for \( n \in \{1, \dots, N\} \), and satisfy the conditional expectation identity:
\[
\E\left[ \frac{\varphi(y_t ; U_{t,1}, 1)}{\varphi\left(y_t; \theta, \tfrac{\theta^2 + 2}{\theta^2 + 1} \right)} 
\; \Bigg| \; \frac{1}{N} \sum_{n=1}^{N} \frac{\varphi(y_t ; U_{t,n}, 1)}{\varphi\left(y_t; \theta, \tfrac{\theta^2 + 2}{\theta^2 + 1} \right)} \right]
= \frac{1}{N} \sum_{n=1}^{N} \frac{\varphi(y_t ; U_{t,n}, 1)}{\varphi\left(y_t; \theta, \tfrac{\theta^2 + 2}{\theta^2 + 1} \right)}.
\]
Therefore, for each \( t \in \{1, \ldots, T\} \),
\[
\frac{1}{N} \sum_{n=1}^{N} \frac{\varphi(y_t ; U_{t,n}, 1)}{\varphi\left(y_t; \theta, \tfrac{\theta^2 + 2}{\theta^2 + 1} \right)} 
\preceq_{cx} \frac{\varphi(y_t ; U_{t,1}, 1)}{\varphi\left(y_t; \theta, \tfrac{\theta^2 + 2}{\theta^2 + 1} \right)},
\]
by direct application of Theorem~\ref{appendix:thm5} with
\[
X = X' = \frac{1}{N} \sum_{n=1}^{N} \frac{\varphi(y_t ; U_{t,n}, 1)}{\varphi\left(y_t; \theta, \tfrac{\theta^2 + 2}{\theta^2 + 1} \right)},
\quad 
Y = Y' = \frac{\varphi(y_t ; U_{t,1}, 1)}{\varphi\left(y_t; \theta, \tfrac{\theta^2 + 2}{\theta^2 + 1} \right)}.
\]
Hence, by Corollary~\ref{appendix:corollary3}, it follows that \( W_N \preceq_{cx} W_1 \), establishing Assumption~\ref{hyp2}.

Assumption~\ref{hyp3} is verified next with \( N_0 = 1 \). Define
\[
W_1 := W_1(\theta) = \frac{\widehat{p}_{T,1}(y|\theta, U)}{p_T(y|\theta)}
= \frac{\prod_{t=1}^T  \varphi\left(y_t; U_{t,1}, 1 \right)}{\prod_{t=1}^T \varphi\left(y_t; \theta, \tfrac{\theta^2 + 2}{\theta^2 + 1} \right)},
\]
where \( U_{t,1} \sim \mathcal{N}(\theta, {1}/\{\theta^2 + 1\}) \). We aim to show that for some \( \alpha_1 > 0 \), \( \beta_1 > 1 \),
\[
\underset{\theta \in \Theta}{\mathrm{ess\,sup}} \, \mathbb{E}[W_1^{-\alpha_1} \vee W_1^{\beta_1}] < \infty.
\]

The evaluation of \( \mathbb{E}[W_1^{\beta_1}] \) can be carried out by completing the square in the exponent:
\begin{align*}
\mathbb{E}[W_1^{\beta_1}] 
&= \prod_{t=1}^{T} \sqrt{ \frac{(\theta^2 + 2)^{\beta_1}}{(\theta^2 + 1)^{\beta_1 - 1} (\beta_1 + \theta^2 + 1)} } \\
&\quad \times \exp\left\{ \frac{\beta_1}{2} (\theta^2 + 1)(\theta - y_t)^2 
\left( \frac{1}{\theta^2 + 2} - \frac{1}{\beta_1 + \theta^2 + 1} \right) \right\}.
\end{align*}
Choosing \( \beta_1 = 2 \), we obtain:
\[
\mathbb{E}[W_1^2] = \frac{(\theta^2 + 2)^T}{(\theta^2 + 1)^{T/2} (\theta^2 + 3)^{T/2}} 
\exp\left\{ \frac{(\theta^2 + 1)}{(\theta^2 + 2)(\theta^2 + 3)} 
\sum_{t=1}^T (\theta - y_t)^2 \right\},
\]
which is finite for all \( \theta \in \mathbb{R} \) and converges to \( \exp\{T\} \) as \( |\theta| \to \infty \). Therefore,
\[
\underset{\theta \in \Theta}{\mathrm{ess\,sup}} \, \mathbb{E}[W_1^2] < \infty.
\]
Similarly, taking \( \alpha_1 = 1/2 \) yields
\[
\underset{\theta \in \Theta}{\mathrm{ess\,sup}} \, \mathbb{E}[W_1^{-1/2}] < \infty.
\]
Thus, for \( N_0 = 1 \), \( \alpha_1 = 1/2 \), and \( \beta_1 = 2 \), Assumption~\ref{hyp3} holds via the inequality
\[
\mathbb{E}[W_1^{-\alpha_1} \vee W_1^{\beta_1}] 
\leq \mathbb{E}[W_1^{-\alpha_1}] + \mathbb{E}[W_1^{\beta_1}].
\]

To verify Assumption~\ref{hyp4} in the context of the synthetic example, it suffices to establish that, for all \( (\theta, u) \in \Theta \times \mathcal{U} \) and for all \( N \geq N_0 \),
\begin{equation*}
    \E[\alpha_N(\theta, u; \vartheta, V) |\theta, u] 
    = \int_{\Theta \times \mathcal{U}} q(\vartheta |\theta) m_{N,\vartheta}(v) 
    \alpha_N(\theta, u; \vartheta, v) \, d\vartheta \, dv < 1,
    \label{eq:assump4_expectation}
\end{equation*}
where \( \alpha_N(\theta, u; \vartheta, V) = \min\{1, \Xi_N(\theta, u; \vartheta, V)\} \) denotes the acceptance probability, and \( \Xi_N(\theta, u; \vartheta, V) \) is the associated acceptance ratio for the PM algorithm:
\[
\Xi_N(\theta, u; \vartheta, V) 
= \frac{ \prod_{t=1}^T \frac{1}{N} \sum_{n=1}^N \varphi(y_t; V_{t,n}, 1) \cdot \varphi(\vartheta; 0, \sigma_0^2)}{ \prod_{t=1}^T \frac{1}{N} \sum_{n=1}^N \varphi(y_t; U_{t,n}, 1) \cdot \varphi(\theta; 0, \sigma_0^2) },
\]
where \( \vartheta \sim \mathcal{N}(\theta, 8/T) \), and the variables \( V_{t,n} \sim \mathcal{N}(\vartheta, 1/\{\vartheta^2 + 1\}) \) are independently drawn for \( t \in \{1, \ldots, T\} \), \( n \in \{1, \ldots, N\} \).
Define the set
\[
\Omega_N^{\theta, u} := \left\{ (\vartheta, V) \in \Theta \times \mathcal{U} : \Xi_N(\theta, u; \vartheta, V) < 1 \right\}.
\]
The conditional expectation may then be decomposed as
\[
\E[\alpha_N(\theta, u; \vartheta, V) |\theta, u] 
= \E[\Xi_N(\theta, u; \vartheta, V) \, \ind_{\Omega_N^{\theta, u}} |\theta, u] 
+ \Prob(\bar{\Omega}_N^{\theta, u} |\theta, u).
\]
To show that this quantity is strictly less than one, consider the event
\[
A := \left\{ (\vartheta, V) : |\vartheta| > |\theta| 
\ \text{and} \ |V_{t,n} - y_t| > |U_{t,n} - y_t| \ \text{for all } t, n \right\}.
\]
On this event, each component of the numerator in \( \Xi_N \) is strictly smaller than the corresponding term in the denominator:
\[
\varphi(\vartheta; 0, \sigma_0^2) < \varphi(\theta; 0, \sigma_0^2), \quad 
\varphi(y_t; V_{t,n}, 1) < \varphi(y_t; U_{t,n}, 1) \quad \text{for all } t, n,
\]
and hence \( \Xi_N(\theta, u; \vartheta, V) < 1 \), implying that \( A \subseteq \Omega_N^{\theta, u} \). Since the proposal distribution, in this setting, is a product of continuous Gaussian densities, the probability of event \( A \) is strictly positive and
\[
\Prob(\Omega_N^{\theta, u} |\theta, u) \geq \Prob(A |\theta, u) > 0.
\]
It follows that, since \( \Prob(\Omega_N^{\theta, u} |\theta, u) > 0 \),
\begin{align*}
\E[\alpha_N(\theta, u; \vartheta, V) |\theta, u]
&= \E[\Xi_N(\theta, u; \vartheta, V) \, \ind_{\Omega_N^{\theta, u}} |\theta, u] 
+ \Prob(\bar{\Omega}_N^{\theta, u} |\theta, u) \\
&< \Prob(\Omega_N^{\theta, u} |\theta, u) + \Prob(\bar{\Omega}_N^{\theta, u} |\theta, u) = 1,
\end{align*}
Therefore, Assumption~\ref{hyp4} is satisfied.

In summary, the verification of all four assumptions required by Theorem~\ref{thm1} has been completed for the synthetic example described above. As a result, the APM algorithm is theoretically guaranteed to be ergodic in this setting.
\end{proof}

\section{Synthetic Data Example: Simulations}
\label{appendixD}
\subsection{Example run of the Dichotomic search}
\label{appendixD1}
The dichotomic search algorithm was employed to determine the optimal number of particles, $N_{\text{opt}}$, such that $\hat{\sigma}_N(\hat{\theta}_{100}) \approx 1.16$. At each iteration of the search, a Monte Carlo algorithm was executed to compute $\hat{\sigma}_N(\hat{\theta}_{100})$ using $10^4$ iterations.

\begin{table}[h]
\centering
\caption{Dichotomic Search Progression for Run 1}
\label{tab:dichotomic_run1}
\begin{tabular}{lllll}
\toprule
\textbf{Iteration} & 
\textbf{Tested $N$} & 
$\mathbf{\hat{\sigma}_N}$ & 
\textbf{Start} & 
\textbf{End} \\
\hline
0 & 100 & 1.652 & 100 & 1000 \\
0 & 1000 & 0.521 & 100 & 1000\\
1 & 550 & 0.703 & 100 & 550   \\
2 & 325 & 0.926 & 100 & 325  \\
3 & 213 & 1.120 & 100 & 213 \\
4 & 157 & 1.306 & 157 & 213  \\
5 & 185 & 1.200 & 185 & 213  \\
6 & 199 & 1.165 & 199 & 213 \\
7 & 206 & 1.140 & 199 & 206 \\
8 & 203 & 1.162 & 203 & 206\\
9 & 205  & 1.145 & 203 & 205  \\      
10 & 204 & 1.143 & 203 & 204\\
\bottomrule
\end{tabular}
\end{table}

The search terminated when the length of the final interval, $[203, 204]$, was $a_1 = 1$. The optimal value was found at $N_{\text{opt}} = 203$, where $\hat{\sigma}_{203}(\hat{\theta}_{100}) = 1.162$, which is the closest value to the target of 1.16.

\subsection{Analogous parameters used for implementing the non adaptive and APM methods}
\label{appendixD2}
Table~\ref{tab:method_comparison} presents the key parameters used in both the non adaptive and APM methods and highlights their correspondence to ensure a fair and consistent comparison.

\begin{table}[h]
\caption{Analogous parameter settings used in the non adaptive and APM methods}
    \centering
    \renewcommand{\arraystretch}{1.3}
    \begin{tabular}{l|c|c}
        \textbf{Parameter} & \textbf{Non adaptive method} & \textbf{APM} \\
        \hline
        \multicolumn{3}{l}{\textbf{Algorithm configuration}} \\
        \hline
        Number of iterations & $10^6$ (Step \ref{step3}) & $10^6$ (total chain length $L$) \\
        Burn-in & $2\cdot10^5$ (Step \ref{step3}) & $2\cdot10^5$ (out of total chain length $L$) \\
        Initial parameter value & $\theta_0 = 0$ & $\theta_0 = 0$ \\
        Initial number of particles & $N_1 = 100$ (Step \ref{step1}) & $N_0 = 100$ (initial) \\
        Step size / Precision & $a_1 = 1$ (\textit{dichotomic search} precision) & $a = 1$ (adaptive step size) \\
        \hline
        \multicolumn{3}{l}{\textbf{Implementation details}} \\
        \hline
        Instrumental distribution & Same as in \eqref{intru} (Step \ref{step1}) & Same as in \eqref{intru}\\
    \end{tabular}
    \label{tab:method_comparison}
\end{table}

\subsection{Estimations of $\sigma_N$ for multiple $N$ and runs}
\label{appendixD3}
 Detailed values of \( N \) selected by the \textit{dichotomic search} algorithm in Step~\ref{step2} of the non adaptive method, along with the corresponding estimates of \( \sigma_N(\hat\theta_{100}) \), are provided in Table~\ref{comp:step2}.
\begin{table}[h]
\centering
\caption{Estimations of $\sigma_N(\hat \theta_{100})$ for multiple independent runs at each $N$, using $10^{4}$ Monte Carlo iterations.}
\label{comp:step2}
\begin{tabular}{@{}l*{10}{c}@{}}
\toprule
\textbf{N} & \multicolumn{10}{c}{\textbf{Run}} \\
\cmidrule(lr){2-11}
 & 1 & 2 & 3 & 4 & 5 & 6 & 7 & 8 & 9 & 10 \\
\hline
100 & 1.652 & 1.661 & 1.652 & 1.639 & 1.655 & 1.672 & 1.648 & 1.628 & 1.641 & 1.630 \\
157 & 1.306 & 1.320 & 1.322 & 1.325 & 1.321 & 1.322 & 1.324 & 1.315 & 1.324 & 1.320 \\
185 & 1.200 & 1.214 & 1.217 & 1.214 & 1.217 & 1.198 & 1.206 & 1.207 & 1.207 & 1.207 \\
192 & & & 1.194 & & & & & & 1.198 & \\
196 & & & 1.190 & & & & & & 1.182 & \\
198 & & & 1.170 & & & & & & 1.174 & \\
199 & 1.165 & 1.162 & \textbf{1.158} & 1.177 & 1.177 & 1.168 & 1.179 & 1.167 & \textbf{1.157} & 1.165 \\
200 & & & & 1.169 & & & & & & \\
201 & & 1.166 & & \textbf{1.156} & & \textbf{1.162} & & & & \textbf{1.160} \\
202 & & \textbf{1.161} & & & & 1.172 & \textbf{1.161} & & & 1.168 \\
203 & \textbf{1.162} & 1.157 & & 1.153 & & 1.157 & 1.149 & 1.164 & & 1.158 \\
204 & 1.143 & & & & & & & \textbf{1.160} & & \\
205 & 1.145 & & & & & & & 1.152 & & \\
206 & 1.140 & 1.136 & & 1.130 & 1.163 & 1.138 & 1.138 & 1.141 & & 1.151 \\
208 & & & & & \textbf{1.160} & & & & & \\
209 & & & & & 1.151 & & & & & \\
210 & & & & & 1.132 & & & & & \\
213 & 1.120 & 1.144 & 1.136 & 1.133 & 1.136 & 1.130 & 1.137 & 1.137 & 1.126 & 1.133 \\
325 & 0.926 & 0.914 & 0.901 & 0.911 & 0.904 & 0.918 & 0.915 & 0.903 & 0.914 & 0.912 \\
550 & 0.703 & 0.703 & 0.702 & 0.702 & 0.702 & 0.709 & 0.696 & 0.708 & 0.705 & 0.702 \\
1000 & 0.521 & 0.520 & 0.521 & 0.522 & 0.524 & 0.518 & 0.523 & 0.519 & 0.524 & 0.518 \\
\hline
\textbf{Optimal N} & 203 & 202 & 199 & 201 & 208 & 201 & 202 & 204 & 199 & 201 \\
\bottomrule
\end{tabular}
\vspace{0.2cm}
\footnotesize Note: Empty cells indicate $N$ values not tested in that run. Bottom row shows the optimal $N$ (closest to 1.16) for each run.
\end{table}

\subsection{Convergence figures}
\label{appendixD4}
Figure~\ref{se:conv} presents convergence diagnostics for the APM and the PM (in Step~\ref{step3} of the non adaptive method) algorithms. The trace plots (left panel) show the evolution of the parameter \( \theta \) over the last 100 iterations for a representative run of each method, indicating similar mixing behavior. The autocorrelation functions (right panel) also reveal comparable levels of dependence across iterations. These figures suggest that the adaptive mechanism in APM preserves the convergence properties of the PM algorithm.

\begin{figure}[h]
  \begin{minipage}{0.48\textwidth}
    \centering
    \includegraphics[width=\linewidth]{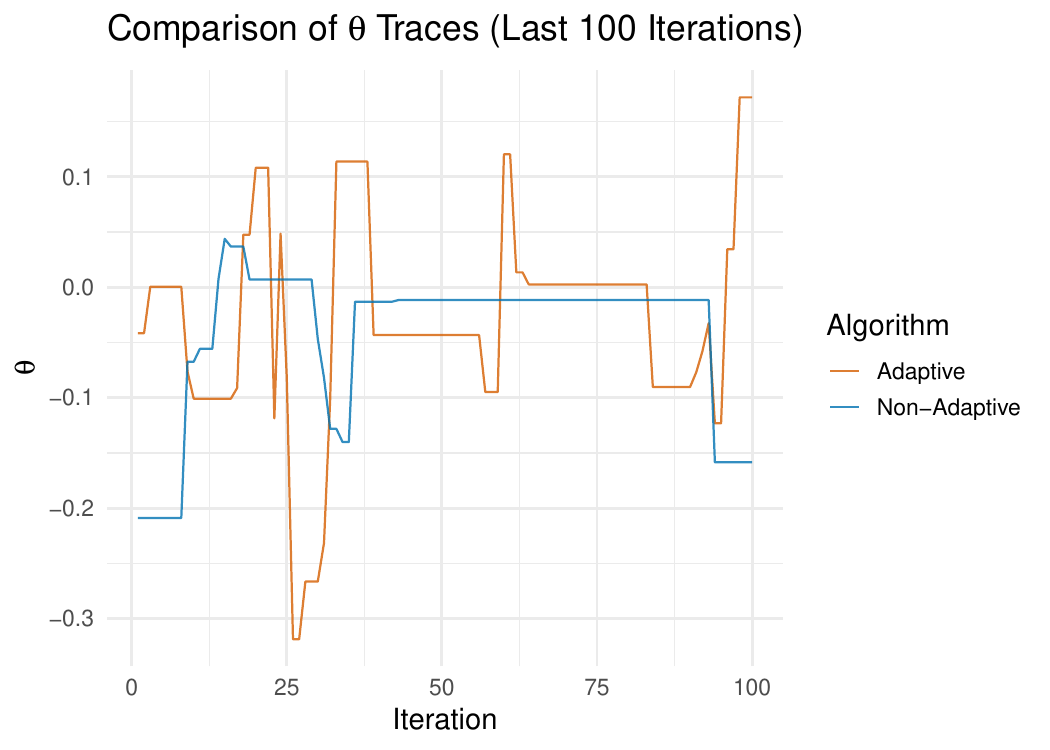}
  \end{minipage}
  \hfill
  \begin{minipage}{0.48\textwidth}
    \centering
    \vspace{0.3cm}
    \includegraphics[width=\linewidth]{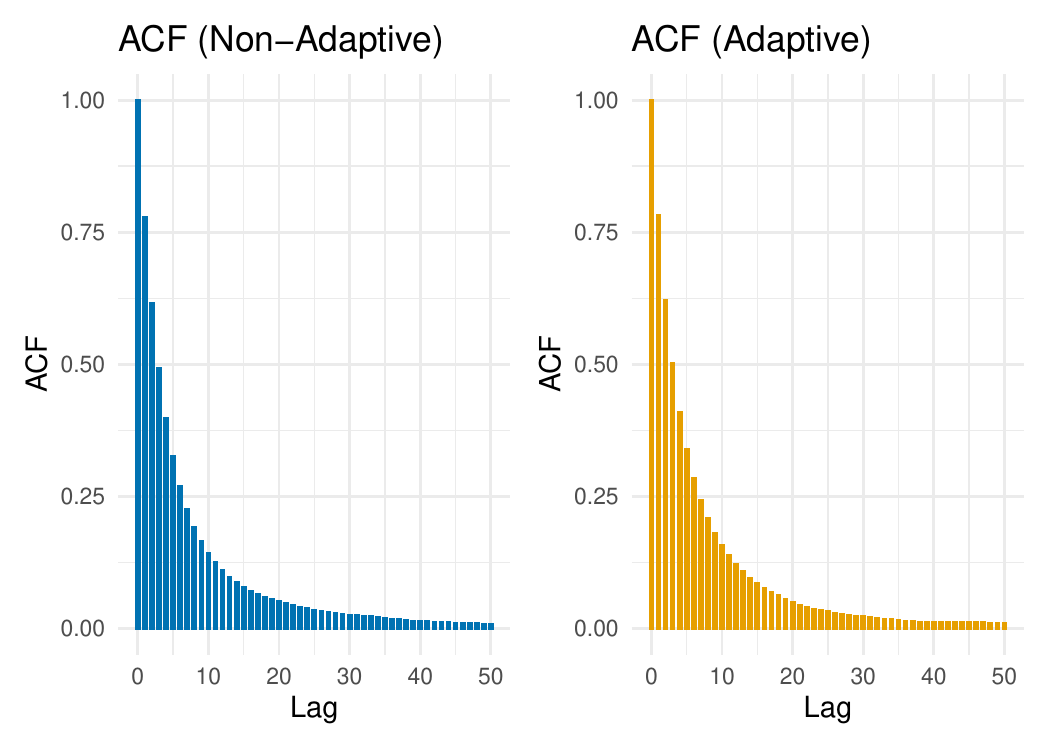}
  \end{minipage}
  \caption{Left: Trace plots of \( \theta \) over the last 100 iterations for a single run of the APM and non adaptive methods. Right: Autocorrelation figure for the same run using a Burn-in of $2\cdot10^5$, comparing the APM and non adaptive approaches.}
  \label{se:conv}
\end{figure}

\section{Real Data Study: Simulations}
\label{appendixE}
\subsection{Analogous parameters used for implementing the non adaptive and APM methods}
\label{appendixE2}
Table~\ref{tab:method_comparison} presents the key parameters used in both the non adaptive and APM methods and highlights their correspondence to ensure a fair and consistent comparison.

\begin{table}[h]
\caption{Analogous parameter settings used in the non adaptive and APM methods for the real data example}
    \centering
    \renewcommand{\arraystretch}{1.3}
    \begin{tabular}{l|c|c}
        \textbf{Parameter} & \textbf{Non adaptive method} & \textbf{APM} \\
        \hline
        \multicolumn{3}{l}{\textbf{Algorithm configuration}} \\
        \hline
        Number of iterations & $10^6$ (Step \ref{step3}) & $10^6$ (total chain length $L$) \\
        Burn-in & $4 \cdot 10^5$ (Step \ref{step3}) & $4 \cdot 10^5$ (out of total chain length $L$) \\
        Initial parameter value & $\theta_0 $ & $\theta_0$ \\
        Initial number of particles & $N_1 = 10$ (Step \ref{step1}) & $N_0 = 10$ (initial) \\
        Step size / Precision & $a_1 = 1$ (\textit{dichotomic search} precision) & $a = 1$ (adaptive step size) \\
        \hline
        \multicolumn{3}{l}{\textbf{Implementation details}} \\
        \hline
        Instrumental distribution & Same as in \eqref{intru2} (Step \ref{step1}) & Same as in \eqref{intru2}\\
    \end{tabular}
    \label{tab:method_comparison2}
\end{table}
\noindent
where $\theta_0= (-2.788, -0.035, 0.560, -0.614, -0.173, -0.461, -0.052, 0.192, 0.944)$, and 
\begin{equation*}
\Sigma_p = 
\begin{bmatrix}
  0.0530  &  0.0003  & -0.0211  &  0.0149  &  0.0103  & -0.0251  & -0.0009  & -0.0343  & -0.0384 \\ 
  0.0003  &  0.0001  & -0.0004  &  0.0000  &  0.0001  &  0.0001  &  0.0001  & -0.0003  & -0.0003 \\
 -0.0211  & -0.0004  & {0.2570}  & -0.0103  & -0.0065  &  0.0112  &  0.0000  & -0.0094  & -0.0119 \\
  0.0149  &  0.0000  & -0.0103  &  0.0318  &  0.0070  &  0.0001  &  0.0003  &  0.0050  & -0.0033 \\
  0.0103  &  0.0001  & -0.0065  &  0.0070  &  0.0321  & -0.0006  &  0.0004  & -0.0005  &  0.0000 \\
 -0.0251  &  0.0001  &  0.0112  &  0.0001  & -0.0006  &  0.0761  &  0.0002  & -0.0015  & -0.0075 \\
 -0.0009  &  0.0001  &  0.0000  &  0.0003  &  0.0004  &  0.0002  &  0.0008  &  0.0071  & -0.0011 \\
 -0.0343  & -0.0003  & -0.0094  &  0.0050  & -0.0005  & -0.0015  &  0.0071  & {0.2169}  &  0.0064 \\
 -0.0384  & -0.0003  & -0.0119  & -0.0033  &  0.0000  & -0.0075  & -0.0011  &  0.0064  & {0.1348}
\end{bmatrix}
\end{equation*}

\subsection{Estimations of \(\sigma_N\) for multiple \(N\) and runs}
\label{appendixE3}
 Detailed values of \( N \) selected by the \textit{dichotomic search} algorithm in Step~\ref{step2} of the non adaptive method applied to the real data example, along with the corresponding estimates of \( \sigma_N(\hat\theta_{10}) \), are provided in Table~\ref{re:step2}.
 
\begin{table}[h]
\centering
\caption{Estimations of $\sigma_N(\hat \theta_{10})$ for multiple independent runs at each $N$, using $10^{4}$ Monte Carlo iterations.}
\label{re:step2}
\begin{tabular}{@{}l*{10}{c}@{}}
\toprule
\textbf{N} & \multicolumn{10}{c}{\textbf{Run}} \\
\cmidrule(lr){2-11}
 & 1 & 2 & 3 & 4 & 5 & 6 & 7 & 8 & 9 & 10 \\
\hline
10 & 2.183 & 2.166 & 2.216 & 2.159 & 2.210 & 2.188 & 2.165 & 2.112 & 2.179 & 2.246 \\
16 & & & & & & & & 1.646 & & \\
18 & & & & & & & & 1.549 & & \\
20 & & & & & & & & 1.470 & & \\
21 & 1.487 & 1.462 & 1.493 & 1.466 & 1.465 & 1.472 & \textbf{1.460} & \textbf{1.429} & \textbf{1.454} & 1.523 \\
22 & \textbf{1.436} & \textbf{1.433} & \textbf{1.447} & \textbf{1.430} & \textbf{1.458} & \textbf{1.425} & 1.415 & & 1.412 & 1.489 \\
23 & & & 1.418 & & 1.397 & & & & & \textbf{1.444} \\
24 & 1.367 & 1.353 & 1.360 & 1.353 & 1.386 & 1.378 & 1.358 & & 1.362 & 1.405 \\
26 & 1.318 & 1.320 & 1.315 & 1.301 & 1.331 & 1.315 & 1.307 & & 1.304 & 1.337 \\
32 & 1.178 & 1.164 & 1.192 & 1.174 & 1.170 & 1.182 & 1.171 & 1.136 & 1.175 & 1.212 \\
55 & 0.890 & 0.874 & 0.894 & 0.877 & 0.910 & 0.894 & 0.893 & 0.867 & 0.899 & 0.915 \\
100 & 0.652 & 0.658 & 0.661 & 0.649 & 0.662 & 0.665 & 0.652 & 0.645 & 0.657 & 0.672 \\
\hline
\textbf{Optimal N} & 22 & 22 & 22 & 22 & 22 & 22 & 21 & 21 & 21 & 23 \\
\bottomrule
\end{tabular}
\vspace{0.2cm}
\footnotesize Note: Empty cells indicate N values not tested in that run. Bottom row shows the optimal N (with $\hat\sigma_{N}(\hat\theta_{10})$ closest to 1.44) for each run. 
\end{table}

\subsection{Convergence figures}
\label{appendixE4}
Figure~\ref{re:conv9} displays convergence diagnostics for the APM and PM algorithms (the latter in Step~\ref{step3} of the non adaptive method) in the real data example. Nine panels are shown, one for each component of the parameter. For each component, the trace plots (left) show the evolution of the parameter's component over the last 100 iterations in a representative run, indicating comparable mixing behavior. The autocorrelation functions (right) likewise exhibit similar dependence across iterations.

\begin{figure}[h]
  \begin{minipage}{0.48\textwidth}
    \centering
    \includegraphics[width=\linewidth]{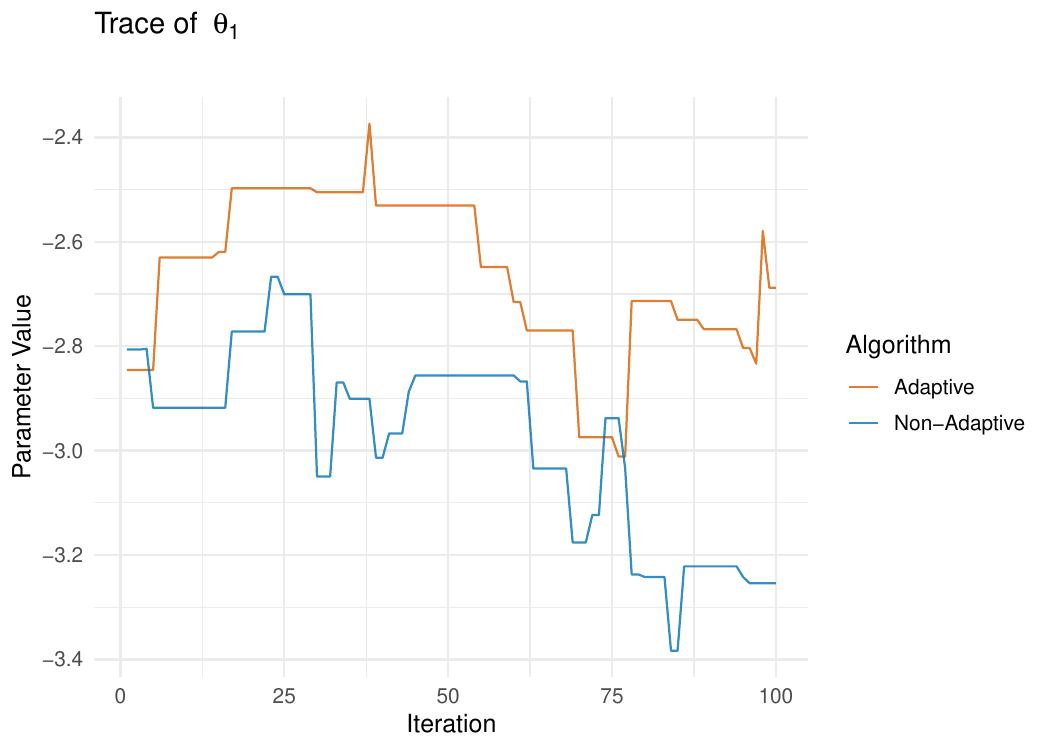}
  \end{minipage}
  \hfill
  \begin{minipage}{0.48\textwidth}
    \centering
    \vspace{0.3cm}
    \includegraphics[width=\linewidth]{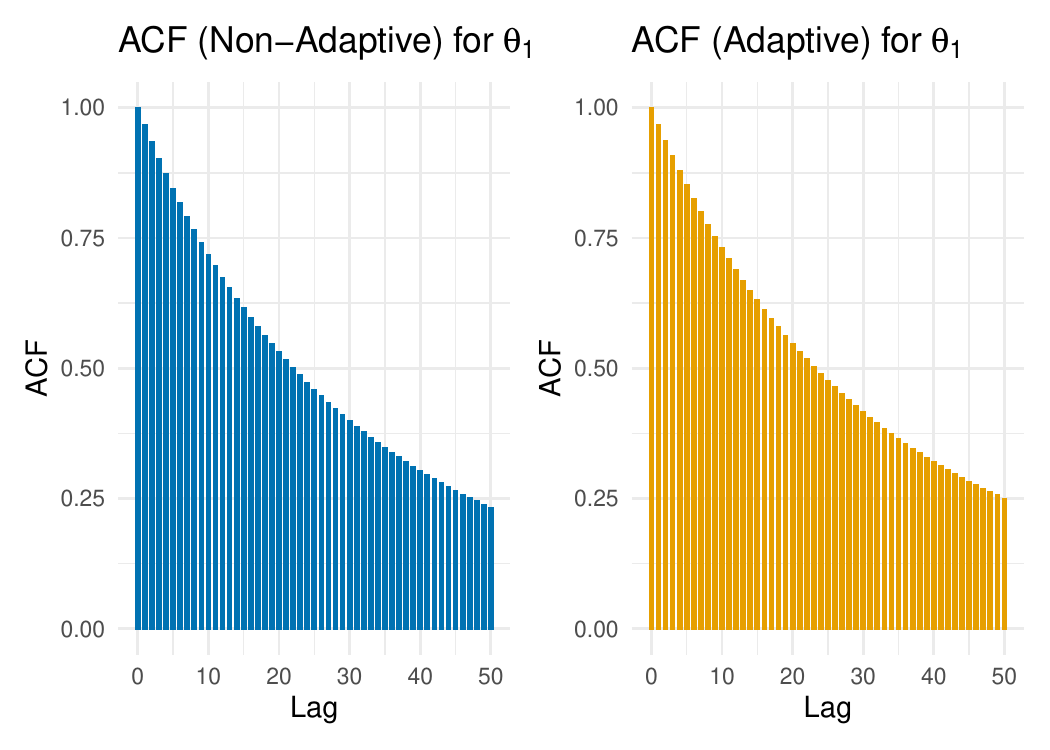}
  \end{minipage}

  \begin{minipage}{0.48\textwidth}
    \centering
    \includegraphics[width=\linewidth]{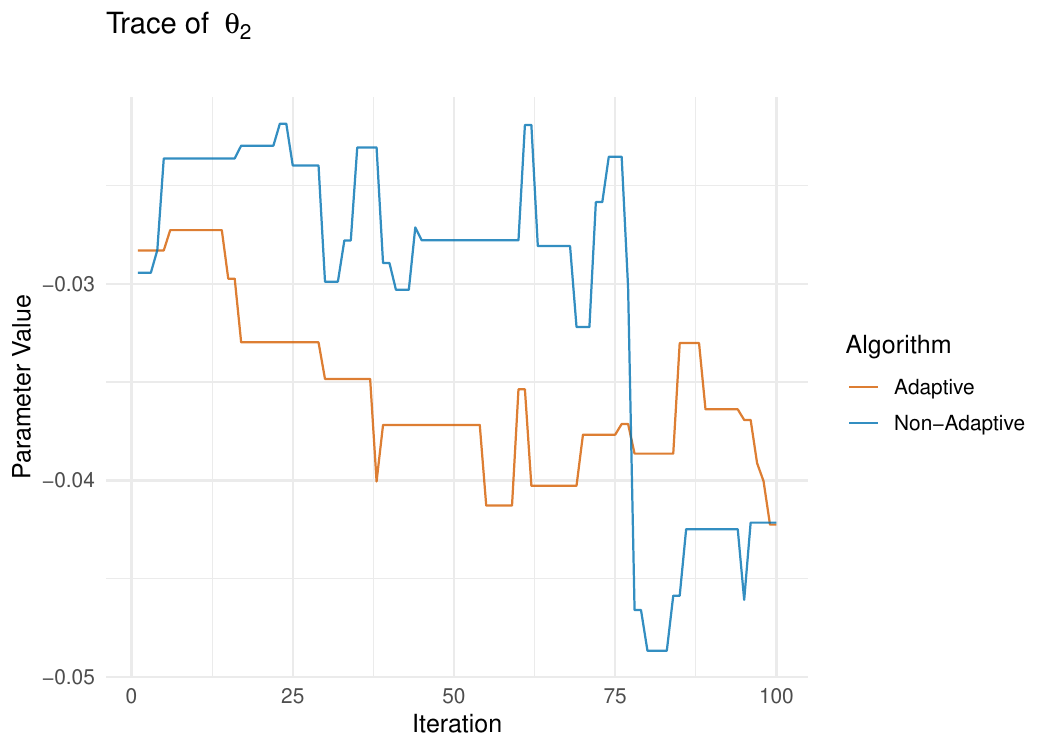}
  \end{minipage}
  \hfill
  \begin{minipage}{0.48\textwidth}
    \centering
    \vspace{0.3cm}
    \includegraphics[width=\linewidth]{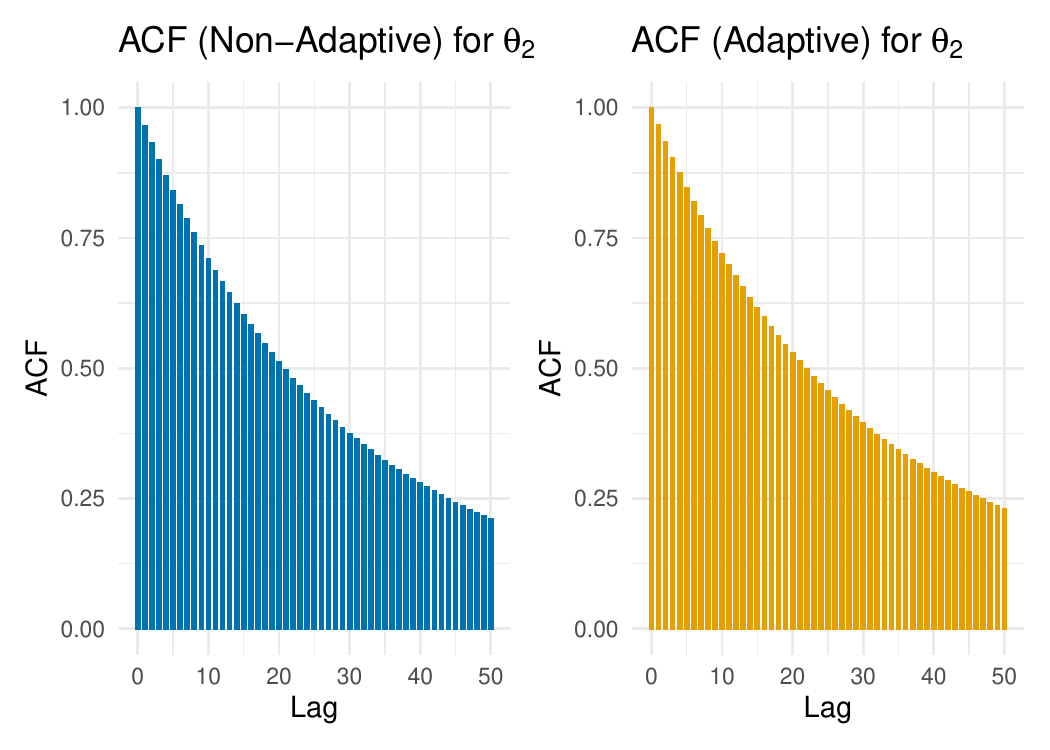}
  \end{minipage}

  \begin{minipage}{0.48\textwidth}
    \centering
    \includegraphics[width=\linewidth]{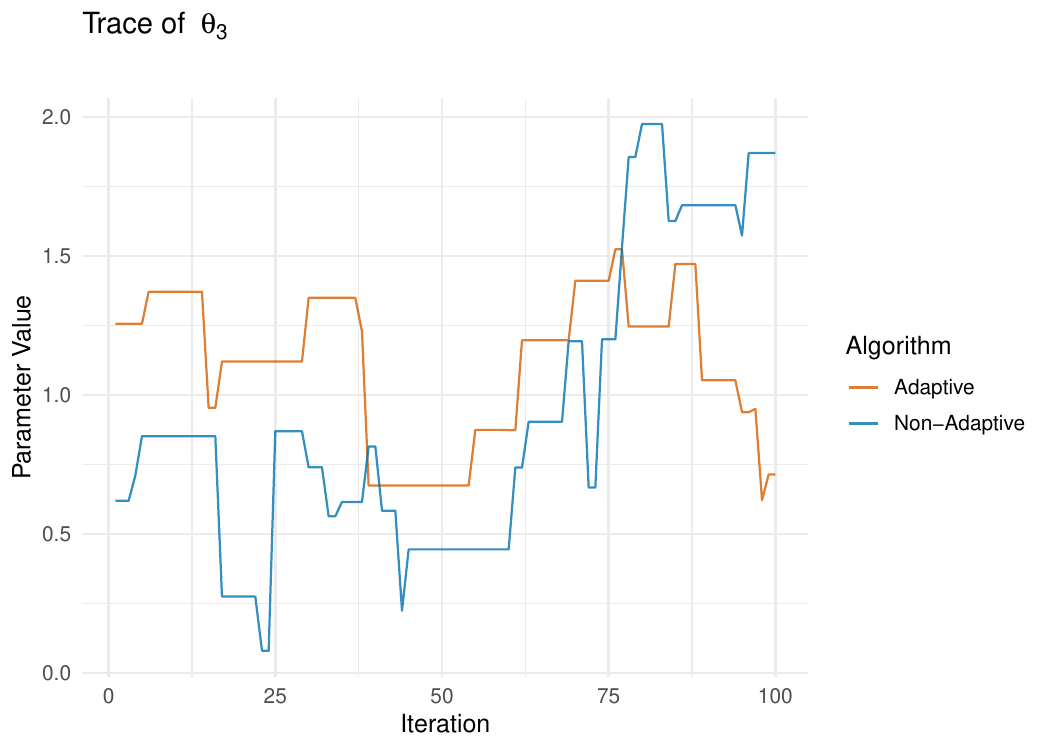}
  \end{minipage}
  \hfill
  \begin{minipage}{0.48\textwidth}
    \centering
    \vspace{0.3cm}
    \includegraphics[width=\linewidth]{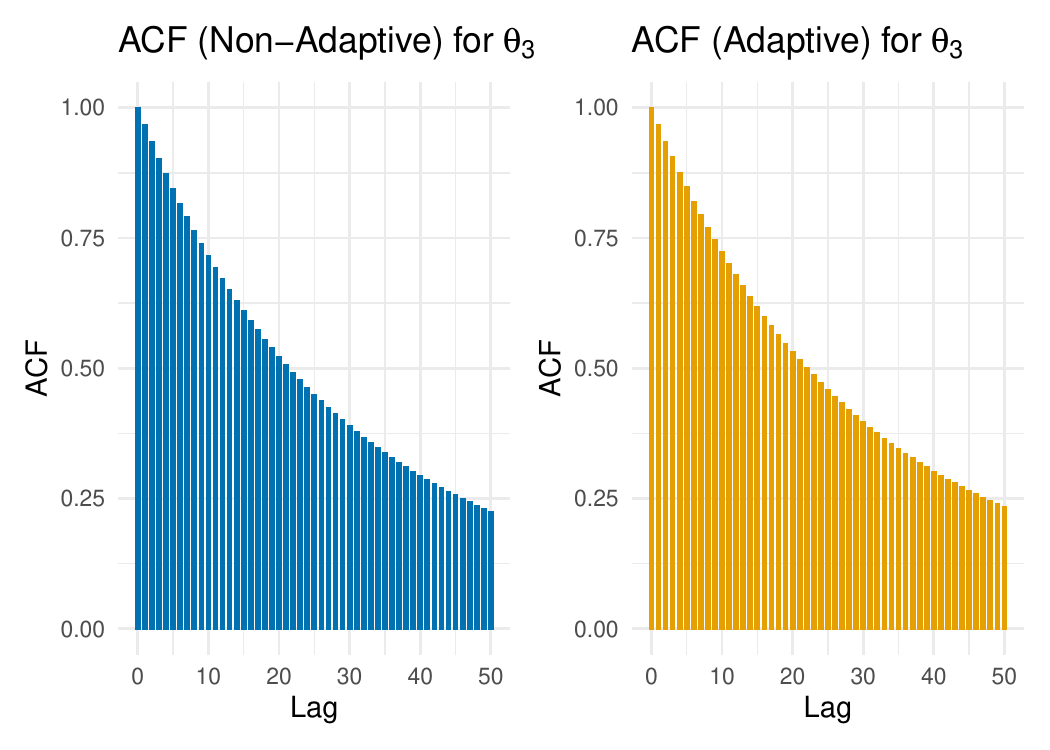}
  \end{minipage}
\end{figure}

\begin{figure}[h]
  \begin{minipage}{0.48\textwidth}
    \centering
    \includegraphics[width=\linewidth]{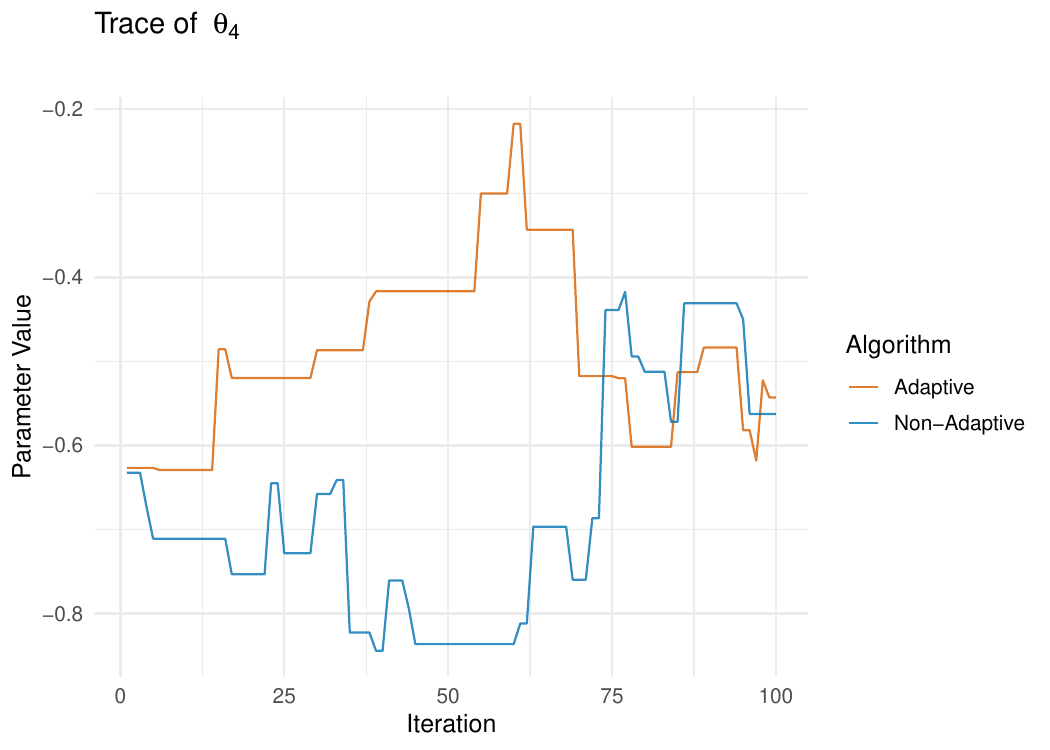}
  \end{minipage}
  \hfill
  \begin{minipage}{0.48\textwidth}
    \centering
    \vspace{0.3cm}
    \includegraphics[width=\linewidth]{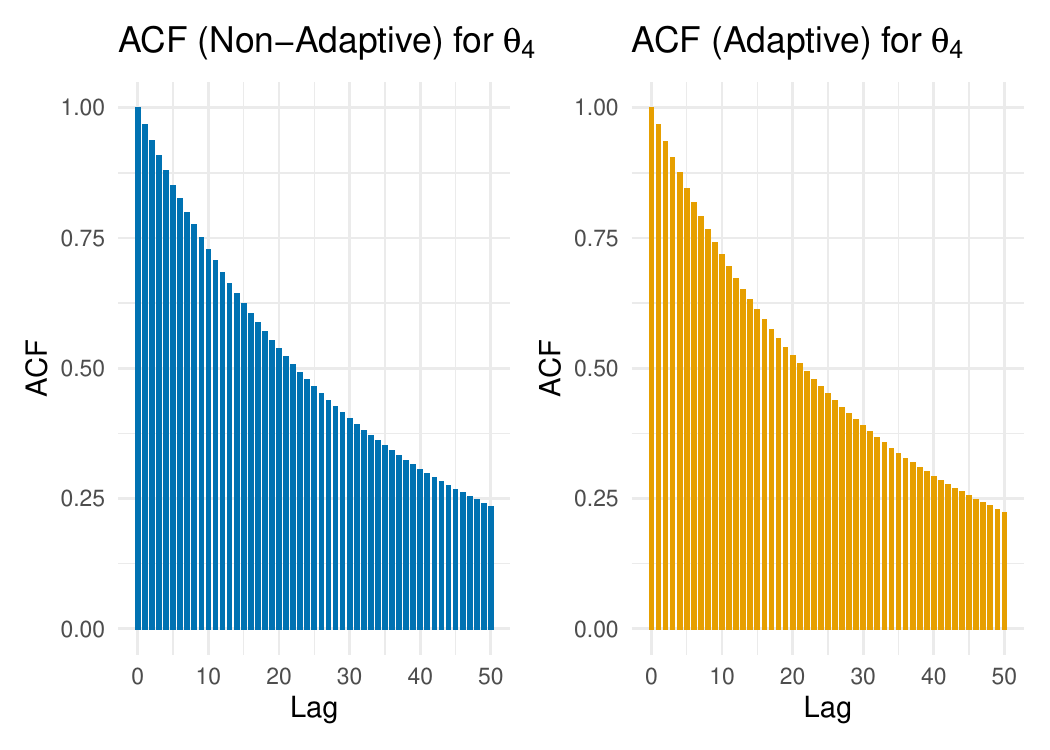}
  \end{minipage}

  \begin{minipage}{0.48\textwidth}
    \centering
    \includegraphics[width=\linewidth]{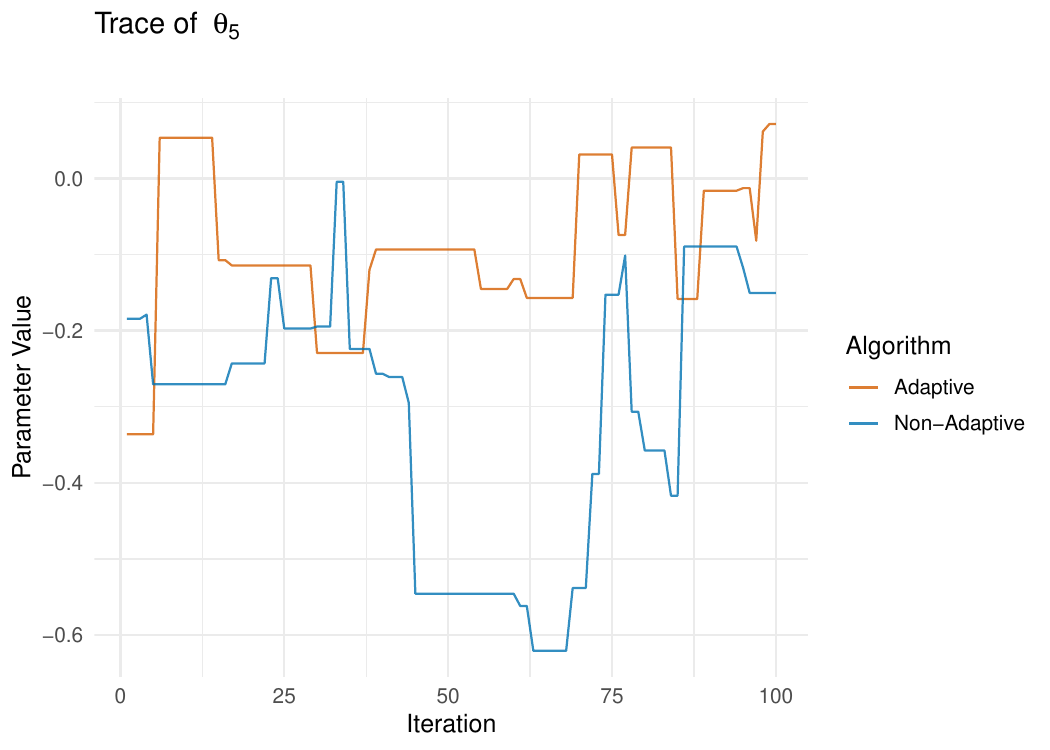}
  \end{minipage}
  \hfill
  \begin{minipage}{0.48\textwidth}
    \centering
    \vspace{0.3cm}
    \includegraphics[width=\linewidth]{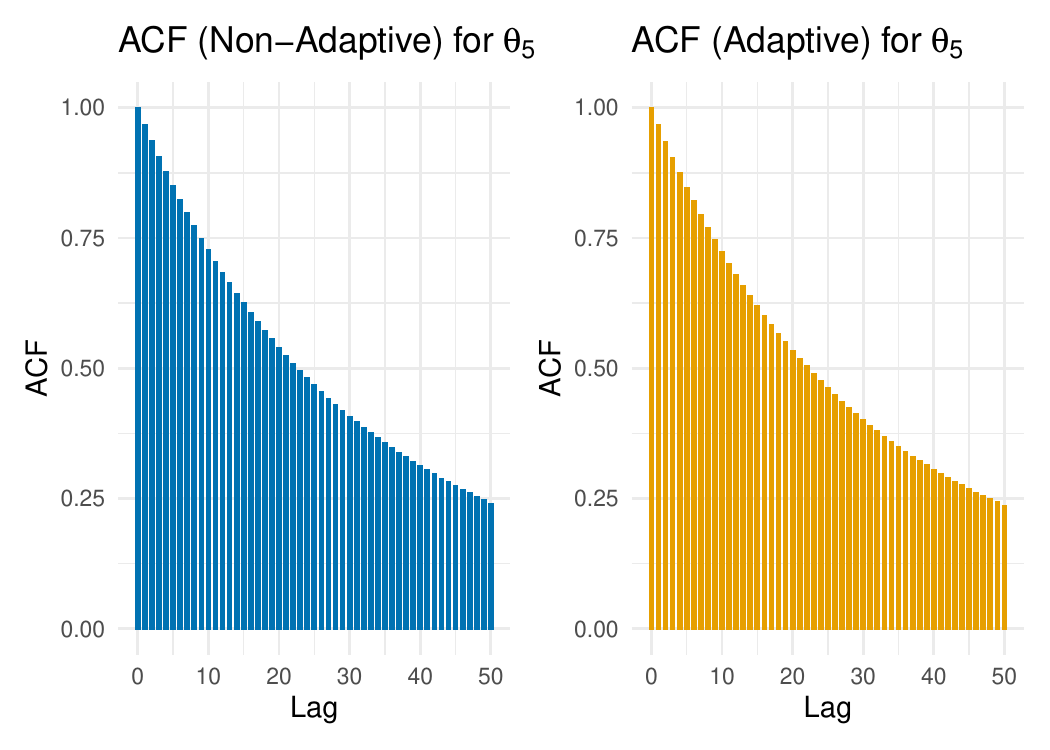}
  \end{minipage}

  \begin{minipage}{0.48\textwidth}
    \centering
    \includegraphics[width=\linewidth]{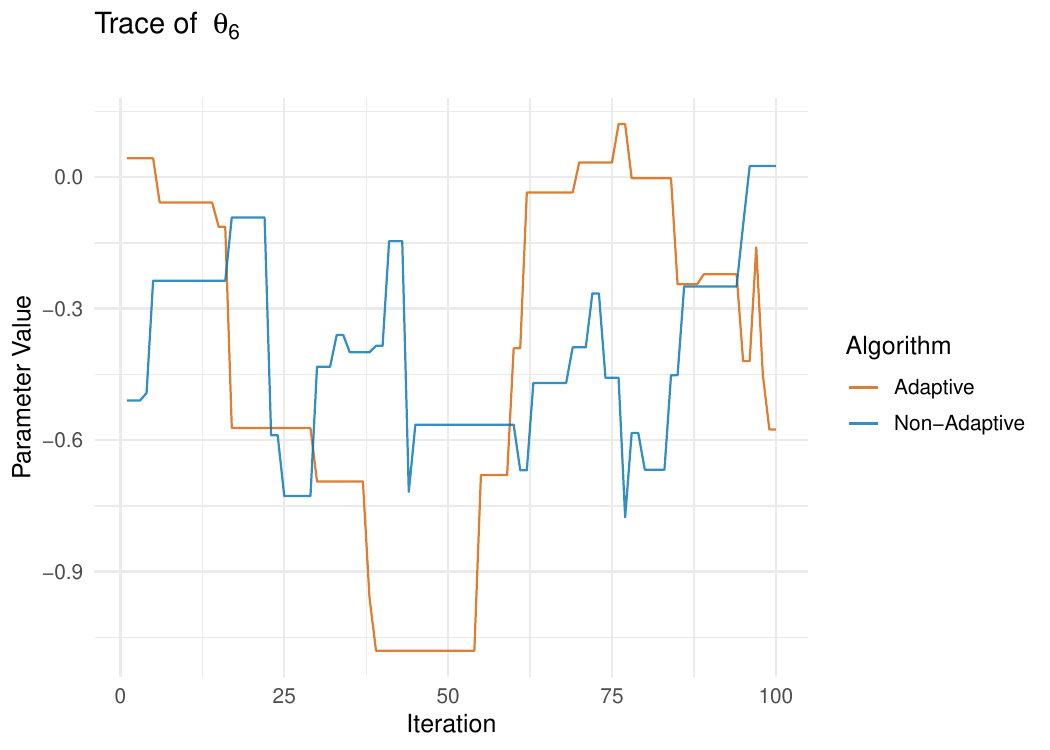}
  \end{minipage}
  \hfill
  \begin{minipage}{0.48\textwidth}
    \centering
    \vspace{0.3cm}
    \includegraphics[width=\linewidth]{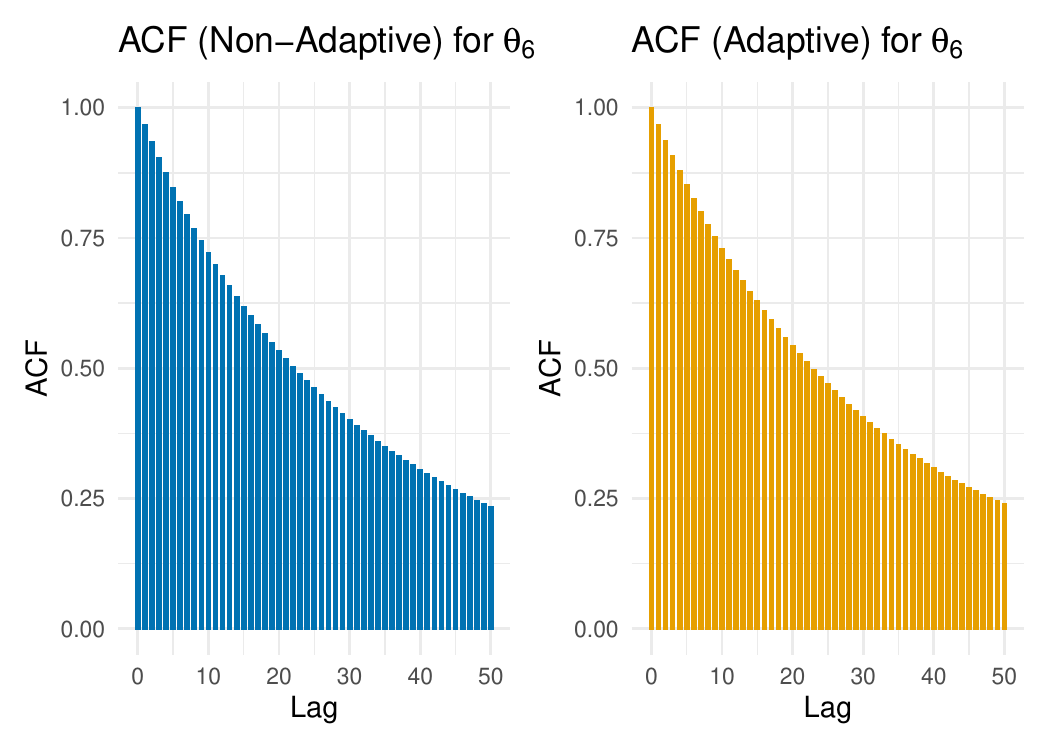}
  \end{minipage}
  
   \begin{minipage}{0.48\textwidth}
    \centering
    \includegraphics[width=\linewidth]{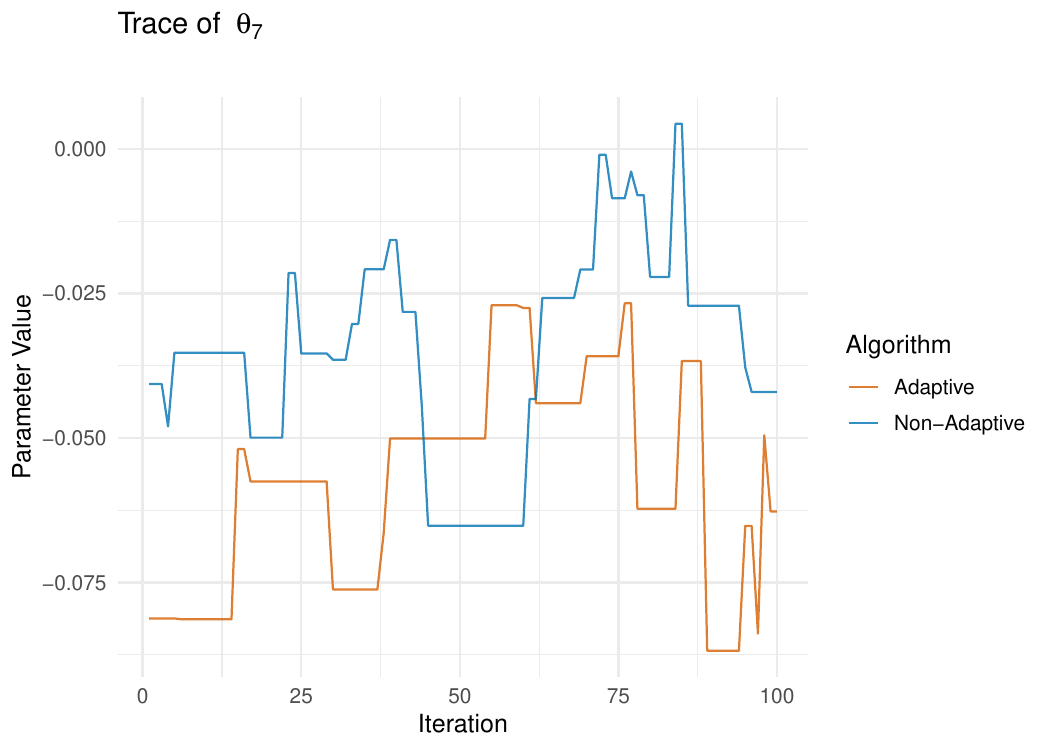}
  \end{minipage}
  \hfill
  \begin{minipage}{0.48\textwidth}
    \centering
    \vspace{0.3cm}
    \includegraphics[width=\linewidth]{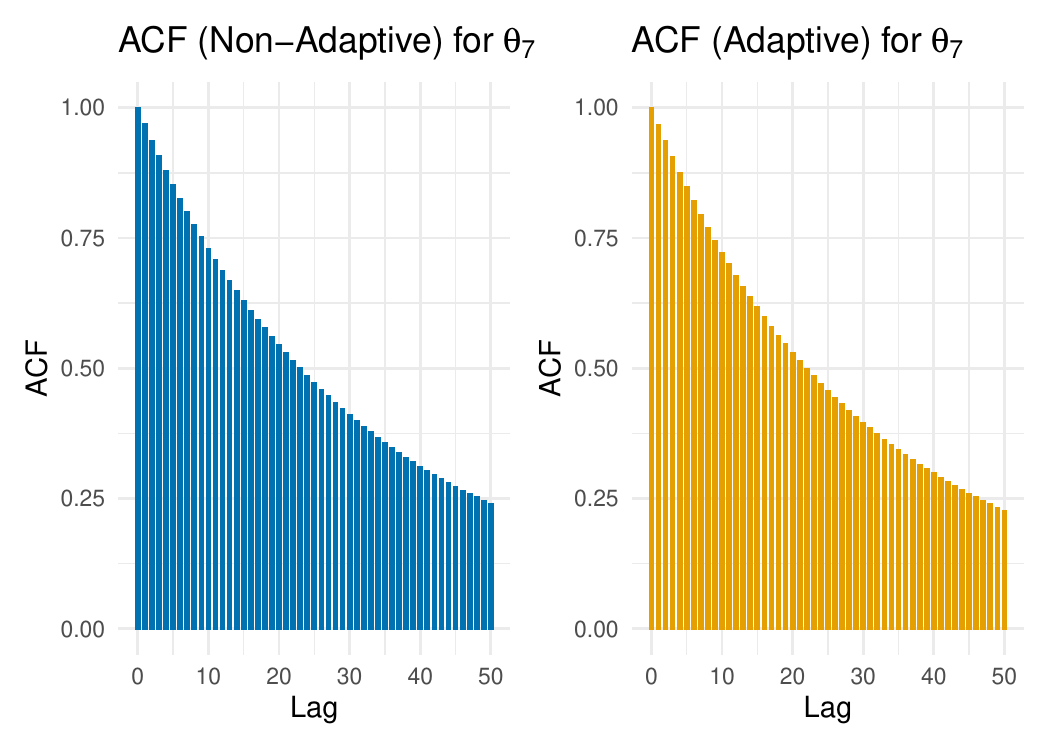}
  \end{minipage}

\end{figure}

\begin{figure}[h]
 
  \begin{minipage}{0.48\textwidth}
    \centering
    \includegraphics[width=\linewidth]{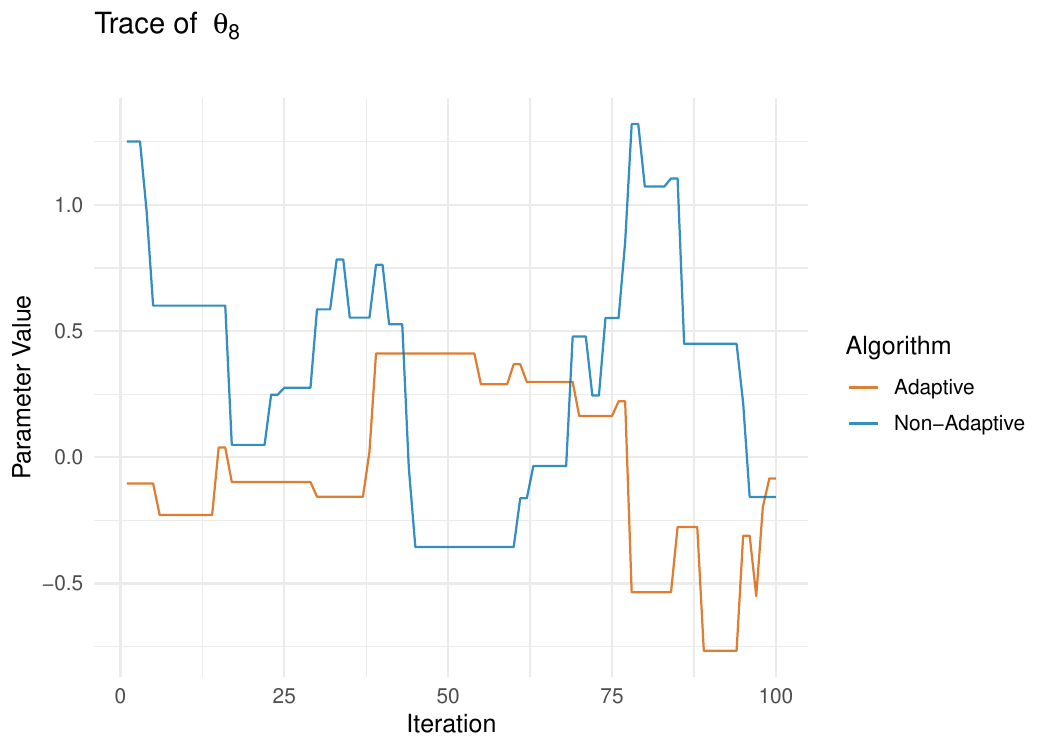}
  \end{minipage}
  \hfill
  \begin{minipage}{0.48\textwidth}
    \centering
    \vspace{0.3cm}
    \includegraphics[width=\linewidth]{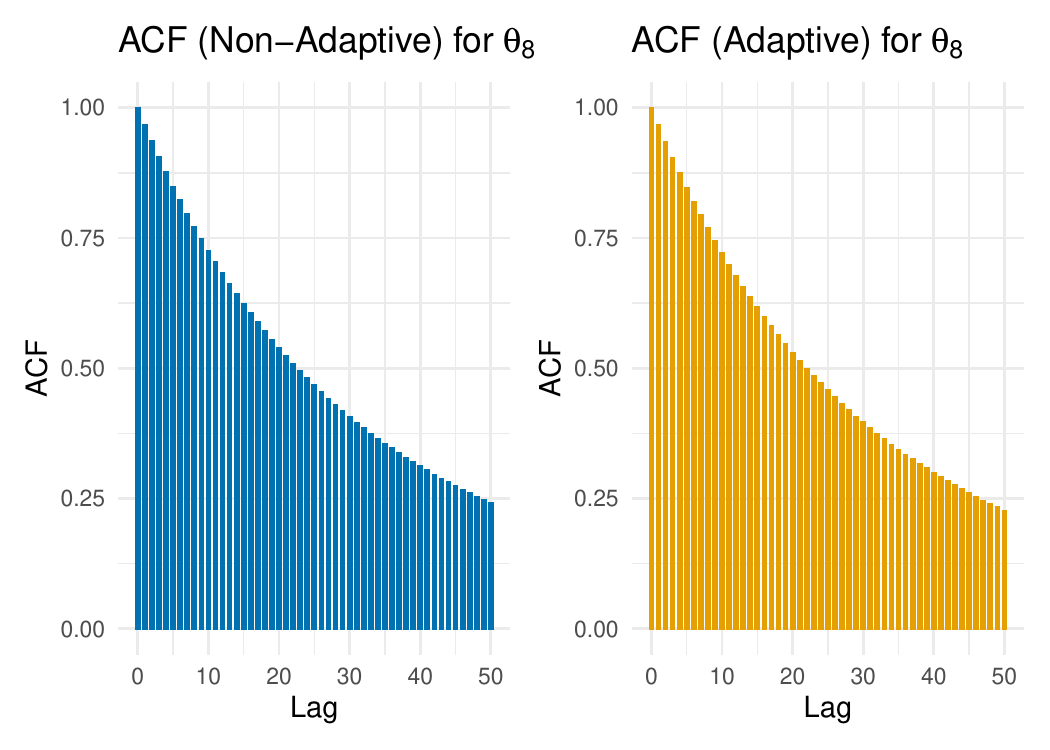}
  \end{minipage}
  
  \begin{minipage}{0.48\textwidth}
    \centering
    \includegraphics[width=\linewidth]{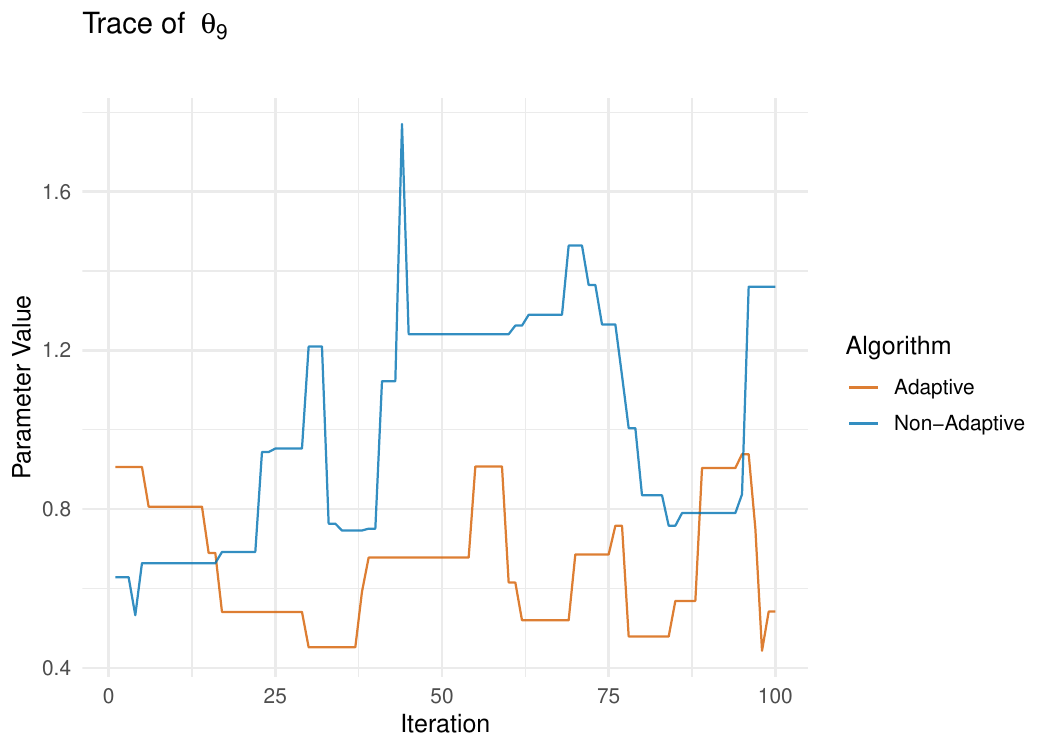}
  \end{minipage}
  \hfill
  \begin{minipage}{0.48\textwidth}
    \centering
    \vspace{0.3cm}
    \includegraphics[width=\linewidth]{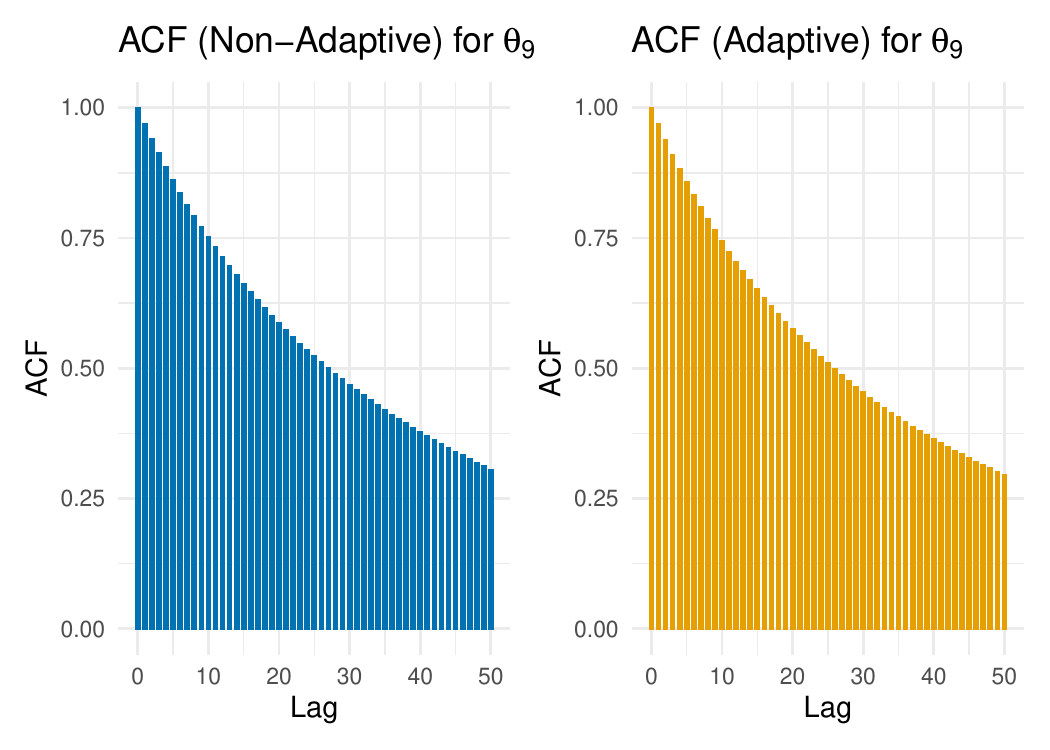}
  \end{minipage}
  \caption{Left: Trace plots of \( \theta \) over the last 100 iterations for a single run of the APM and non adaptive methods. Right: Autocorrelation functions for the same run, comparing the APM and non adaptive approaches.}
  \label{re:conv9}
\end{figure}

\end{document}